\newtheorem{theorem}{Theorem}[section]
\newtheorem{lemma}{Lemma}[section]
\newcommand{\qed}{\hspace*{\fill}$\Box$\par\medskip}
\newenvironment{proof}{\noindent{\it Proof. }\ignorespaces}{\qed}
\title{{\bf Improved Randomized Online Scheduling of Intervals and Jobs}%
\thanks{The work described in this paper was fully supported by
grants from the Research Grant Council of the Hong Kong SAR,
China [CityU 119307] and NSFC Grant No. 60736027 and 70702030.}
}
\author{
Stanley P. Y. Fung%
\footnote{
Department of Computer Science, University of Leicester,
Leicester, United Kingdom. Email: {\tt pyfung@mcs.le.ac.uk}}
\and
Chung Keung Poon%
\footnote{
Department of Computer Science, City University of Hong Kong, Hong Kong.
Email: {\tt csckpoon@cityu.edu.hk}}
\and
Feifeng Zheng%
\footnote{
School of Management, Xi'an Jiaotong University, Xi'an, China. 
Email: {\tt zhengff@mail.xjtu.edu.cn}}
}
\begin{document}

\maketitle

\begin{abstract}
We study the online preemptive scheduling of intervals and jobs (with
restarts).  Each interval or job has an arrival time, a deadline,
a length and a weight.  The objective is to maximize the total weight of
completed intervals or jobs.  While the deterministic case for intervals
was settled a long time ago, the randomized case remains open.  In this
paper we first give a 2-competitive randomized algorithm for the case of 
equal length intervals.  The algorithm is barely random in the sense that 
it randomly chooses between two deterministic algorithms at the 
beginning and then sticks with it thereafter.  
Then we extend the algorithm to cover several other cases of interval 
scheduling including monotone instances, C-benevolent instances and 
D-benevolent instances, giving the same competitive ratio.
These algorithms are surprisingly simple but have the best competitive 
ratio against all previous (fully or barely) randomized algorithms.  
Next we extend the idea to give
a 3-competitive algorithm for equal length jobs. 
Finally, we prove a lower bound of 2 on the competitive ratio of all 
barely random algorithms that choose between two deterministic algorithms 
for scheduling equal length intervals (and hence jobs). \\

\noindent
{\bf keywords}: 
interval and job scheduling; preemption with restart;
online algorithms; randomized; lower bound

\end{abstract}

\section{Introduction}

In this paper, we study two online preemptive scheduling problems.
In the {\em interval scheduling problem},
we are to schedule a set of weighted intervals which arrive online
(in the order of their left endpoints)
so that at any moment, at most one interval is being processed.
We can abort the interval currently being processed in order to
start a new one.
The goal is to maximize the sum of the weights of completed
intervals.
The problem can be viewed as a job scheduling problem in which
each job has, besides its weight, an arrival time, a length and a deadline.
Moreover, the deadline is always tight,
i.e., deadline always equals arrival time plus length.
Thus, if one does not start an interval immediately upon its
arrival, or if one aborts it before its completion,
that interval will never be completed.
The problem is fundamental in scheduling and
is clearly relevant to a number of online problems
such as call control and bandwidth allocation
(see e.g., \cite{ABFR94,CanIra98,Woeg94}).

We also study the more general problem of
{\em job scheduling with restart}.
Here, the deadline of a job needs not be tight and
we can abort a job and restart it from the beginning some time later.
Both problems are in fact special cases of the {\em broadcast
scheduling problem} which gains much attention recently
due to its application in video-on-demand, stock market quotation, etc
(see e.g., \cite{KimChw04,Ting06,ZFCCPW06}).
In that problem, a server holding a number of pages
receives requests from its clients and schedules the
broadcasting of its pages.
A request is satisfied if the requested page is
broadcasted in its entirety after the arrival time and 
before the deadline of the request.
The page currently being broadcasted can be aborted
in order to start a new one,
and the aborted page can be re-broadcasted from the beginning later.
Interval and job scheduling with restart can be seen
as a special case in which each request asks for a different page.

Our results concern {\em barely random} algorithms,
 i.e., randomized algorithms that randomly choose from a very small
(constant) number of deterministic algorithms at the beginning
and then stick with it thereafter.
Quite some previous work in online scheduling considered the use of
barely random algorithms (see e.g. 
\cite{Albers02,CJST07,Sei98});
it is interesting to consider how the competitiveness improves
(upon their deterministic counterparts) 
by combining just a few deterministic algorithms.
From now on, whenever we refer to ``barely random algorithms'',
we mean algorithms that choose between {\em two} deterministic algorithms
but possibly with unequal probability.

\paragraph{Types of instances.}

In this paper, we consider the following special types of intervals or jobs:
\begin{enumerate}
\item
equal length instances where all intervals or jobs have the same
length, 
\item
monotone instances where intervals arriving earlier also have
earlier deadlines, and 
\item
C- and D-benevolent instances where 
the weight of an interval is given by some `nice' function of its length
(convex increasing for C-benevolent, and decreasing for D-benevolent).
\end{enumerate}
The models will be defined precisely in the next section.
These cases are already highly non-trivial, 
as we will see shortly, and many previous works on these
problems put further restrictions on the inputs
(such as requiring jobs to be unweighted
or arrival times to be integral, in addition to being equal-length).
The power of randomization for these problems is
especially unclear.

\subsection{Previous work}

The general case where intervals can have arbitrary lengths and weights
does not admit constant competitive algorithms \cite{Woeg94}, even with 
randomization \cite{CanIra98}.  Therefore, some special types of 
instances have been studied in the literature.

We first mention results for equal length interval scheduling.
The deterministic case was settled in \cite{Woeg94} where a 
4-competitive algorithm and a matching lower bound were given.
Miyazawa and Erlebach \cite{MiyErl04} were the first to give a better
randomized algorithm: its competitive ratio is 3
but it only works for a special case where the weights of the intervals 
form a non-decreasing sequence.
They also gave the first randomized lower bound of 5/4.
%% for the general case.
The first randomized algorithm for arbitrary weight that
has competitive ratio better than 4 (the bound for deterministic
algorithms) was devised in \cite{FuPoZh07}.
It is 3.618-competitive and is barely random,
choosing between two deterministic algorithms with equal probability.
In the same paper, a lower bound of 2 for such barely random algorithms
and a lower bound of 4/3 for general randomized algorithms
were also proved.
Recently, Epstein and Levin \cite{EpsLev10} 
gave a 2.455-competitive randomized algorithm
and a 3.227-competitive barely random algorithm.
They also gave a 1.693 lower bound on the randomized competitive ratio.

The class of monotone instances (also called {\it similarly ordered}
\cite{CJST07} or {\it agreeable} \cite{LiSS05} instances in the literature)
is a generalization of the class of equal length instances.
Therefore, the former class inherits all the lower bounds for the latter
class.
In the offline case, the class of monotone instances is actually
equivalent to that of equal length instances
because of the result (see e.g. \cite{BogWes99})
that the class of proper interval graphs 
(intersection graphs of intervals where no interval is
strictly contained in another)
is equal to the class of unit interval graphs.
In the online case however, it is not completely clear that 
such an equivalence holds although some of the algorithms 
for the equal length case also work for the monotone case
(e.g. \cite{MiyErl04,FuPoZh07,EpsLev10}).

Some of the aforementioned results for equal length instances also
work for C- and D-benevolent instances, including
Woeginger's 4-competitive deterministic algorithm,
the lower bound of 4/3 in \cite{FuPoZh07}\footnote{
This and most other lower bounds for D-benevolent instances
only work for a subclass of functions that satisfy a surjective
condition.},
the upper bounds in \cite{EpsLev10} (for D-benevolent instances only)
and the lower bound in \cite{EpsLev10} (for C-benevolent instances only;
they gave another slightly weaker lower bound of 3/2 for D-benevolent
instances).
A 3.732-competitive barely random algorithm
for C-benevolent instances was given by Seiden \cite{Sei98}.  
Table \ref{tab:ResultsInt} summarizes the various upper and lower bounds
for randomized interval scheduling.

\begin{table}
\centerline{
\begin{tabular}{|c|c|c|} \hline
  & upper bound                           & lower bound              \\ \hline
equal length  
  & 2.455 \cite{EpsLev10}                 & 1.693 \cite{EpsLev10}    \\
  & 3.227 (barely random) \cite{EpsLev10} &                          \\
  & 2 (barely random) [this paper]        & 2 (barely random) [this paper] \\ \hline
monotone     
  & same as above                         & same as above            \\ \hline
C-benevolent 
  & 3.732 \cite{Sei98}                    & 1.693 \cite{EpsLev10}    \\
  & 2 (barely random) [this paper]        &                          \\ \hline
D-benevolent 
  & 2.455 \cite{EpsLev10}                 & 1.5 \cite{EpsLev10}
                                            (with a surjective condition) \\
  & 3.227 (barely random) \cite{EpsLev10} &                          \\ 
  & 2 (barely random) [this paper]        &                          \\ \hline
\end{tabular}
}
\caption{Best previous and new results for randomized interval scheduling}
\label{tab:ResultsInt}
\end{table}

Next we consider the problem of job scheduling with restarts. 
Zheng et al. \cite{ZFCCPW06} gave a 4.56-competitive deterministic algorithm. 
The algorithm was for the more general problem
of scheduling broadcasts but it works for jobs scheduling
with restarts too.
We are not aware of previous results in the randomized case.
Nevertheless, Chrobak et al. \cite{CJST07} 
considered a special case where the jobs have no weights 
and the objective is to maximize the number of completed jobs.  
For the randomized nonpreemptive case 
they gave a 5/3-competitive barely random algorithm 
and a lower bound of 3/2 for barely random algorithms that choose between
two deterministic algorithms.
They also gave an optimal 3/2-competitive
algorithm for the deterministic preemptive (with restart) case, and
a lower bound of 6/5 for the randomized preemptive case.

We can also assume that the time is discretized into unit length slots
and all (unit) jobs can only start at the beginning of each slot.
Being a special case of the problem we consider in this paper,
this version of unit job scheduling has been widely studied
and has applications in buffer management of QoS switches.
For this problem, a $e/(e-1)$-competitive randomized algorithm was given
in \cite{CCFJST06},
and a randomized lower bound of 1.25 was given in \cite{ChiFun03}.  
The current best deterministic algorithm is 1.828-competitive \cite{EngWes07}.

An alternative preemption model is to allow the partially-executed job to
resume its execution from the point that it is preempted.  This was
studied, for example, in \cite{BKMMR+92,KorSha95}.

\subsection{Our results}

In this paper we give new randomized algorithms for the different
versions of the online interval scheduling problem.
They are all barely random and have a competitive ratio of 2.
Thus they substantially improve previous results. 
See Table \ref{tab:ResultsInt}.
It should be noted that although the algorithms are fairly
simple, they were not discovered in 
several previous attempts by other researchers and ourselves
\cite{EpsLev10,FuPoZh07,MiyErl04}.
Moreover the algorithms for all these versions of the problem are based on 
the same idea, which gives a unified way of analyzing these algorithms 
that were not present in previous works.

Next we extend the algorithm to the case of job scheduling (with restarts),
and prove that it is 3-competitive.
This is the first randomized algorithm we are aware of for this problem.
The extension of the algorithm is very natural but the proof is
considerably more involved.

Finally we prove a lower bound of 2 for barely random algorithms for 
scheduling equal length intervals (and jobs)
that choose between two deterministic algorithms, not necessarily with
equal probability.  Thus it matches the upper bound of 2 for this
class of barely random algorithms.
Although this lower bound does not cover more general classes of
barely random or randomized algorithms, we believe that this is still
of interest.
For example, a result of this type appeared in \cite{CJST07}.
Also, no barely random algorithm using three or more deterministic
algorithms with a better performance is known.
The proof is also much more complicated than the
one in \cite{FuPoZh07} with equal probability assumption.

\section{Preliminaries}

A {\it job} $J$ is specified by its arrival time $r(J)$, its deadline $d(J)$, 
its length (or processing time) $p(J)$ and its weight $w(J)$.  
All $r(J), d(J), p(J)$ and $w(J)$ are nonnegative real numbers.
An {\it interval} is a job with tight deadline, i.e. $d(J) = r(J) + p(J)$.
We further introduce the following concepts for intervals: for intervals $I$
and $J$ with $r(I) < r(J)$, $I$ \textit{contains} $J$ if
$d(I) \geq d(J)$; if $r(J) < d (I) < d(J)$, the two
intervals \textit{overlap}; and if $d(I) \le r(J) < d(J)$, the intervals
are {\it disjoint}.

Next we define the types of instances that we consider in this paper.
The {\it equal length} case is where $p(J)$ is the same for all $J$;
without loss of generality we can assume $p(J)=1$.
The remaining notions apply to intervals only.
An instance is called {\it monotone} if for any two intervals $I$ and
$J$, if $r(I) < r(J)$ then $d(I) \leq d(J)$.  
%an interval arriving strictly before the other must also has
%a deadline strictly before, so there are no `nested' intervals.
An instance is called {\it C-benevolent} if the weights of intervals are
given by a function $f$ of their lengths, where the function $f$ satisfies
the following three properties: 
\begin{description}
\item[(i)]
$f(0)=0$ and $f(p)>0$ for all $p>0$, 
\item[(ii)]
$f$ is strictly increasing, and 
\item[(iii)]
$f$ is convex,
i.e.  $f(p_1)+f(p_2) \le f(p_1-\epsilon)+f(p_2+\epsilon)$ for 
$0 < \epsilon \le p_1 \le p_2$.
\end{description}
Finally, an instance is called {\it D-benevolent} if the weights of 
intervals are given by a function $f$ of their lengths where
\begin{description}
\item[(i)]
$f(0)=0$ and $f(p)>0$ for any $p>0$, and
\item[(ii)]
$f$ is decreasing in $(0, \infty)$.
\end{description}

In our analysis, we partition the time axis into segments called
{\it slots}, $s_1, s_2, \ldots$,
such that each time instant belongs to exactly one slot
and the union of all slots cover the entire time axis.
The precise way of defining the slots depends on the case being studied 
(equal-length, monotone, C- or D-benevolent instances).
Slot $s_i$ is an {\it odd slot} if $i$ is odd, and is an
{\it even slot} otherwise.  

The following is an important, though perhaps unusual, definition 
used throughout the paper.
We say that a job (or an interval) is {\it accepted} by an algorithm $A$ 
in a slot $s$
if it is started by $A$ within the duration of slot $s$ and is then 
completed without interruption.  
Note that the completion time may well be after slot $s$.
$A$ may start more than one job in a slot,
but it will become clear that for all online algorithms that we consider,
at most one job will be accepted in a slot; all other jobs that were started
will be aborted.
For $OPT$ we can assume that it always completes each
interval or job it starts.

The {\it value} of a schedule is the total weight of the jobs that are 
completed in the schedule.
The performance of online algorithms is measured using
competitive analysis \cite{BorElY98}.  
An online randomized algorithm $A$ is {\it $c$-competitive}
%(or the competitive ratio is $c$) 
if the expected value obtained by $A$ is at least $1/c$
the value obtained by the optimal offline algorithm, for any input instance.
The infimum of all such $c$ is called the {\it competitive ratio} of $A$.
We use $OPT$ to denote the optimal algorithm (and its schedule).
%Without loss of generality we can assume $OPT$ always completes each 
%interval or job it starts.

\section{Algorithms for Scheduling Intervals}

\subsection{Equal Length Instances}

In this section we describe and analyse a very simple algorithm 
$RAN$ for the case of equal length intervals.
$RAN$ is barely random and consists of two 
deterministic algorithms $A$ and $B$, described as follows.  
The time axis is divided into unit length slots, $s_1, s_2, \ldots$,
where slot $s_i$ covers time [$i-1,i$) for $i=1,2,\ldots$.  
Intuitively, $A$ takes care of odd slots and $B$ takes care of even slots.
Within each odd slot $s_i$, $A$ starts the interval arriving first.
If a new interval arrives in this slot while an interval 
is being processed, $A$ will abort and start the new interval 
if its weight is larger than the current interval;
otherwise the new interval is discarded.  
At the end of this slot,
$A$ is running (or about to complete) an interval with the largest weight 
among those that arrive within $s_i$; let $I_i$ denote this interval.
$A$ then runs $I_i$ to completion without abortion during the 
next (even) slot $s_{i+1}$.  
(Thus, $I_i$ is the only interval accepted by $A$ in slot $s_i$.)
Algorithm $A$ then stays idle until the beginning of the next odd slot.
$B$ runs similarly on even slots.
$RAN$ chooses one of $A$ and $B$ with equal probability 1/2 at the beginning.

\begin{theorem}
$RAN$ is $2$-competitive for online interval scheduling on equal length 
instances.
\end{theorem}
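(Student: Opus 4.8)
The plan is to charge the weight of each interval in $OPT$'s schedule to the intervals accepted by $A$ and $B$, so that the total charge received by the accepted intervals is at most a constant times their weight, and then take expectations. The key structural fact is the notion of slots: every interval, having length $1$, spans at most two consecutive slots, so it starts in some slot $s_i$ and (if $i$ is odd) can possibly be accepted by $A$ in $s_i$, or (if $i$ is even) by $B$ in $s_i$. First I would fix an optimal schedule $OPT$ and consider the intervals $O_1, O_2, \ldots$ it completes, indexed by increasing start time. For each $O_j$, let $s_{k_j}$ be the slot in which $O_j$ is started by $OPT$. I want to charge $w(O_j)$ to an interval accepted by $A$ (if $k_j$ is odd) or by $B$ (if $k_j$ is even) in a nearby slot.

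The core observation is this: suppose $k_j$ is odd, say $k_j = i$. At the moment within slot $s_i$ when $OPT$ starts $O_j$, algorithm $A$ is also active in slot $s_i$ (it is processing the heaviest interval seen so far in $s_i$, or is about to). Since $O_j$ has arrived by that time and lies entirely within the window over which $A$ is free to choose, the interval $I_i$ that $A$ ultimately accepts in $s_i$ satisfies $w(I_i) \geq w(O_j)$, because $A$ always keeps the heaviest interval arriving in that slot. So $w(O_j) \le w(I_i)$, and I charge $w(O_j)$ to $I_i$. Symmetrically, if $k_j$ is even, I charge $w(O_j)$ to the interval $I'_{k_j}$ accepted by $B$ in slot $s_{k_j}$, with $w(O_j) \le w(I'_{k_j})$. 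Next I need to bound how many distinct $O_j$'s are charged to a single accepted interval $I_i$ of $A$: since $OPT$ completes each interval it starts and the $O_j$'s do not overlap, and each $O_j$ charged to $I_i$ started in slot $s_i$ (a unit slot), at most two such $O_j$'s can have started in $s_i$ — actually, because each $O_j$ has length exactly $1$ and must fit in $s_i$-or-later, at most one $O_j$ can be started by $OPT$ within any particular slot if $OPT$ never preempts; but $OPT$ may start one $O_j$ near the left end of $s_i$ and another near the right end only if the first finishes inside $s_i$, which is impossible for unit length. Hence at most one $O_j$ is charged to each $I_i$. Wait — I should be careful: $OPT$ could also start an interval right at the boundary; the clean statement is that at most one interval of $OPT$ begins in each slot, so each accepted $I_i$ (resp. $I'_i$) receives at most one unit of charge, of weight at most $w(I_i)$.

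Putting this together: $w(OPT) = \sum_j w(O_j) \le \sum_{i \text{ odd}} w(I_i) + \sum_{i \text{ even}} w(I'_i) = w(A) + w(B)$, where $w(A), w(B)$ denote the total weight of intervals accepted by $A$ and $B$ respectively. Since $RAN$ picks $A$ or $B$ each with probability $1/2$, its expected value is $\tfrac12(w(A) + w(B)) \ge \tfrac12 w(OPT)$, giving $2$-competitiveness. The main obstacle I anticipate is making the charging argument fully rigorous at slot boundaries and confirming that an accepted interval receives charge from at most one $OPT$-interval — in particular handling the case where an $OPT$-interval starts in slot $s_i$ but its weight should be compared against the interval $A$ accepts in $s_i$ even though $A$'s accepted interval might have arrived before $O_j$ within that slot; the monotone/unit-length structure is what guarantees $A$'s choice dominates, and I would phrase that comparison carefully. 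A secondary subtlety is ruling out that an interval $O_j$ started by $OPT$ in an odd slot $s_i$ might be "better charged" to an even slot; but since we only need an upper bound on $w(OPT)$, charging each $O_j$ to exactly one accepted interval of the appropriate algorithm suffices.
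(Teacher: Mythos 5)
Your proposal is correct and follows essentially the same route as the paper: you observe that $OPT$ starts at most one interval per unit slot, that its weight is bounded by the weight of the interval $A$ (odd slots) or $B$ (even slots) accepts there since each keeps the heaviest interval arriving in its slot, and that $RAN$ completes each accepted interval with probability $1/2$, giving $w(OPT)\le w(A)+w(B)=2\,\mathbf{E}[RAN]$. The only cosmetic remark is that the phrase ``$O_j$ has arrived by that time'' should be ``$O_j$ arrives at that time,'' which holds automatically because intervals have tight deadlines and $OPT$ completes everything it starts.
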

\begin{proof}
Each $I_i$ is accepted by either $A$ or $B$.  
Therefore, $RAN$ completes each $I_i$ with probability 1/2. 
On the other hand, $OPT$ can accept
at most one interval in each slot $s_i$,
with weight at most $w(I_i)$.
It follows that the total value of $OPT$ is at most 2 times the
expected value of $RAN$.
\end{proof}

Trivial examples can show that $RAN$ is not better than 2-competitive
(e.g. a single interval).  In fact we will show in Section~\ref{sec:LB} 
that no barely random 
algorithm that chooses between two deterministic algorithms is better
than 2-competitive.
But first we consider how this result can be generalized to other
types of instances.

\subsection{Monotone Instances}

\paragraph{Algorithm $RAN$-$M$.} 
We adapt the idea of $RAN$ to the case of 
monotone instances and call the algorithm $RAN$-$M$.
Similar to $RAN$, $RAN$-$M$ consists of two deterministic algorithms $A$ 
and $B$, each chosen to execute with probability 1/2 at the beginning.
The difference is that we cannot use the idea of unit length slots
but we must define the lengths of the slots in an online manner.

The execution of the algorithm is divided into {\em phases}
and we name the slots in each phase locally as $s_1, s_2, \ldots$
independent of other phases.
After the end of a phase and before the beginning of the next phase,
the algorithm (both $A$ and $B$) is idle with no pending intervals.
A new phase starts when the first interval arrives while the algorithm is 
idle.
Among all intervals that arrive at this time instant, 
let $I_0$ be the one with the earliest deadline (ties broken arbitrarily).
Then slot $s_1$ is defined as $[r(I_0), d(I_0))$.
$A$ aims to accept the heaviest interval
among those with arrival time falling within slot $s_1$.
To do this, $A$ simply starts the first interval arriving in $s_1$, and
then whenever a new interval arrives that is heavier than the
interval that $A$ is currently executing,
$A$ aborts the current one and starts the new heavier interval.
This is repeated until the time $d(I_0)$ is reached.
By the property of monotone instances and 
the choice of $I_0$,
these intervals all have finishing time on or after $d(I_0)$.  
Let $I_1$ denote the interval that $A$ is executing (or about to complete)
at the end of slot $s_1$, i.e., time $d(I_0)$.  
$B$ remains idle during the whole slot.
If $A$ just finishes $I_1$ at time $d(I_0)$, then it will become idle again
and this phase ends.  
Otherwise, $d(I_1) > d(I_0)$ and 
slot $s_2$ is now defined as $[d(I_0), d(I_1))$.

In slot $s_2$, $A$ continues to execute $I_1$ to completion without any 
interruption.
(Thus, $I_1$ is the only interval accepted by $A$ in slot $s_1$.)
$B$ accepts the heaviest interval among 
those with arrival time falling within slot $s_2$, in the same manner $A$
did in the previous slot.
This interval is denoted by $I_2$ and $B$ will
run it to completion during slot $s_3$
(if its deadline is after the end of slot $s_2$).

In general, slot $s_i$ (where $i>1$)
is defined as $[d(I_{i-2}), d(I_{i-1}))$.  
If $i$ is odd, then at the beginning of slot $s_i$,
$B$ is executing $I_{i-1}$ (the interval accepted by $B$ in slot $s_{i-1}$) 
and $A$ is idle.
$B$ will run $I_{i-1}$ to completion while
$A$ will accept the heaviest interval among those arriving
during this slot.
If $i$ is even, the actions are the same except that
the roles of $A$ and $B$ are reversed.

\begin{theorem}
RAN-M is $2$-competitive for online interval scheduling on monotone instances.
\end{theorem}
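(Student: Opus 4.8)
Looking at this problem, I need to prove that RAN-M is 2-competitive for monotone instances, following the pattern established for equal-length intervals but adapted to the online slot definition.

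Let me think about the structure. The key insight from the equal-length case was: (1) each $I_i$ is accepted by $A$ or $B$ with probability 1/2, and (2) OPT accepts at most one interval per slot, with weight at most $w(I_i)$. The challenge in the monotone case is that slots are defined online and phases matter.

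---

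\begin{proof}
The plan is to mimic the analysis of Theorem~\ref{thm:RAN} (the equal length case), but the online definition of slots requires more care, since the slot boundaries depend on the intervals the algorithm itself picks. Fix an input instance and consider the partition of the time axis induced by the phases and their slots $s_1, s_2, \ldots$ as defined in the description of $RAN$-$M$ (together with the ``gap'' periods between phases, during which no interval arrives). Every interval of the instance arrives during some slot $s_i$ of some phase, because a new phase begins as soon as an interval arrives while the algorithm is idle, and within a phase the slots $s_1, s_2, \ldots$ tile the time from $r(I_0)$ up to the completion of the last accepted interval.

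First I would establish the analogue of ``$A$ completes each $I_i$ with probability $1/2$''. By construction, in each phase the intervals $I_1, I_3, I_5, \ldots$ are accepted by $A$ and the intervals $I_2, I_4, \ldots$ are accepted by $B$ (or vice versa, depending on parity), and the one of $A,B$ that accepts $I_i$ runs it to completion without interruption during slot $s_{i+1}$. Hence each $I_i$ is completed by $RAN$-$M$ with probability exactly $1/2$, so the expected value of $RAN$-$M$ is $\frac{1}{2}\sum_i w(I_i)$, the sum ranging over all $I_i$ over all phases.

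Next I would bound $OPT$ by $\sum_i w(I_i)$, by charging each interval that $OPT$ completes to the slot in which it \emph{arrives}. Suppose $OPT$ completes an interval $J$ that arrives in slot $s_i$ (of some phase). The crucial claim is that $w(J) \le w(I_i)$: indeed $I_i$ is, by definition, the heaviest interval among all intervals arriving during slot $s_i$, and $J$ is one of these. So it suffices to show that $OPT$ completes at most one interval arriving in each slot $s_i$. This is where monotonicity and the choice of $I_0$ come in. All intervals arriving in slot $s_1=[r(I_0),d(I_0))$ of a phase have deadline $\ge d(I_0)$ by the choice of $I_0$ as the earliest-deadline interval arriving at the phase's start together with monotonicity; and more generally all intervals arriving in $s_i=[d(I_{i-2}),d(I_{i-1}))$ have deadline $\ge d(I_{i-1})$ (the deadline of the earliest-deadline interval arriving in $s_i$), again by monotonicity. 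Combined with the fact that such intervals also have arrival time $\ge d(I_{i-2})$ (for $i>1$) or $\ge r(I_0)$ (for $i=1$), any two intervals arriving in the same slot $s_i$ must mutually overlap (neither can be disjoint from the other, nor contained in the other, given they are ordered by arrival time and monotone by deadline). Hence $OPT$, which processes one interval at a time, can complete at most one of them.

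Putting these together: $OPT \le \sum_i w(I_i) = 2 \cdot \mathbb{E}[RAN\text{-}M]$, which is the claim. The main obstacle, and the step I would be most careful about, is the ``at most one per slot'' argument: I need to verify that within a single slot $s_i$ the set of intervals that arrive there is totally ordered by the ``contains/overlap'' relation in such a way that any two overlap (so that none of OPT's completions can coexist), using only the monotonicity hypothesis and the inductive definition $s_i=[d(I_{i-2}),d(I_{i-1}))$; the edge cases are the first slot of a phase and slots where $I_{i-1}$ is completed exactly at $d(I_{i-2})$ so that $s_i$ is empty and no charge is needed.
\end{proof}
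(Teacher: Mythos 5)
Your proposal is correct and takes essentially the same route as the paper: charge each interval completed by $OPT$ to the slot of its arrival, use monotonicity to show $OPT$ accepts at most one interval per slot with weight at most $w(I_i)$, and conclude since each $I_i$ is completed with probability $1/2$. One small slip in your justification: $d(I_{i-1})$ is not ``the deadline of the earliest-deadline interval arriving in $s_i$''; the correct comparison (as in the paper) is with $I_{i-1}$ itself, which arrives during $s_{i-1}$, hence before the start of $s_i$, so every interval arriving in $s_i$ has deadline at least $d(I_{i-1})$ by the monotone property.
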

\begin{proof}
No interval will arrive during the idle time between phases (since otherwise
$RAN$-$M$ would have started a new phase), so each phase can be analyzed 
separately. Each interval completed by $OPT$ will be analyzed according
to the slot its arrival time falls into.

In each slot $s_i$, $OPT$ can accept at most one interval:
This is true for $s_1$ by the way $s_1$ is chosen.
For $i>1$,
consider the first interval $I'$ accepted by $OPT$ in slot
$s_i = [d(I_{i-2}), d(I_{i-1}))$.
(Recall that accepting a job means starting the job and then
executing it to completion without interruption.)
Since the start of slot $s_i$ is after $r(I_{i-1})$,
we have $r(I') > r(I_{i-1})$.
By the monotone property, $d(I') \geq d(I_{i-1})$.
So, $OPT$ cannot accept another interval in slot $s_i$.
The rest of the proof is the same as the equal length case,
namely, that the interval accepted 
by $OPT$ in each slot has weight at most that of the interval
accepted by $A$ or $B$ in the same slot.  
It follows that $RAN$-$M$ is 2-competitive.  
\end{proof}

\subsection{C-benevolent Instances}

\paragraph{Algorithm $RAN$-$C$.}
Once again, the algorithm for C-benevolent instances
$RAN$-$C$ consists of two deterministic algorithms $A$ and $B$,
each with probability 1/2 of being executed.
The execution of the algorithm is divided into phases as in the monotone
case.

When a new phase begins,
the earliest arriving interval, denoted by $I_0$, defines the
first slot $s_1$, i.e., $s_1 = [r(I_0), d(I_0))$.
(If there are several intervals arriving at the same time,
let $I_0$ be the one with the longest length.)
We first describe the processing of intervals in slot $s_1$,
which is slightly different from the other slots.
First, $B$ starts and completes $I_0$.  
During $s_1$, $A$ accepts the longest interval among those
with arrival time during $(r(I_0), d(I_0))$ and finishing time 
after $d(I_0)$. 
Denote this interval by $I_1$.
(Note that there may be other intervals that arrive and end 
before $I_1$ arrives. 
Naturally, $A$ could finish them in order to gain more value.
However, to simplify our analysis, we assume that $A$ will not
process them.)
If there is no such $I_1$,
i.e., no interval arrives within $s_1$ and ends after $d(I_0)$,
the phase ends at the end of $s_1$.

Suppose $I_1$ exists.
Then define slot $s_2$ as $[d(I_0), d(I_1))$.
$A$ uses the entire slot $s_2$ to complete $I_1$ without interruption.
After completing $I_0$ at time $d(I_0)$, 
$B$ accepts the longest interval (denoted $I_2$)
among those arriving within slot $s_2$ and finishing after $d(I_1)$,
in a way similar to the action of $A$ in the previous slot.
Again, if such an $I_2$ does not exist, 
the phase ends at the end of $s_2$.
Otherwise, slot $s_3$ is defined as $[d(I_1), d(I_2))$
and $B$ will complete $I_2$ that ends after $d(I_1)$.
Similarly, after $A$ finishes $I_1$ in time $d(I_1)$,
it starts the longest interval (denoted by $I_3$)
arriving during $s_3$ and finishing after $d(I_2)$, and so on.

In general, slot $s_i$ (for $i>1$) is defined as $[d(I_{i-2}), d(I_{i-1}))$.
If $i$ is odd, then $B$ takes the entire slot to complete the interval
$I_{i-1}$ without interruption while $A$ accepts the longest interval 
$I_{i}$ that arrives during slot $s_i$ and ends after $d(I_{i-1})$.  
If $i$ is even then the roles of $A$ and $B$ are reversed.

\paragraph{Competitive Analysis.}

We first state the
following useful lemma which holds for any C-benevolent function $f$.

\begin{lemma} \label{l1}
For any C-benevolent function $f$, given any $k+1$ positive
real numbers $p_i$ $(1\leq i\leq k)$ and $P$, if $P\geq \sum^k_{i=1}
p_i$, then $f(P)\geq \sum^k_{i=1} f(p_i)$.
\end{lemma}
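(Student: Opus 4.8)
The plan is to prove this by induction on $k$, using property (iii) (convexity) in the form stated in the paper. The base case $k=1$ is immediate: if $P \geq p_1$, then since $f$ is strictly increasing (property (ii)), $f(P) \geq f(p_1)$.

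For the inductive step, suppose the claim holds for $k-1$ numbers and let $p_1,\ldots,p_k$ and $P$ be given with $P \geq \sum_{i=1}^k p_i$. First I would set $Q = P - p_k$, so that $Q \geq \sum_{i=1}^{k-1} p_i > 0$, and apply the induction hypothesis to get $f(Q) \geq \sum_{i=1}^{k-1} f(p_i)$. It then remains to show $f(P) \geq f(Q) + f(p_k)$, i.e. $f(Q + p_k) \geq f(Q) + f(p_k)$. This is where convexity enters: the convexity inequality as stated, $f(p_1') + f(p_2') \leq f(p_1' - \epsilon) + f(p_2' + \epsilon)$ for $0 < \epsilon \leq p_1' \leq p_2'$, says that shifting mass from a smaller argument to a larger one does not decrease the sum. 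I would apply it with the pair $(p_1', p_2')$ being $(\min(p_k, Q), \max(p_k, Q))$ — say $p_k \leq Q$ without loss of generality — and $\epsilon = p_k$: this transforms the pair $(p_k, Q)$ into $(0, Q + p_k)$, giving $f(p_k) + f(Q) \leq f(0) + f(Q + p_k) = f(Q+p_k)$, using $f(0) = 0$ from property (i). Combining with the induction hypothesis yields $f(P) = f(Q+p_k) \geq f(Q) + f(p_k) \geq \sum_{i=1}^k f(p_i)$.

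The only mild subtlety is checking that the convexity inequality is being invoked with hypotheses that are actually satisfied — in particular that $\epsilon = p_k \leq p_1' = \min(p_k, Q)$, which holds precisely because we chose $p_1'$ to be the smaller of the two, and that $P > Q + p_k$ is handled by monotonicity (if $P$ is strictly larger than $\sum p_i$ we first reduce to the case of equality via property (ii)). I do not expect any real obstacle here; this is a routine convexity argument, and the main care is just in matching the somewhat nonstandard statement of property (iii) to the telescoping step.
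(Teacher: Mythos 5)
Your proof is correct and matches the paper's argument in substance: the paper also reduces to the superadditivity inequality $f(a)+f(b)\le f(a+b)$ (convexity with $\epsilon$ equal to the smaller argument, plus $f(0)=0$ and monotonicity), merely writing the induction as a single telescoping chain of inequalities rather than peeling off the last term. No gaps; the extra remark about reducing to the equality case is unnecessary since $P = Q + p_k$ holds exactly by your choice of $Q$.
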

\begin{proof}
$f(P)\geq f(\sum^k_{i=1} p_i) \geq f(p_1)+ f(\sum^k_{i=2}
p_i)\geq \sum^2_{i=1}f(p_i)+f(\sum^k_{i=3} p_i)\geq \ldots\geq
\sum^k_{i=1} f(p_i)$.  
\end{proof}

\begin{theorem} 
$RAN$-$C$ is $2$-competitive for online interval scheduling on 
C-benevolent instances.
\end{theorem}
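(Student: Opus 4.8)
The plan is to follow the per‑phase strategy of the equal‑length and monotone cases, but to replace the crude ``at most one interval per slot'' bound on $OPT$ by a length‑packing bound that turns into a weight bound through Lemma~\ref{l1}. Exactly as for $RAN$‑$M$, no interval arrives while $RAN$‑$C$ is idle, so the phases may be analysed independently, and inside one phase every interval completed by $OPT$ has its arrival time in exactly one slot $s_i$. In the phase, running $B$ completes $I_0,I_2,I_4,\dots$ and running $A$ completes $I_1,I_3,I_5,\dots$, so the expected value of $RAN$‑$C$ in the phase is $\tfrac12\sum_{j\ge 0}w(I_j)$; hence it suffices to show that the total weight $OPT$ collects in the phase is at most $\sum_{j\ge 0}w(I_j)$.

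Since $OPT$ never aborts, the intervals it completes are disjoint in time and processed in arrival order, so within a slot $s_i$ they form a consecutive block $\hat J_a,\dots,\hat J_b$, of which at most one — necessarily the last, $\hat J_b$ — finishes after the end of $s_i$ (two such would overlap); call this the \emph{overhang interval} of $s_i$. Two facts are used (setting $d(I_{-1}):=r(I_0)$ so that the first slot fits the general pattern): by the defining choice of $I_i$, the overhang interval of $s_i$ has length at most $p(I_i)$; and any interval $OPT$ accepts in $s_i$ that is not the overhang interval lies inside $[r(I_{i-1}),d(I_{i-1}))$, a window of length $p(I_{i-1})$. With Lemma~\ref{l1} these give: the non‑overhang intervals of $s_i$ have total weight at most $w(I_{i-1})$, and the overhang interval of $s_i$ has weight at most $w(I_i)$ — but applied slotwise this only bounds $OPT$ by (essentially) $2\sum_j w(I_j)$. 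To get the tight bound I would instead assign to each $I_j$ the overhang interval of $s_j$ (if any) together with all non‑overhang intervals of $s_{j+1}$, and show that for each $j$ these intervals occupy a time window of length at most $p(I_j)$, so by Lemma~\ref{l1} their total weight is at most $w(I_j)$; summing over $j$ then finishes the proof.

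The crux is the window bound. If $OPT$'s overhang interval $\hat J_b$ of $s_j$ has $r(\hat J_b)\ge r(I_j)$ the bound is immediate, because $\hat J_b$ starts the block and the non‑overhang intervals of $s_{j+1}$ come after it, so everything charged to $I_j$ lies in $[r(\hat J_b),d(I_j))\subseteq[r(I_j),d(I_j))$. The difficulty — and I expect this to be the main obstacle — is the opposite case $r(\hat J_b)<r(I_j)$, which can genuinely occur since $OPT$ may prefer a shorter overhang interval for global reasons; then the window charged to $I_j$ is too long, by $r(I_j)-r(\hat J_b)$. The key observation for repairing this is that the non‑overhang intervals of $s_j$ all finish before $r(\hat J_b)$, so the whole block charged to $I_{j-1}$ finishes before $r(\hat J_b)$ and occupies length at most $r(\hat J_b)-r(I_{j-1})$, i.e.\ it leaves a length surplus of $d(I_{j-1})-r(\hat J_b)>r(I_j)-r(\hat J_b)$ in $I_{j-1}$'s budget; so the deficit at $I_j$ is absorbed by the surplus at $I_{j-1}$ — unless $s_{j-1}$ is itself of the bad type, in which case one propagates the argument one slot further back.

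Thus a maximal run of consecutive ``badly ordered'' slots $s_{j'+1},\dots,s_j$ has to be handled as a single block: the blocks charged to $I_{j'},\dots,I_j$ nest into consecutive windows whose right endpoints are $r(\hat J_{b_{j'+2}}),r(\hat J_{b_{j'+3}}),\dots,d(I_j)$ and whose combined length telescopes, and one bounds the combined weight charged to the run against $\sum_{k=j'}^{j}w(I_k)$ by repeatedly applying Lemma~\ref{l1} to the per‑slot pieces — the overhang interval of $s_k$ (length $\le p(I_k)$) and the gap of length $r(\hat J_{b_{k+1}})-d(\hat J_{b_k})$ holding the non‑overhang intervals of $s_{k+1}$ — rather than collapsing the whole run into a single value of $f$ (which would be too lossy, by convexity of $f$). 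Summing these run bounds together with the bounds for the well‑behaved slots yields $OPT\le\sum_{j\ge 0}w(I_j)$ for the phase, hence $2$‑competitiveness. Pinning down the exact window attached to each slot inside a bad run, and getting the Lemma~\ref{l1} estimate along the run to come out with precisely the required constant, is the delicate bookkeeping I would need to carry out carefully.
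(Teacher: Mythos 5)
Your setup coincides with the paper's: phases are analysed separately, the expected online value is $\tfrac12\sum_j w(I_j)$, $OPT$ has at most one ``overhang'' interval per slot, and charging the overhang of $s_j$ together with the non-overhang intervals of $s_{j+1}$ against $I_j$ is exactly the pairing the paper uses (there the non-overhang intervals are first merged into a single interval $o_{j+1,1}$ via Lemma~\ref{l1}, the overhang is $o_{j,2}$, and both have length at most $p(I_j)$). You also correctly isolate the only hard case: the overhang of $s_j$ may start before $r(I_j)$, so the two pieces charged to $I_j$ can have combined length exceeding $p(I_j)$.

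However, your treatment of that case is not a proof, and the sketch as written does not close it. Your amortization is purely in terms of lengths (a surplus at $I_{j-1}$ at least as large as the deficit at $I_j$), but Lemma~\ref{l1} converts length budgets into weight budgets only \emph{within} a group; across groups convexity works against you. If the group charged to $I_{j-1}$ has total length $a-\sigma$ and the group charged to $I_j$ has total length $b+\sigma$ (with $a=p(I_{j-1})$, $b=p(I_j)$), the per-group bounds give $f(a-\sigma)+f(b+\sigma)$, and by property (iii) of C-benevolent functions this is $\ge f(a)+f(b)$, not $\le$, whenever $\sigma\le a\le b$ (e.g.\ $f(x)=x^2$, $a=b=\sigma$). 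So ``telescoping window lengths over a bad run and repeatedly applying Lemma~\ref{l1} to the per-slot pieces'' cannot by itself yield $\sum_k f(\cdot)\le\sum_k w(I_k)$; some redistribution argument that moves weight (not just length) in the right direction is indispensable. This is exactly what the paper's inductive exchange step supplies: when $p(o_{j,2})+p(o_{j+1,1})>p(I_j)$, lengthen $o_{j+1,1}$ to exactly $p(I_j)$ and shorten the overhang $o_{j,2}$ by the same amount --- by property (iii) this only \emph{increases} $OPT$'s value --- then merge the remainder of $o_{j,2}$ into $o_{j,1}$, which is still contained in $I_{j-1}$, and recurse; a run of consecutive bad slots unwinds one slot at a time through this merging. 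The ``delicate bookkeeping'' you explicitly defer is precisely this step, i.e.\ the substantive content of the proof, so the proposal has a genuine gap there.
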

\begin{proof}
As a first step to the proof we simplify the $OPT$ schedule.
Within each slot $s_i$ in a phase, $i \ge 1$,
$OPT$ starts a sequence of disjoint intervals 
(in increasing order of starting times)
$\mathcal{O}_i=\{o_{i,1},o_{i,2},\ldots,o_{i,k_i}\}$.
Only the last interval, $o_{i,k_i}$, may end later than
$d(I_{i-1})$ (the ending time of $s_i$). 
If it does, then we merge $o_{i,1},o_{i,2},\ldots,o_{i,k_i-1}$ into 
one interval $pre_i$ such that $r(pre_i) = r(o_{i,1})$ and 
$d(pre_i) = r(o_{i,k_i})$, and thus
$p(pre_i) = r(o_{i,k_i})-r(o_{i,1}) \geq \sum^{k_i-1}_{j=1}
p(o_{i,j})$.  
By Lemma \ref{l1}, $f(p(pre_i)) \geq
\sum^{k_i-1}_{j=1} f(p(o_{i,j}))$. 
Otherwise, (i.e. $o_{i,k_i}$ ends before $d(I_{i-1})$), we merge all 
the intervals in $\mathcal{O}_i$ into one interval $pre_i$ such that
$r(pre_i) = r(o_{i,1})$ and
$p(pre_i) = d(o_{i,k_i}) - r(o_{i,1}) \geq \sum^{k_i}_{j=1}
p(o_{i,j})$.  
Thus, in both cases, such merging can only make $OPT$'s value larger.  
So we can assume that $OPT$ starts at most two intervals $pre_i$ and
$o_{i,k_i}$ in slot $s_i$.
After understanding the notations, we simply denote
the two intervals $pre_i$ and $o_{i,k_i}$ by $o_{i,1}$ and
$o_{i,2}$, respectively. 

The interval $o_{i,1}$ (if exist) is contained in $I_{i-1}$ and so
$p(o_{i,1}) \leq p(I_{i-1})$.
The interval $o_{i,2}$ (if exist) will end after $d(I_{i-1})$,
and $p(o_{i,2}) \leq p(I_{i})$ 
since $I_{i}$ is defined to be the longest interval that arrives
during slot $s_i$ and ends after $d(I_{i-1})$.
Note that $o_{i,2}$ may also end after $d(I_{i+l})$ for some $l\geq 0$. 
In this case, neither $o_{v,1}$ nor $o_{v,2}$ exist 
for $i \leq v \leq i+l$. 
If any $o_{i,1}$ or $o_{i,2}$ does not exist,
we set its length to zero. 
%% we virtually set it with a length of zero. 

We now analyze the competitive ratio of $RAN$-$C$.
As in the monotone case, each phase can be analyzed separately.
Consider an arbitrary schedule
$S = \{I_0, I_1, \ldots, I_{n-1} \}$ produced by $RAN$-$C$ in a phase
with $n \ge 1$ slots, where $I_i$ overlaps $I_{i+1}$ $(0 \leq i < n-1)$, 
and the corresponding schedule
$S^* = \{o_{1,1}, o_{1,2}, o_{2,1}, \ldots, o_{n,1} \}$
produced by $OPT$ as $RAN$-$C$ produces $S$.
(Note that 
$o_{n,2}$ cannot
exist since otherwise this means there are some intervals that arrive 
within $[d(I_{n-2}),d(I_{n-1}))$ and end after $d(I_{n-1})$, and hence 
the phase will not end and $RAN$-$C$ will start an $I_n$.)

For each slot $i$, $OPT$ starts two intervals $o_{i,1}$ and
$o_{i,2}$ while $RAN$-$C$ accepts $I_{i-1}$. 
For presentation convenience,
let $x_{i,1}=p(o_{i,1})$, $x_{i,2}=p(o_{i,2})$
and $y_i=p(I_i)$. 
We already have that
$x_{i,1} \leq y_{i-1}$ for $1 \leq i \leq n$ 
and $x_{i,2} \leq y_{i}$ for $1 \leq i < n$.
We will show that 
\begin{equation} \label{eqn:main}
\sum^n_{i=1} f(x_{i,1}) + \sum^{n-1}_{i=1} f(x_{i,2})
   \leq \sum^{n-1}_{i=0} f(y_i).
\end{equation}

The left hand side of (\ref{eqn:main}) represents the total weight of 
intervals in $S^*$ (note that $o_{n,2}$ does not exist)
while the right hand side represents the total weight of intervals in
$S$. 
Since $RAN$-$C$ completes each interval in $S$ with probability 1/2,
its expected value is half of the right hand side of (\ref{eqn:main}).
Thus by proving (\ref{eqn:main}) we show the 2-competitiveness
of $RAN$-$C$.

We prove (\ref{eqn:main}) by induction on $n$.
When $n=1$, (\ref{eqn:main}) reduces to $f(x_{1,1}) \le f(y_0)$ 
which is true since $x_{1,1} \le y_0$.
Assume the claim holds for $n=k-1$, i.e.,
$\sum^{k-1}_{i=1} f(x_{i,1}) + \sum^{k-2}_{i=1} f(x_{i,2}) \leq 
\sum^{k-2}_{i=1} f(y_i)$.
Consider $I_k$, $o_{k-1,2}$ and $o_{k,1}$.
We have $x_{k-1,2} \le y_{k-1}$ and $x_{k,1} \le y_{k-1}$.
If $x_{k-1,2} + x_{k,1} \leq y_{k-1}$, then 
$f(x_{k-1,2})+f(x_{k,1}) \leq f(x_{k-1,2}+x_{k,1}) \leq f(y_{k-1})$.
Adding this to the induction hypothesis gives
$\sum^{k}_{i=1} f(x_{i,1}) + \sum^{k-1}_{i=1} f(x_{i,2}) \leq 
\sum^{k-1}_{i=0} f(y_i)$
and thus the claim holds for $n=k$.

Otherwise, if $x_{k-1,2}+x_{k,1} > y_{k-1}$, we first change the schedule
$S^*$ as follows: we increase the length of
$x_{k,1}$ to $y_{k-1}$ and decrease the length of $x_{k-1,2}$ by the
same amount.  The corresponding $r(o_{k-1,2})$ and $d(o_{k,1})$ are
fixed while both $d(o_{k-1,2})$ and $r(o_{k,1})$ decrease by an amount
of $y_{k-1}-x_{k,1}$.  
$OPT$ will only get better since
$f(x_{k,1}) + f(x_{k-1,2}) \leq f(y_{k-1}) + f(x_{k-1,2}-(y_{k-1}-x_{k,1}))$
by the properties of C-benevolent functions. 
After this change, $I_{k-1}$ and $o_{k,1}$ have the same length.
The new $o_{k-1,2}$ now ends on or before $d(I_{k-2})$.
We merge the new $o_{k-1,2}$ into $o_{k-1,1}$ so that the new $o_{k-1,1}$
extends its length to $x_{k-1,2}+x_{k-1,1}$ and keeps its start time
$r(o_{k-1,1})$ unchanged.  In the case that
$x_{k-1,1}=0$ before merging $o_{k-1,2}$, we set
$r(o_{k-1,1}) = r(o_{k-1,2})$.
The new $o_{k-1,1}$ is still contained by $I_{k-2}$
and thus $x_{k-1,1} \leq y_{k-2}$ still holds.
After merging, $x_{k-1,2}=0$ and $x_{k,1}=y_{k-1}$.  
Therefore
$\sum^{k}_{i=1} f(x_{i,1}) + \sum^{k-1}_{i=1} f(x_{i,2}) 
= \sum^{k-1}_{i=1} f(x_{i,1}) + \sum^{k-2}_{i=1} f(x_{i,2}) + f(x_{k,1})
\leq \sum^{k-2}_{i=0} f(y_i) + f(y_{k-1})
= \sum^{k-1}_{i=0} f(y_i)$.
Thus the claim is true for $n=k$.  
\end{proof}

\subsection{D-benevolent Instances}

\paragraph{Algorithm $RAN$-$D$.}
The basic idea of $RAN$-$D$ is same as $RAN$: 
two algorithms $A$ or $B$ are executed each with probability 1/2.
Intuitively, in an odd slot (where slots will be defined precisely 
in the following paragraphs), 
$A$ accepts the largest-weight interval arriving during that slot, 
by starting an interval
and preempting if a new one arrives with a larger weight. We call the
interval being executed by $A$ the {\it main interval}, denoted by $I_M$. 
Meanwhile, $B$
continues to run to completion the interval started in the previous slot;
we call this the {\it residual interval}, denoted by $I_R$.
This residual interval must be completed 
(as in the equal length case) because this is the interval accepted in the
previous slot. However in the D-benevolent case, 
if a shorter (and therefore larger weight) interval arrives, 
the residual interval can actually be preempted and replaced by this new
interval.  
For even slots the roles of $A$ and $B$ are reversed (and the interval
started by $B$ is the main interval and the one completed by $A$ the
residual interval).

Unlike $RAN$-$M$ or $RAN$-$C$, here when slot $s_{i-1}$ finishes, 
the next slot $s_i$ is not completely determined: 
slot $s_i$ begins where $s_{i-1}$ ends,
but the ending time of slot $s_i$ will only get a provisional value, 
which may become smaller (but not larger) later on. 
This is called the {\it provisional ending time} of the slot, 
denoted by $e_i$.
Slots will also be grouped into phases as in the other types of instances.

Note that $I_M$, $I_R$ and $e_i$ change during the execution of the
algorithm, even within the same slot. 
But $RAN$-$D$ always maintains the following invariant:

\begin{quote}
{\bf Invariant:} 
Suppose $I_R$ and $I_M$ are the residual and main interval respectively
during execution in a slot $s_i$.
Then $e_i = d(I_R) \le d(I_M)$ (if the intervals exist).
Moreover $e_i$ can only be decreased, not increased.
\end{quote}

We describe the processing of intervals in a slot $s_i$ ($i \ge 1$).
Consider an odd slot $s_i$ (the case of even slots is the same with the
roles of $A$ and $B$ reversed).
At the beginning of $s_i$, $A$ is idle and 
$B$ is continuing the execution of a residual interval $I_R$. 
At this point $e_i$ is provisionally set to $d(I_R)$.
In the case of the first slot, there is no residual interval left over
from the previous slot, so we set $e_i$ to be the deadline of the first
interval that arrives. If more than one interval arrive at the same instant,
choose anyone.
 
Consider a time during $s_i$ when an interval $I$ arrives while
$A$ and $B$ are respectively executing some intervals $I_M$ and $I_R$.
If more than one interval arrive at the same instant, process them in any
order.
If $A$ or $B$ is idle, assume $I_M$ or $I_R$ to have weight 0.
Then $A$ and $B$ react according to the following three cases:

\begin{enumerate}
\item
If $d(I) \ge e_i$ and $w(I)>w(I_M)$,
then $I$ preempts $I_M$, and this $I$ becomes the new $I_M$.
In this case, $e_i$ remains unchanged.
\item
If $d(I) < e_i$ (which implies $w(I) \ge w(I_M)$ and $w(I) \ge w(I_R)$
because by the invariant, $d(I_M) \ge d(I_R) = e_i > d(I)$, 
and $I$ arrives no earlier than either $I_M$ or $I_R$, 
and thus $I$ is shorter),
then $I$ preempts both $I_M$ in $A$ and $I_R$ in $B$.
Here $I$ becomes the new $I_M$ and $I_R$, and
$e_i$ is then set to $d(I)$.  
\item
Otherwise, $w(I) \le w(I_M)$ and $I$ is discarded.
\end{enumerate}

Observe that the invariant is always maintained when we change any of $I_M$,
$I_R$ or $e_i$.

This process repeats until time $e_i$ is reached and slot $s_i$ ends.
If $d(I_M)>e_i$ at the end of slot $s_i$, then a new slot $s_{i+1}$ begins 
where slot $s_i$ ends.
$A$ has not finished execution of $I_M$ yet, so it
now becomes the $I_R$ of slot $s_{i+1}$, and
$e_{i+1}$ is provisionally set to $d(I_R)$.
Otherwise, $d(I_M)=e_i$ and $A$ just finishes execution of $I_M$, 
then the phase ends. 
In this case we wait until the next interval 
arrival, then a new phase starts.

Note that $RAN$-$D$ needs to simulate the execution of both $A$ and $B$
(to determine when slots end) but the actual execution follows only one 
of them.

\begin{theorem}
$RAN$-$D$ is $2$-competitive for online interval scheduling on
D-benevolent instances.
\end{theorem}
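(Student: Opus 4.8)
The plan is to follow the same phase-by-phase template as in the earlier cases, but exploiting a simplification special to D-benevolent instances: $OPT$ can accept at most \emph{one} interval per slot, so the merging preprocessing used in the C-benevolent proof is not needed. Since no interval arrives during the idle gap between two consecutive phases, it suffices to fix one phase with slots $s_1,\dots,s_n$ and to bound the weight $OPT$ obtains from the intervals whose (tight) start time falls into this phase, charging it slot by slot.

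\emph{Step 1: at most one $OPT$ interval per slot.} Fix slot $s_i=[a_i,e_i)$, where $e_i$ now denotes its final provisional ending time; the algorithm is ``inside'' $s_i$ for all real times in $[a_i,e_i)$, and since $e_i$ is only ever decreased, at any such time the current provisional ending time is at least $e_i$. Hence any interval $o$ completed by $OPT$ with $r(o)\in[a_i,e_i)$ is processed by one of the three rules at its arrival. Suppose $OPT$ completed two such intervals $o,o'$ with $r(o)<r(o')$. As $OPT$'s completed intervals are pairwise disjoint, $d(o)\le r(o')<e_i$, so when $o$ arrived we had $d(o)<e_i$; by case~2 this would have reset the provisional ending time to $d(o)\le r(o')$, contradicting $r(o')<e_i$. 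So each $s_i$ contains at most one $OPT$ interval $o_i$ (set $w(o_i)=0$ if none), and $OPT$'s value in this phase is $\sum_{i=1}^{n}w(o_i)$.

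\emph{Step 2: weight bookkeeping.} Let $I_M^{(i)}$ be the main interval at the end of slot $s_i$ and $I_R^{(i)}$ the residual interval the algorithm runs to completion during $s_i$ (so $d(I_R^{(i)})=e_i$). Two monotonicity facts do the work. (a) After the algorithm processes any interval arriving in $s_i$, the main interval is at least as heavy as it (case~1 replaces it by something heavier, case~2's incoming shorter interval is heavier since $f$ is decreasing, case~3 discards a lighter one), and the main interval's weight is nondecreasing over the slot; since $o_i$ arrives in $s_i$ this gives $w(o_i)\le w(I_M^{(i)})$. (b) At the start of slot $s_{i+1}$ the residual interval is exactly $I_M^{(i)}$, and during $s_{i+1}$ the residual can only be replaced (again via case~2) by a heavier interval, so $w(I_M^{(i)})\le w(I_R^{(i+1)})$ for $1\le i\le n-1$. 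Finally, the phase ends precisely when $d(I_M^{(n)})=e_n$, so in $s_n$ the ``main'' algorithm also completes $I_M^{(n)}$; since $I_R^{(i)}$ is completed by $B$ for odd $i$ and by $A$ for even $i$, and $I_M^{(n)}$ by whichever of $A,B$ is the main algorithm of $s_n$, the expected value of $RAN$-$D$ in this phase equals $\frac{1}{2}\big(\sum_{i=1}^{n}w(I_R^{(i)})+w(I_M^{(n)})\big)$. Combining, $\sum_{i=1}^{n}w(o_i)\le\sum_{i=1}^{n}w(I_M^{(i)})\le w(I_M^{(n)})+\sum_{i=1}^{n-1}w(I_R^{(i+1)})\le w(I_M^{(n)})+\sum_{i=1}^{n}w(I_R^{(i)})=2\cdot(\text{expected value of }RAN\text{-}D\text{ in this phase})$. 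Summing over phases proves the theorem.

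The hard part is Step~1, i.e.\ pinning down how case~2 (which shrinks $e_i$ down to the deadline of any short arrival) forces $OPT$'s disjoint intervals into distinct slots; in particular one must check that ``arrives inside slot $s_i$'' is still meaningful even though $e_i$ is revised downward during execution, and must separately handle the first slot, where the residual is initialised to the first arriving interval and $I_M$, $I_R$, $e_i$ may start from weight-$0$ placeholders. After Step~1 and the two monotonicity facts of Step~2, the remaining telescoping is routine — essentially the C-benevolent argument with the roles of length and weight interchanged and without the merging preprocessing.
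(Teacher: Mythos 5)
Your proposal is correct and follows essentially the same route as the paper's proof: the same slot-by-slot claim that $OPT$ starts at most one interval per slot (proved, as in the paper, via rule~2 together with the fact that the provisional ending time only decreases), and the same charging of the $OPT$ interval of slot $s_i$ to the interval the algorithm completes in slot $s_{i+1}$ (or to the main interval completed at the phase end), which you merely phrase as the telescoping chain $w(o_i)\le w(I_M^{(i)})\le w(I_R^{(i+1)})$. The only nitpick is in your closing caveat: in the first slot the first arriving interval becomes the \emph{main} interval (via rule~1, with $e_1$ set to its deadline), not the residual, but with your weight-$0$ placeholders this does not affect the argument.
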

\begin{proof}
Consider each slot $s_i = [e_{i-1}, e_i)$.
We claim that $OPT$ can start at most one interval in $s_i$
and that this interval cannot finish strictly before $e_i$.
The first part of the claim follows from the second since
if $OPT$ starts two or more intervals within $s_i$, then the first 
such interval must end strictly before $e_i$. 
Assume to the contrary that $OPT$ starts an interval $I$ that finishes
strictly before $e_i$. 
Then $I$ also finishes strictly before the provisional value of $e_i$ 
at the moment $I$ arrives, since the provisional ending time only decreases.
By the design of the algorithm, at that point
$e_i$ will be reduced to $d(I)$. 
$e_i$ may be reduced further subsequently, but in any case this contradicts
the fact that $d(I)<e_i$.
Hence the claim follows.

Now suppose $OPT$ starts an interval $I$ in an odd slot $s_i$ and eventually
completes it.
We will show that if $s_i$ is not the last slot in the phase, 
$A$ will complete an interval of weight no less than $w(I)$ in slot $s_{i+1}$;
if $s_i$ is the last slot, then $A$ will complete an interval of weight
no less than $w(I)$ in slot $s_i$.

Consider the moment when $I$ arrives in $s_i$.
If $I$ has larger weight than the current $I_M$,
$A$ will preempt it and start $I$.
Thus, by the end of $s_i$,
$A$ should have started a main interval $I_M$ of weight at least $w(I)$.
If this is the last slot, then $A$ completes $I_M$ at the end of $s_i$.
Otherwise, $I_M$ becomes the residual interval in slot $s_{i+1}$
and $A$ will execute it to completion (as an residual interval) in $s_{i+1}$
unless another interval $I'$ arrives in $s_{i+1}$ such that 
$d(I') < e_{i+1}$ (and hence $w(I') \ge w(I_M)$).
Note that $e_{i+1}$ will then be reduced to $d(I')$.
This $I'$ may still be preempted by intervals of even larger weight
and earlier deadline.
In any case, at exactly the end of the next slot $s_{i+1}$,
$A$ would have completed the residual interval.

We can make a similar claim for even slots.
Therefore it follows that, for every interval started by $OPT$, 
either $A$ or $B$ will complete an interval of at least the same weight 
in the same or the next slot.
Thus the total value of $A$ and $B$ is no less than that of $OPT$.
The 2-competitiveness then follows since each of A/B is executed with
1/2 probability.  
\end{proof}

\section{Algorithms for Equal Length Jobs}

\paragraph{Algorithm $RAN$-$J$.}
In this section we extend $RAN$ to the online scheduling of 
equal length jobs with restarts.
The algorithm remains very simple but the analysis is more involved.
Again $RAN$-$J$ chooses between two deterministic algorithms $A$ and $B$, 
each with probability 1/2, and again
$A$ takes care of odd slots and $B$ takes care of even slots,
where the slots are defined as in the equal length interval case
(i.e. they all have unit length).  
At the beginning of each odd slot, $A$ considers all pending jobs that can 
still be completed, and starts the one with the largest weight.
(If there are multiple jobs with the same maximum weight, start an 
arbitrary one.)
If another job of a larger weight arrives within the slot, 
$A$ aborts the current job and starts the new one instead.  
At the end of this odd slot, the job that is being executed will run to
completion (into the following even slot) without abortion.
$A$ will then stay idle until the beginning of the next odd slot.
Even slots are handled by $B$ similarly.

The following simple example (see Figure \ref{fig:ran-j}(a))  
illustrates the algorithm, and shows that 
$RAN$-$J$ is not better than 3-competitive. 
Consider three jobs $X, Y, Z$, where $r(X)=0, d(X)=3, w(X)=1+\epsilon$ for
arbitrary small $\epsilon>0$;
$r(Y)=0, d(Y)=1, w(Y)=1$; and
$r(Z)=1, d(Z)=2, w(Z)=1$.
Both $A$ and $B$ will complete $X$ only, but $OPT$ can complete all three.

\paragraph{Notations.}
We define some additional notations that will be used in the rest of 
this section to make our discussion clearer.
The notation $[s_1..s_2]$ denotes a range of slots 
from slot $s_1$ to $s_2$ inclusive, where $s_1$ is before $s_2$.
Arithmetic operators on slots carry the natural meaning,
so $s+1$ is the slot immediately after $s$,
$s-1$ is the slot immediately before $s$, $s_1 < s_2$
means $s_1$ is before $s_2$, etc.
The job accepted by an algorithm $A$ in slot $s$ is denoted by $A(s)$.
(Any algorithm can accept at most one job in each slot since the slot has
the same length as a job.)
We define the inverse $A^{-1}(x)$ to be the slot $s$ with $A(s)=x$, 
if it exists; otherwise it is undefined.

\paragraph{Charging scheme.}
Our approach to the proof is to map (or charge) the weights of jobs 
accepted by $OPT$ to slots where $A$ or $B$ have accepted `sufficiently
heavy' jobs; namely, that each slot $s$ 
receives a charge at most 1.5 times of $w(A(s))$ or $w(B(s))$.
In some cases this is not possible and we pair up slots with large charges
with slots with small charges so that the overall ratio is still at most
1.5.  Since each job in $A$ or $B$ is completed with probability 1/2 only,
the expected value of the online algorithm is half the total value of 
$A$ and $B$.  This gives a competitiveness of 3.

The charging scheme is defined as follows.
Consider a slot $s$ where $OPT$ accepts the job $OPT(s)$.
Suppose $s$ is odd (so $A$ is choosing the heaviest job to start).
If $w(A(s)) \ge w(OPT(s))$, charge the weight of $OPT(s)$ to $s$.
We call this a {\it downward charge}.
Otherwise, $A$ must have 
accepted $OPT(s)$ at some earlier slot $s'$.
Charge half the weight of $OPT(s)$ to this slot $s'$.
This is called a {\it self charge}.
For $B$, either it has accepted the job $OPT(s)$ before $s$, 
in which case we charge the remaining half to that slot
(this is also a self charge);
or $OPT(s)$ is still pending at slot $s-1$, which means at slot $s-1$, 
$B$ accepts a job with weight at least $w(OPT(s))$.
Charge the remaining half to the slot $s-1$.
This is called a {\it backward charge}.
When $s$ is an even slot the charges are similarly defined.

\begin{figure}
\centerline{ \epsfysize=2in \epsffile{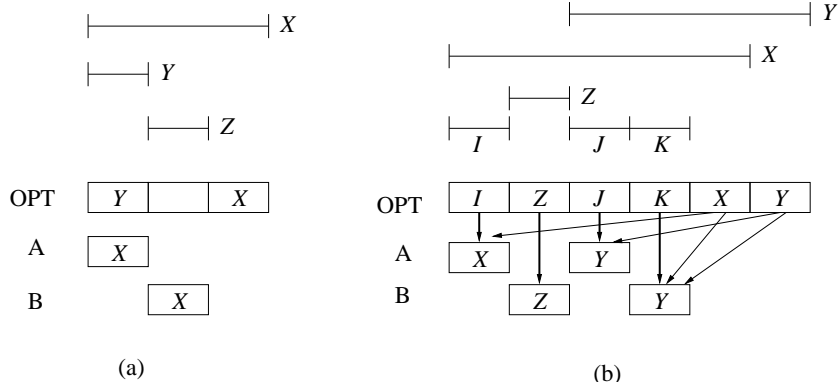} }
\caption{(a) An example showing $RAN$-$J$ is not better than 3-competitive.
(b) An example showing the charges and a bad slot.
The weight of the jobs are (for small $\epsilon>0$):
$w(X)=1+\epsilon, w(Y)=1+2\epsilon, w(Z)=1+3\epsilon$;
$w(I)=w(J)=w(K)=1$. Slot 4 is a bad slot.}
\label{fig:ran-j}
\end{figure}

Clearly, all job weights in $OPT$ are charged to some slots.
Observe that for each charge from $OPT$ to a slot,
the weight of the job generating the charge is no more than
that of the job accepted in the slot receiving the charge.
We define each downward charge to be of one {\it unit}, and each self
or backward charge to be of 0.5 unit.
With this definition, if every slot receives at most 1.5 units of charge, 
then we are done.
Unfortunately, slots can receive up to 2 units of charges
because a slot can receive at most one charge of each type.
Slots receiving 2 units of charges are called {\it bad}; they must receive
a backward charge.
Slots with at most 1 unit charge are called {\it good}.  
Each bad slot $s$ can be characterized by a pair $(X,Y)$ where $X$ is the 
job $A(s)$ or $B(s)$, and $Y$ is the job $OPT(s+1)$ generating the backward
charge.
The example in Figure \ref{fig:ran-j}(b) illustrates the charges 
and the existence of bad slots.

\paragraph{Competitive Analysis.}

The key part of the proof is to deal with bad slots.
For each bad slot, we pair it up with a good slot 
so that the `overall' charge is still under a ratio of 1.5.
The proof of the following lemma will show how this is done. 
%that this is always possible and is the
%key lemma in our competitive analysis.

\begin{lemma} \label{lem:match-slots}
For each bad slot $s=(X,Y)$, there is a good slot $s'$ such that 
the weight of $A(s')$ or $B(s')$ is at least $w(Y)$.  
Moreover, any two bad slots are paired with different good slots.
\end{lemma}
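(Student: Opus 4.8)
The plan is to analyze the structure of a bad slot $s = (X, Y)$ and trace backward from it to locate a good slot whose accepted job is heavy enough. Recall that $s$ being bad means $s$ receives a backward charge from $Y = OPT(s+1)$, which means $Y$ is still pending at slot $s$, and the algorithm handling $s+1$ (call it $B$ if $s$ is odd, so $A$ handles $s$ and $s+1$ is even) did not accept $Y$ before $s+1$; at slot $s$ the algorithm $A$ accepted $X = A(s)$ with $w(X) \ge w(Y)$. So the job $X$ accepted in the bad slot itself already has weight at least $w(Y)$. The real content of the lemma is therefore that we can always find a \emph{good} slot (not the bad slot $s$ itself, since $s$ is bad) with this property, \emph{and} that distinct bad slots get distinct good partners.

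\medskip
\noindent\textbf{Key steps.} First I would show that for a bad slot $s=(X,Y)$, either $s-1$ or $s+1$ (the neighbours handled by the \emph{other} algorithm) is good and witnesses $w(Y)$, or else we can walk further. Since $Y$ is pending at slot $s$ and $w(X) \ge w(Y)$, consider slot $A^{-1}(X)$'s neighbour or the slot where the other algorithm is forced to react to $Y$. The natural candidate is $s+1$: the algorithm $B$ handling $s+1$ has $Y$ pending at the start of $s+1$ (it becomes $OPT(s+1)$, accepted and completed by $OPT$ starting in $s+1$), so $B(s+1)$ has weight at least $w(Y)$. If $s+1$ is good, we are done with this bad slot. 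If $s+1$ is itself bad, it is bad as $(B(s+1), OPT(s+2))$ with $w(B(s+1)) \ge w(OPT(s+2)) \ge \ldots$; now I would chase this chain of consecutive bad slots forward. A maximal run of consecutive bad slots must end: the slot right after the last bad slot in the run is good and, by transitivity of the weight inequalities along the run ($w(X) \ge w(Y) = w(OPT(s+1))$, and $OPT(s+1)$ is pending so the next algorithm accepts something at least as heavy, etc.), that terminal good slot has an accepted job of weight at least $w(Y)$. This gives each bad slot a good partner.

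\medskip
\noindent\textbf{Injectivity.} For the "different good slots" part, I would set up the pairing so that each bad slot $s$ is matched to the \emph{first} good slot strictly after $s$ (equivalently, the good slot immediately following its maximal run of consecutive bad slots). Two distinct bad slots $s_1 < s_2$ in the same run would be matched to the same terminal good slot, which breaks injectivity — so the cleaner choice is to match each bad slot to its \emph{immediate} successor slot $s+1$, and argue that $s+1$ being bad is impossible, i.e., a bad slot is never immediately followed by a bad slot. I expect this non-adjacency of bad slots to be the crux: one must show that if $s=(X,Y)$ is bad, then the algorithm handling $s+1$ accepts $Y$ (or something heavier with an appropriately-ordered deadline) at the \emph{start} of $s+1$ and is not in turn displaced in a way that forces $s+1$ to receive a backward charge from $OPT(s+2)$ — roughly, because $OPT$ cannot squeeze a third heavy job in so quickly given unit lengths and the pending status of $Y$. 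Once bad slots are pairwise non-adjacent, matching $s \mapsto s+1$ is automatically injective, $s+1$ is good, and $w(B(s+1)) \ge w(Y)$, completing the proof.

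\medskip
\noindent\textbf{Main obstacle.} The delicate point is ruling out the adjacency of two bad slots (or, in the chain version, controlling the length of a run and the injectivity simultaneously). This requires carefully using the \emph{restart} semantics — a job pending at slot $s$ with deadline allowing completion remains pending and available at slot $s+1$ — together with the fact that $OPT$ accepts $Y$ starting in slot $s+1$ (so $Y$ occupies $[s, s+2)$ roughly) which restricts what $OPT(s+2)$ can be and when it arrived. I would handle this by a case analysis on whether $OPT(s+2)$ was already pending at slot $s+1$'s start (then the forced heavy acceptance by the $s+1$-algorithm already dominates it, so no backward charge is generated, contradicting $s+1$ bad) or arrives during $s+1$ (then it must be heavier than what $s+1$'s algorithm holds, and I track where that heavier job goes). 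The unit-length structure is what makes this finite and tractable.
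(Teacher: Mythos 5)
Your proposal hinges on the claim that, for a bad slot $s=(X,Y)$ with $Y=OPT(s+1)$, the algorithm handling slot $s+1$ still has $Y$ pending and therefore accepts in $s+1$ a job of weight at least $w(Y)$, so that you can pair $s$ with $s+1$. This is exactly backwards. Taking $s$ odd, so $B$ handles $s+1$: a backward charge from $OPT(s+1)$ to $s$ is generated only when no downward charge is made at $s+1$, i.e.\ precisely when $w(B(s+1)) < w(Y)$; and the reason this can happen is that $B$ already accepted (and completed) $Y$ at some slot strictly before $s$ --- this, together with the same statement for $X$, is Lemma~\ref{lem:badslot}(i). (You also have the roles reversed: it is $A$, the handler of $s$, that never accepted $Y$; $B$ did.) So slot $s+1$ can never serve as the witness: even when it is good, its accepted job is strictly lighter than $w(Y)$. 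Everything downstream of this inherits the flaw: the forward chase through a ``maximal run of consecutive bad slots'' has no weight transitivity to rely on, and the proposed fix --- showing bad slots are never adjacent so that $s\mapsto s+1$ is injective --- is neither proved in your sketch nor would it help, since the weight requirement at $s+1$ fails regardless.

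The correct argument searches backwards, not forwards, and this is what the paper does: starting from the earlier slots $s_0<s_1<s$ where $B$ accepted $X$ and $Y$, it runs an iterative procedure in which the current special slot $s^*$ either is good (and then $B(s^*)$ has weight at least $w(Y)$ by the inductively maintained weight invariant, Lemma~\ref{lem:A-weights}), or received a downward charge, in which case one examines the heavy job $Z=A(s^*-1)$; if $s^*-1$ is not good either, its self charge produces a new job $Z$ that $B$ accepted at yet another slot $s''<\min(s,s')$ (Lemma~\ref{lem:case3}), and the search continues with a strictly growing collection of slots before $s$, forcing termination. The ``different good slots'' half of the lemma is then a separate, delicate argument (Lemma~\ref{lem:distinct}) showing the chains of jobs generated by distinct bad slots are disjoint; your sketch offers no substitute for it beyond the unproved non-adjacency claim. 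As it stands, the proof does not go through.
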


If Lemma~\ref{lem:match-slots} is true, then we have

\begin{lemma}
Slots $s$ and $s'$ as defined in Lemma~\ref{lem:match-slots} together 
receive a charge at most $1.5$ times the total weight of the jobs in $A/B$
in the two slots.
\end{lemma}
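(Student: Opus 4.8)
The plan is to exploit the two facts guaranteed by Lemma~\ref{lem:match-slots}: the good slot $s'$ paired with a bad slot $s=(X,Y)$ has $w(A(s'))$ or $w(B(s'))\ge w(Y)$, and different bad slots use different good slots. First I would recall the unit accounting: a bad slot $s$ receives exactly $2$ units, one downward ($1$ unit) plus one backward ($0.5$ unit) plus one self ($0.5$ unit), and the downward and self charges are each bounded by $w(X)$ where $X=A(s)$ or $B(s)$, while the backward charge of $0.5$ unit corresponds to half the weight of $Y=OPT(s+1)$, hence is at most $0.5\,w(X)$ as well (since $X$ was the heaviest job available when $OPT(s+1)$ was still pending). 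Wait — more carefully, the backward charge is $0.5\,w(Y)$ and we only know $w(Y)\le w(X)$, so the total charge on $s$ is at most $1.5\,w(X)+0.5\,w(Y)\le 2\,w(X)$. The point of pairing is to absorb the excess $0.5\,w(Y)$ using the good slot.

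The key computation is then as follows. Let $s'$ be the good slot paired with $s$, and write $W'=\max(w(A(s')),w(B(s')))\ge w(Y)$ by Lemma~\ref{lem:match-slots}; let $W=\max(w(A(s)),w(B(s)))=w(X)$. The total weight of jobs of $A$ and $B$ in the two slots is at least $W+W'$ (actually it is the sum over both algorithms in both slots, which is at least $w(X)+W'$, but I only need a lower bound and will use $w(X)+W'$). The total charge received by $s$ and $s'$ together is at most $\big(1.5\,w(X)+0.5\,w(Y)\big)+1.0\,W'$, since $s'$ is good and receives at most $1$ unit, i.e. at most $W'$ in weight. Using $w(Y)\le W'$ this is at most $1.5\,w(X)+0.5\,W'+W' = 1.5\,w(X)+1.5\,W' = 1.5\,(w(X)+W')$, which is exactly $1.5$ times a lower bound on the total weight of the two slots in $A/B$. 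That establishes the lemma.

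The only subtlety I would be careful about is the bookkeeping of \emph{which} algorithm's job weight is being summed: the statement says ``the total weight of the jobs in $A/B$ in the two slots,'' so I should make explicit that this quantity is at least $w(A(s))+w(B(s))+w(A(s'))+w(B(s'))$ restricted to whichever of $A,B$ actually accept jobs there, and that it is $\ge w(X)+W'$ — it suffices to retain one suitably heavy job from each of the two slots. I should also note a boundary case: if $s=s'$ could ever happen the argument would break, but this is excluded because a bad slot is never good, so $s\ne s'$; and distinctness of the good slots across different bad slots (the ``moreover'' in Lemma~\ref{lem:match-slots}) is what guarantees the pairing is a genuine matching, so summing the per-pair bound over all pairs, together with the trivial bound $1.5\,w(\cdot)$ on every unpaired good slot, will later yield the global factor of $1.5$ (and hence $3$-competitiveness after the $1/2$ probability factor).

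The main obstacle is essentially nil here — this lemma is the easy downstream consequence of Lemma~\ref{lem:match-slots}, which is the real work. The only thing to get right is the inequality chain $1.5\,w(X)+0.5\,w(Y)+W' \le 1.5(w(X)+W')$, which reduces to $0.5\,w(Y)\le 0.5\,W'$, i.e. $w(Y)\le W'$ — precisely the content of Lemma~\ref{lem:match-slots} — so no genuine difficulty remains.
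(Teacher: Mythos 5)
Your proposal is correct and follows essentially the same argument as the paper: bound the bad slot's charge by $1.5\,w_s + 0.5\,w(Y)$, bound the good slot's charge by its accepted job's weight, and use $w(Y)$ at most that weight (from Lemma~\ref{lem:match-slots}) to get the combined ratio $1.5$. The extra bookkeeping about which of $A$/$B$ accepts in each slot is harmless and does not change the argument.
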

\begin{proof}
Let $w_s$ and $w_{s'}$ be the weight of jobs accepted by $A/B$ in $s$ and $s'$
respectively.
The charges to $s$ is at most $1.5w_s + 0.5w(Y)$ while the charges to $s'$
is at most $w_{s'}$.  The overall ratio is therefore
$(1.5w_s + 0.5w(Y) + w_{s'})/(w_s + w_{s'}) \le 1.5$
since $w(Y) \le w_{s'}$. 
\end{proof}

\begin{theorem}
$RAN$-$J$ is $3$-competitive for the online scheduling of equal length jobs 
with restarts.
\end{theorem}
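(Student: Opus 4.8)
The plan is to assemble the pieces that have already been set up. By the charging scheme, every job weight in $OPT$ is distributed over the slots of $A$ and $B$: each good slot receives charge at most $1.5$ times the weight of the job it accepts, and each bad slot $s=(X,Y)$ receives charge at most $1.5\,w(X)+0.5\,w(Y)$. By Lemma~\ref{lem:match-slots}, I would pair each bad slot with a distinct good slot whose accepted job has weight at least $w(Y)$; the second lemma then shows that the combined charge on such a pair is at most $1.5$ times the combined weight of the jobs $A/B$ accept in the two slots.

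The remaining bookkeeping is to sum over all slots. First I would observe that the good slots used for pairing are genuinely distinct from each other (the ``moreover'' clause of Lemma~\ref{lem:match-slots}) and, being good, were not themselves overcounted; so partition all slots into (i) matched pairs (bad slot plus its designated good slot) and (ii) the remaining unmatched good slots. On a pair, the ratio of received charge to $A/B$-weight is $\le 1.5$ by the preceding lemma; on an unmatched good slot it is $\le 1.5$ by definition of ``good''. Adding up, the total charge --- which equals the total weight of jobs completed by $OPT$, since all of $OPT$'s weight is charged somewhere --- is at most $1.5$ times $\sum_s\big(w(A(s))+w(B(s))\big)$, i.e. at most $1.5\,(\mathrm{val}(A)+\mathrm{val}(B))$.

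Finally I would convert this into the competitive ratio. Since $RAN$-$J$ runs $A$ with probability $1/2$ and $B$ with probability $1/2$, its expected value is $\tfrac12(\mathrm{val}(A)+\mathrm{val}(B))$. Combining with the previous inequality, $\mathrm{val}(OPT)\le 1.5\,(\mathrm{val}(A)+\mathrm{val}(B)) = 3\cdot \mathbb{E}[\mathrm{val}(RAN\text{-}J)]$, which is exactly $3$-competitiveness. The example in Figure~\ref{fig:ran-j}(a) shows this is tight.

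The substantive content of the argument is entirely in Lemma~\ref{lem:match-slots}, whose proof is deferred; everything in this theorem's proof is the routine aggregation I have just sketched. So the main obstacle is not here but in establishing that lemma --- in particular, arguing that a suitable heavy good slot $s'$ can be found for each bad slot $s=(X,Y)$ (presumably by tracing where the job $Y=OPT(s+1)$, or a job at least as heavy, was accepted by $A$ or $B$, using that $Y$ was pending at an earlier point), and that the map $s\mapsto s'$ can be made injective. I would expect that to require a careful case analysis on the parity of $s$ and on whether $Y$ was ever picked up by $A$ or by $B$ before slot $s+1$, and a ``chain'' or ``charging-along-a-path'' style argument to guarantee injectivity when several bad slots compete for the same good slot.
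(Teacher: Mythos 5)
Your proposal is correct and follows essentially the same route as the paper: take Lemma~\ref{lem:match-slots} (and the pairing bound) as given, sum the charges over matched pairs and the remaining non-bad slots to get $\mathrm{val}(OPT)\le 1.5(\mathrm{val}(A)+\mathrm{val}(B))$, and divide by the probability-$\tfrac12$ factor to conclude $3$-competitiveness. Your closing remarks about where the real work lies (the injective pairing via a chain-tracing argument) match how the paper actually proves that lemma afterwards.
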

\begin{proof}
All the weights of jobs accepted by $OPT$ 
are charged to slots in $A$ or $B$.
Each slot in $A$ and $B$ receives charges at most 1.5 times the weight
of the job in the slot,
either as a single slot or as a pair of slots as defined in 
Lemma~\ref{lem:match-slots}.  
Since each job in $A$ or $B$ is only completed 
with probability 1/2, the expected value of the online algorithm is half the
total weight of jobs in $A$ and $B$.  
It follows that the competitive ratio is 3. 
\end{proof}

Before proving Lemma~\ref{lem:match-slots}
we first show some properties of bad slots in the following
lemma.  (Although the lemma is stated in terms of odd slots,
the case of even slots is similar.)

\begin{lemma} \label{lem:badslot}
For each bad slot $s = (X,Y)$, where $s$ is an odd slot,
\begin{description}
\item[(i)]
both $X$ and $Y$ are accepted
by $B$ in some slots before $s$ (call the slots $s_0$
and $s_1$, where $s_0 < s_1$); and
\item[(ii)]
for each odd slot $\tilde{s}$ in $[s_0+1..s_1-1]$,
$w(A(\tilde{s})) \geq w(B(s_0)) \ge w(Y)$, and
for each odd slot $\tilde{s}$ in $[s_1+1..s-1]$,
$w(A(\tilde{s})) \geq w(B(s_1)) \ge w(Y)$.
\end{description}
%(iii) $B(s_1)$ and $B(s_2)$ do not receive backward charges.
\end{lemma}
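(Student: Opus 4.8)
The plan is to extract three structural facts from the definition of a bad slot and then derive (i) and (ii) from them. Since $s$ is odd and bad, it receives exactly one charge of each type, so I would record: (a) from the downward charge at $s$ (which can only come from $OPT(s)$), that $w(A(s)) = w(X) \ge w(OPT(s))$; (b) from the self charge at $s$ (which must be an $A$-side half-charge directed at $A^{-1}(\cdot)$, hence generated by some $OPT(s'')$ with $s'' > s$ and $A(s) = OPT(s'')$), that $X = A(s) = OPT(s'')$, so $d(X)$ reaches at least the end of slot $s''$ — in particular past slot $s+1$ — and that $X \ne Y$ since $OPT$ starts distinct jobs in distinct slots; and (c) from the backward charge at $s$ (the $A$-side half of $OPT(s+1)=Y$), that $A$ never accepts $Y$ through slot $s$, that $Y$ is pending for $A$ at the start of slot $s$, and hence that $w(A(s)) \ge w(Y)$, i.e.\ $w(X) \ge w(Y)$.

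For part (i) the key observation I would use is that a half-weight (self or backward) charge generated by $OPT(t)$ exists only when the weight of $OPT(t)$ was split, and this split occurs precisely when the primary algorithm's slot-$t$ job is lighter than $OPT(t)$, in which case the primary side of the split is necessarily a \emph{self} charge. I would apply this to $Y=OPT(s+1)$: slot $s+1$ is even, so $B$ is primary; since $s$ got the backward ($A$-side) half of $OPT(s+1)$, the split happened, so $w(B(s+1)) < w(Y)$ and the $B$-side is a self charge, meaning $B$ already accepted $Y$ at some even slot, necessarily before $s$. Then, to handle $X$, I would argue by contradiction: suppose $B$ does not accept $X$ at any slot before $s$; since $A$ accepts $X$ in slot $s$, $X$ has arrived by the start of slot $s+1$, and by (b) its deadline reaches past slot $s+1$; as $B$ has not completed $X$ by the start of slot $s+1$, the job $X$ is pending for $B$ there, so $B$'s greedy choice gives $w(B(s+1)) \ge w(X) \ge w(Y)$, contradicting $w(B(s+1)) < w(Y)$. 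Hence $B$ accepts $X$ before $s$ as well; since $X \ne Y$ and $B$ accepts at most one job per slot, these are two distinct even slots, which I name $s_0 < s_1$.

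For part (ii), I would note $\{B(s_0),B(s_1)\} = \{X,Y\}$ and $w(X)\ge w(Y)$, so $w(B(s_0)),w(B(s_1)) \ge w(Y)$. Fix an odd slot $\tilde s$ with $s_0 < \tilde s \le s_1 - 1$ (the range $[s_1+1..s-1]$ is handled identically with $s_1$ in place of $s_0$). The job $B(s_0) \in \{X,Y\}$ arrived no later than the end of slot $s_0$, hence before $\tilde s$; its deadline extends well past $\tilde s$; and $A$ has not completed it by the start of $\tilde s$, because $A$ accepts $X$ only in slot $s > \tilde s$ and never accepts $Y$ through slot $s$. So $B(s_0)$ is pending for $A$ at the start of $\tilde s$, and $A$'s greedy choice forces $w(A(\tilde s)) \ge w(B(s_0)) \ge w(Y)$. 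The bad even-slot case is symmetric, interchanging $A$ and $B$.

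The step I expect to be the main obstacle is part (i), and within it the realization that the mere presence of the self and backward charges at $s$ already pins down $w(B(s+1)) < w(Y)$ — this is exactly what upgrades ``$B$ accepts some heavy job before $s$'' to ``$B$ accepts $X$ and $Y$ themselves before $s$''. The rest is routine but fiddly bookkeeping: tracking arrival times and ``pending'' status when jobs may arrive in the interior of a slot, separating the parities of the self-charge slot $s''$, and verifying that each ``before $s$'' quantification is tight.
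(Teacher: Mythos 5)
Your proof is correct and follows essentially the same route as the paper's: the backward charge forces $w(B(s+1)) < w(Y)$, so $B$ must have accepted $Y$ and (since $w(X) \ge w(Y)$) also $X$ in earlier (even) slots, and the pending-job/greedy argument then yields (ii). Your invocation of the self charge at $s$ to certify that $d(X)$ extends to the end of slot $s+1$ merely makes explicit a feasibility detail that the paper's terser argument leaves implicit.
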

\begin{proof}
(i) Since slot $s+1$ makes a backward charge
instead of a downward charge, we have $w(B(s+1)) < w(Y)$.  
Hence $B$ must have accepted $Y$ before $s$,
or else $Y$ could have been a candidate for $B(s+1)$.
Furthermore, $w(X) \ge w(Y) > w(B(s+1))$.
By the same reasoning, $B$ must have accepted $X$ before $s$.

\noindent (ii)
If $B(s_0)=Y$, then $Y$ has already arrived before the end of 
slot $s_0$ but is not accepted by $A$ at/before $s$.
Hence $A$ must have accepted jobs with weights at least $w(Y)$
in all odd slots in [$s_0+1..s_1-1$].
If $B(s_0)=X$ then the same reasoning implies that 
$A$ accepted jobs with weights at least $w(X)$, which is at
least $w(Y)$, in these slots.
The same argument holds for $A$[$s_1+1..s$]. %\\
\end{proof}

We now prove Lemma~\ref{lem:match-slots}.
We give a step-by-step procedure for identifying a good slot 
(in which $A$ or $B$ has accepted a job of sufficient weight) 
for every bad slot.
Consider an odd bad slot $s=(X,Y)$. (The case for even slots
is similar.)  
Roughly speaking, the procedure initially identifies the two slots
$s_0$ and $s_1$ defined in Lemma \ref{lem:badslot}
and designates $s_1$ as a {\em special slot}, denoted by $s^*$.
Then it checks if $s^*$ or $s^*-1$ is a good slot.
If a good slot is found, the procedure stops.
Otherwise, it will identify a new slot not found before,
pick a new special slot $s^*$ from among the
identified slots;
and then move to the next step (which checks on $s^*$, $s^*-1$
and so on).

In more detail, at the beginning of step $i$ ($i \geq 1$),
a collection of $i+1$ slots, $s_0 < s_1 < \cdots < s_i$,
have been identified.
They are all even slots before the bad slot $s$ and
one of them is designated as the special slot $s^*$.
Denote by $Y_j$ the job $B(s_j)$ for all
$j \in \{0, \ldots, i\}$ and for convenience, let
$s_{i+1}$ denote $s$.
Step $i$ proceeds as follows:
\begin{description}
\item[{\rm Step $i$.1.}] 
Consider the job $Y^* = B(s^*)$ in slot $s^*$.
By Lemma \ref{lem:A-weights}(i) below,
$Y^*$ has weight at least $w(Y)$.
So, if the slot $s^*$ receives at most 1 unit of charge,
then we have identified a good slot of sufficient weight
and we stop.

\item[{\rm Step $i$.2.}] 
Otherwise, $s^*$ has at least 1.5 unit of charge
and must therefore have a downward charge.
Denote by $Z$ the job $A(s^*-1)$.
By Lemma \ref{lem:A-weights}(ii) below, $w(Z) \geq w(Y)$.
Since slot $s^*$ must have a downward charge,
slot $s^*-1$ cannot receive a backward charge.
If slot $s^*-1$ receives no self charge as well, 
then it is a good slot and we are done.  

\item[{\rm Step $i$.3.}]
Otherwise $s^*-1$ receives a self charge and hence $Z$ is
accepted by $OPT$ in some slot $s'$ after $s^*$.
In Lemma \ref{lem:case3}, we will show that $B$ must also
accept $Z$ at a slot $s''$ where $s'' < \min\{s, s'\}$.
Note that $Z$ is not in $\{ Y_0, Y_1, \ldots, Y_i \}$.
(A job in  $\{ Y_0, Y_1, \ldots, Y_i \}$ is either the
job $Y$, which is not accepted by $A$ before slot $s$,
or a job accepted by $A$ in a slot other than
$s^*-1$.)
Therefore, $s''$ is a different slot than 
$s_0, s_1, \ldots, s_i$.

Mark slot $s_0$ or $s''$, whichever is later,
as the new special slot $s^*$.
Re-index $s_0, \ldots, s_i$ and $s''$
as $s_0 < s_1 < \cdots < s_{i+1}$ and
move on to Step ($i+1$).
\end{description}

We need to show that
(i) the procedure always terminates,
(ii) the claims made in the above procedure are correct, and
(iii) any two bad slots are paired with different good slots
following this procedure.  
The first is easy: note that in each step, 
if a good slot is not found, a new slot, $s''$,
which is before $s$, is identified instead.
But there are only a finite number of slots before $s$. 
Therefore, the procedure must eventually terminate and
return a good slot.

The claim in Step $i$.1 and $i$.2 is proved in the lemma below,
which is basically a generalization of Lemma~\ref{lem:badslot}.

\begin{lemma} \label{lem:A-weights}
For any step $i$ ($i \geq 1$) and any $j \in \{0, \ldots, i\}$, 
\begin{description}
\item[(i)]
$w(Y_j) \geq w(Y)$
and
\item[(ii)]
for all odd slots $\tilde{s}$ in $[s_j+1 \ldots s_{j+1}-1]$,
$w(A(\tilde{s})) \geq w(Y_j)$.
\end{description}
\end{lemma}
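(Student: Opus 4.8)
The plan is to prove both parts simultaneously by induction on the step number $i$. The base case $i=1$ is exactly Lemma~\ref{lem:badslot}: at the start of step~1 the identified slots are $s_0$ and $s_1$ (the slots where $B$ accepted $X$ and $Y$, say, in the order $s_0 < s_1$), so parts (i) and (ii) for $j \in \{0,1\}$ are precisely the two statements already established there (with $s_2 = s$). So the real work is the inductive step: assuming the conclusion holds at step $i$, show it still holds after the re-indexing performed at the end of Step~$i$.3, when a new slot $s''$ has been inserted and $s^*$ has been moved.

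The key observation driving the induction is that the newly inserted slot $s''$ is a slot where $B$ accepts the job $Z = A(s^*-1)$, where (by Step~$i$.2, invoking the induction hypothesis part~(ii) applied to the old special slot) we have $w(Z) \ge w(Y)$. So part~(i) for the new index corresponding to $s''$ follows immediately: $w(B(s'')) = w(Z) \ge w(Y)$. For the other old indices, part~(i) is unchanged. The subtle point is part~(ii) for the two new consecutive pairs created around $s''$ — i.e.\ the slot intervals $[s_{\text{prev}}+1 \ldots s''-1]$ and $[s''+1 \ldots s_{\text{next}}-1]$, where $s_{\text{prev}}$ and $s_{\text{next}}$ are the identified slots immediately before and after $s''$ in the old list. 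I would argue exactly as in Lemma~\ref{lem:badslot}(ii): the job $Z = B(s'')$ has arrived by the end of slot $s''$ and, because $Z$ was accepted by $A$ only in slot $s^*-1$ (not elsewhere, and certainly not in any odd slot strictly between $s''$ and $s^*-1$ other than possibly $s^*-1$ itself), $A$ must in every odd slot of $[s''+1 \ldots s^*-1]$ have accepted a job of weight at least $w(Z) = w(B(s''))$ — otherwise $Z$ would have been the heaviest pending job there and $A$ would have taken it. A symmetric argument (using that $Z$ is not accepted by $A$ before $s''$ either, as $s''$ is $B$'s slot and $Z$'s unique $A$-slot $s^*-1$ is after $s''$) handles odd slots between $s_{\text{prev}}$ and $s''$. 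Combined with the induction hypothesis for the untouched pairs, this gives part~(ii) at step $i+1$.

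The main obstacle I anticipate is bookkeeping: one must be careful that $Z$ is genuinely a \emph{new} slot's job (the parenthetical remark in Step~$i$.3 that $Z \notin \{Y_0,\ldots,Y_i\}$ is what guarantees the re-indexing strictly enlarges the identified set, hence termination and hence the induction is well-founded), and one must track which odd slots lie in which re-indexed interval after $s''$ is inserted — in particular that the special slot is moved to $\max(s_0, s'')$ so that the interval $[s_{i+1}\!\ldots\! s-1]$ of Lemma~\ref{lem:badslot}-type reasoning still terminates at the bad slot $s$. I would also need Lemma~\ref{lem:case3} (stated as forthcoming in the excerpt) to justify that $s''$ exists at all and satisfies $s'' < \min\{s, s'\}$, but that is invoked as a black box here. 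Once the interval structure is pinned down, each individual "$A$ must have taken something at least this heavy" step is the same one-line greedy argument used in Lemma~\ref{lem:badslot}(ii), just applied to the new job $Z$ in place of $X$ or $Y$.
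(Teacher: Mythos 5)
Your induction set-up, the base case via Lemma~\ref{lem:badslot}, and the proof of part~(i) (that $w(B(s''))=w(Z)\ge w(Y)$ by the hypothesis of part~(ii) applied at $s^*-1$) all match the paper. The gap is in part~(ii): your entire argument presupposes $s''<s^*-1$, i.e.\ that $B$ accepts $Z$ strictly before $A$ does --- you even assert this outright (``$Z$'s unique $A$-slot $s^*-1$ is after $s''$''). Nothing guarantees it: Lemma~\ref{lem:case3} only gives $s''<\min(s,s')$, and since the same job can be accepted by both deterministic algorithms, $A$ may accept $Z$ in slot $s^*-1$ first and $B$ only later, so that $s''>s^*-1$ (this is exactly the configuration of Figure~\ref{fig:induction}(b)). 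In that case the interval you bound, $[s''+1\ldots s^*-1]$, is empty, while the interval that actually needs the new bound $w(Z)$ --- namely $[s''+1\ldots s_q-1]$, where $s_q$ is the closest previously identified slot after $s''$ (or $s$ itself) --- lies entirely after $A$ has already completed $Z$, so the greedy argument ``$Z$ is still pending, hence $A$ would have taken it'' is unavailable there. The paper treats this as a separate case: let $s_p$ be the closest identified slot before $s''$ (it exists because $s^*<s''$); since $Z$ had arrived by the end of slot $s^*-1$, is not accepted by $B$ until $s''>s_p$, and can still be completed, it was a candidate for $B$ at $s_p$, whence $w(Y_p)\ge w(Z)$; the induction hypothesis (ii) at index $p$ then gives $w(A(\tilde s))\ge w(Y_p)\ge w(Z)$ on all of $[s_p+1\ldots s_{p+1}-1]\supseteq[s''+1\ldots s_q-1]$. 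Your proposal is missing this second case entirely.

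A secondary remark: your ``symmetric argument'' for odd slots in $[s_{\mathrm{prev}}+1\ldots s''-1]$ is both unnecessary and unsound as stated. Unnecessary, because after inserting $s''$ those slots only need the old bound $w(B(s_{\mathrm{prev}}))$, which is inherited directly from the induction hypothesis (inserting $s''$ merely shrinks that interval); and unsound, because $Z$ need not have arrived yet during those earlier slots, so $A$'s failure to accept it there proves nothing about the weights $A$ accepted.
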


\begin{proof}
The proof is by induction on $i$.
Clearly, (i) and (ii) are true for $i=1$
as proved by Lemma \ref{lem:badslot}.
Suppose (i) and (ii) are true at the beginning
of some step $i$.
We will show that they are maintained
at the beginning of step $i+1$.

Recall that $Z$ is accepted by $A$ in slot
$s^*-1$ and by $B$ in slot $s''$.
By (ii), $w(Z) \geq w(Y)$.
Thus, $w(B(s'')) \geq w(Y)$ and hence
(i) is maintained in the next step.

To show that (ii) is also maintained in the next step,
it suffices to show that for any odd slot
$\tilde{s}$ in $[s''+1 \ldots s_q -1]$
where $s_q$ is the closest slot among
$s_0, s_1, \ldots, s_i$ after $s''$
(or $s_q = s$ if $s''$ lies after $s_i$), 
$w(A(\tilde{s})) \geq w(Z)$.
We consider two cases:

If $s'' < s^*-1$,
then $Z$ is available before the end of slot $s''$ and yet
is not accepted by $A$ until $s^*-1$.
See Figure \ref{fig:induction}(a).
Therefore, $w(A(\tilde{s})) \geq w(Z)$
for every $\tilde{s} \in [s''+1..s^*-1]$.

\begin{figure}[h]
\centerline{ \epsfysize=5cm \epsffile{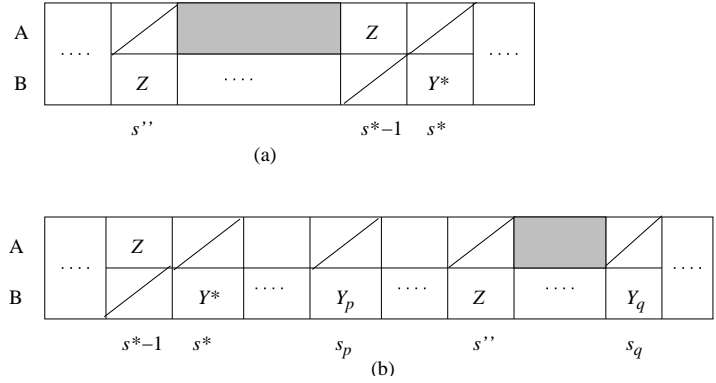} }
\caption{Positions of $Y_{i+1}$ and $Y_p$ in $A$ and $B$.}
\label{fig:induction}
\end{figure}

If $s'' > s^*-1$,
then let $s_p$ be the closest slot among
$s_0, s_1, \ldots, s_i$ before $s''$.
See Figure \ref{fig:induction}(b).
Such slot must exist because $s^*$ is
one such candidate.
Then $w(Y_p) \geq w(Z)$ or else $Z$ would
have been accepted in slot $s_p$.
Therefore, $w(A(\tilde{s})) \geq w(Y_p) \geq w(Z)$
for every odd slot $\tilde{s} \in [s''+1..s_q-1]$.  
\end{proof}

\begin{lemma} \label{lem:case3}
In Step $i$.3, $B$ accepts $Z$ in a slot $s''$ where $s'' < \min(s,s')$.
\end{lemma}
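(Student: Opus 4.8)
\textbf{Proof plan for Lemma~\ref{lem:case3}.}

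The goal is to show that in Step $i$.3, the job $Z = A(s^*-1)$ is accepted by $B$ in some slot $s''$ strictly before both $s$ (the bad slot) and $s'$ (the slot where $OPT$ accepts $Z$). The plan is to argue about where $Z$ can possibly be accepted by $B$, using the facts already available at this point in the procedure: $s^*$ has a downward charge (so $w(A(s^*)) \ge w(OPT(s^*))$ with equality impossible only in degenerate cases, but more importantly $w(B(s^*)) = w(Y^*) \ge w(Y)$ by Lemma~\ref{lem:A-weights}(i)), and $s^*-1$ has a self charge, meaning $w(A(s^*-1)) < w(OPT(s^*-1))$ is \emph{not} the relevant fact --- rather, the self charge at $s^*-1$ says $A$ accepted $OPT(s^*-1)$ earlier; but here the charge we are tracking is the self charge \emph{generated by} $s^*-1$ onto an even-later slot. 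Let me restate: $s^*-1$ is an even slot ($s^*$ being even means $s^*-1$ is odd --- wait, $s_j$ are even, so $s^*$ is even and $s^*-1$ is odd). So $Z = A(s^*-1)$ with $s^*-1$ odd; the self charge received by $s^*-1$ comes from a later odd slot $s'$ where $OPT(s') = Z$ and $A$ had accepted $Z$ back at $s^*-1$. For that self charge to exist, $OPT$ accepts $Z$ at slot $s'$, and since $Z$ is also a job that $A$ accepted, $s'$ must be an odd slot with $s' > s^*-1$.

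First I would pin down the position of $s'$ relative to $s$. Since the self charge onto $s^*-1$ is the \emph{remaining half} after $A$'s half, and the backward-charge machinery that defined the bad slot $s$ concerns slot $s+1$, I would argue that $Z \ne Y$ (as noted parenthetically in Step $i$.3, $Y$ is not accepted by $A$ before $s$, but $Z = A(s^*-1)$ with $s^*-1 < s$) and $Z \ne X$ similarly, so $Z$ is genuinely a new job. Next, because $OPT$ completes $Z$ at $s'$ and $Z$ arrived by the end of slot $s^*-1$ (as $A$ accepted it there), $Z$ is pending at the start of every odd slot in $[s^*..s']$ from $OPT$'s viewpoint until $s'$. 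The key step is to show $B$ must have accepted $Z$ before $\min(s,s')$: consider the earlier of the two slots. Since $w(Z) \ge w(Y)$ (by Lemma~\ref{lem:A-weights}(ii), as $s^*-1$ lies in a range $[s_j+1..s_{j+1}-1]$ for the appropriate $j$), and since at slot $s^*$ itself $B$ accepted $Y^* = B(s^*)$ --- I would compare $w(Z)$ with $w(B(s^*))$ and with the weights $B$ accepts in nearby even slots. The crucial observation: if $B$ had \emph{never} accepted $Z$ in any even slot before $\min(s,s')$, then at the start of the even slot immediately preceding $\min(s,s')$, $Z$ is still pending (it arrived early, and $OPT$ hasn't run it, and $B$ hasn't run it, and its deadline is at least that of $s'$ which is $\ge \min(s,s')$), so $B$ would have been forced to consider $Z$ as a candidate, contradicting the specific weight relations at that slot --- in particular $w(B(s+1)) < w(Y) \le w(Z)$ if $\min = s$, giving the contradiction directly; and if $\min = s'$, then $OPT$ accepting $Z$ at $s'$ as a \emph{self}-charge forces $A(s')$ heavier, but for $B$ the analogous forced comparison at slot $s'-1$ (even) would require $B(s'-1)$ to have weight $\ge w(Z)$, and chasing this back along the chain of even slots where $B$'s accepted weights could only have been forced up or else $Z$ gets accepted, yields that $B$ does accept $Z$ somewhere before $s'$.

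The main obstacle I anticipate is the bookkeeping needed to ensure $s'' < \min(s,s')$ \emph{strictly} and that $s''$ is well-defined, i.e., distinguishing the case where $OPT$'s slot $s'$ comes before the bad slot $s$ from the case where it comes after, and in each case finding the right ``forcing'' even slot for $B$. A secondary subtlety is that $B$'s residual behavior (a job accepted at the end of an even slot runs into the next odd slot) does not interfere: $B$ is \emph{idle} during odd slots after completing its job, so the only slots where $B$ can accept $Z$ are even slots, and I would use this to locate $s''$ precisely. Once $s''$ is shown to be an even slot before $\min(s,s')$, the fact that it differs from $s_0,\dots,s_i$ follows from the parenthetical argument in Step $i$.3 (none of the $Y_j$ equals $Z$), completing the induction setup for the next step of the procedure. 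I would organize the final write-up as: (1) $Z$ is new and $w(Z) \ge w(Y)$; (2) $Z$ arrives by end of $s^*-1$ and $OPT$ runs it at $s' > s^*-1$; (3) case $s' < s$ versus $s' \ge s$, and in each the forcing argument pinning $B$'s acceptance of $Z$ at some even $s'' < \min(s,s')$.
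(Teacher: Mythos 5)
There is a genuine gap in the case $s' < s$. First, your claim that $s'$ must be an odd slot is wrong: a self charge to $s^*-1$ (the slot where $A$ accepted $Z$) is also generated when $OPT$ accepts $Z$ in an \emph{even} slot $s'$ --- there the downward-versus-self decision is made against $B$, and the ``remaining half'' is charged to $A^{-1}(Z)=s^*-1$. (That even case is in fact the easy one: the absence of a downward charge at $s'$ means $w(B(s'))<w(Z)$, and since $Z$ is available and completable at $s'$, $B$ must have accepted it earlier.) Second, and more seriously, your argument for the odd case --- that $B(s'-1)$ must have weight at least $w(Z)$ and that ``chasing this back along the chain of even slots'' yields an acceptance of $Z$ before $s'$ --- does not follow. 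The fact that $B$ accepted jobs of weight at least $w(Z)$ in every even slot between $Z$'s arrival and $s'$ is perfectly consistent with $B$ accepting $Z$ only \emph{after} $s'$ (say between $s'$ and $s$), which is precisely what the lemma must rule out; no contradiction is produced. (Also, the self charge at $s'$ forces $A(s')$ to be \emph{lighter} than $Z$, not heavier.)

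The missing idea is to exhibit one specific even slot, strictly before $s'$, at which $B$ accepted a job strictly lighter than $Z$ while $Z$ was available. The paper takes $s_p$, the latest of the already-identified slots $s_0,\ldots,s_i$ lying before $s'$ (it exists because $s^*\le s'-1$). The self charge at the odd slot $s'$ gives $w(Z)>w(A(s'))$, and Lemma~\ref{lem:A-weights}(ii) gives $w(A(s'))\ge w(Y_p)=w(B(s_p))$, hence $w(Z)>w(Y_p)$. Since $Z$ arrived by the end of $s^*-1<s_p$ and can still be completed if started in $s_p$ (as $OPT$ starts it later, in $s'$), $B$ would have preferred $Z$ to $Y_p$ at $s_p$ unless it had already accepted $Z$ earlier; hence $s''<s_p<s'$. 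Your handling of the case $\min(s,s')=s$ (via $w(B(s+1))<w(Y)\le w(Z)$) does match the paper, but without the comparison at $s_p$ the case $s'<s$ --- the heart of the lemma --- remains unproven.
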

\begin{proof}
First notice that $w(B(s+1)) < w(Y) \leq w(Y_{i+1})$.
Therefore, $s'' < s$
or else $Z$ would have been a candidate for slot $s+1$.

Now we assume that $s' < s$ and show that $s'' < s'$.
We distinguish two cases.

Case i.  $s'$ is an odd slot.  
Let $U = A(s')$.
Let $s_p$ be the slot in $s_0, \ldots, s_i$
that is closest to and before $s'$ 
(which must exist because $s^*$ itself is a candidate).
Note that $s^*-1$ must be before $s_p$ since it must be 
before $s'$ and immediately before one of the $s_j$'s
(in this case $s^*$), and $s_p$ is the latest such
$s_j$'s before $s'$.  

\begin{figure}[h]
\centerline{ \epsfysize=1in \epsffile{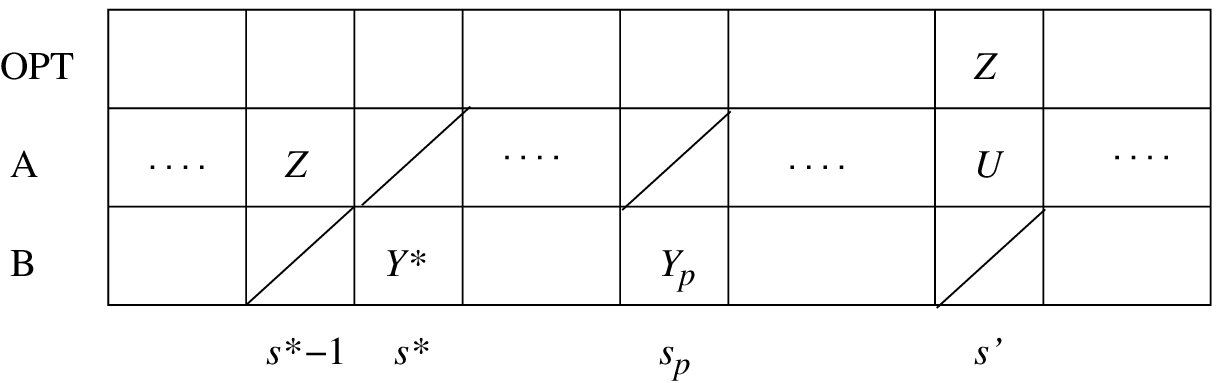} }
%\caption{}
\label{fig:s3}
\end{figure}

We have $w(Z) > w(U)$ since a self charge is made instead of a 
downward charge.
By Lemma \ref{lem:A-weights}(ii), $w(U) \ge w(Y_p)$.  
Therefore $w(Z) > w(Y_p)$.
Hence $Z$ must be accepted before $Y_p$ 
in $B$ or else it can take $Y_p$'s place in $B$.
By definition of $s_p$, $s_p < s'$.
Hence $s'' < s'$.

Case ii.  $s'$ is an even slot. 
Let $U = B(s')$.
Then $w(Z) > w(U)$ due to no downward charge in slot $s'$.  
Thus $B$ must have accepted
$Z$ before $s'$, i.e., $s'' < s'$.

So in all cases $B$ accepts $Z$ at some slot $s'' < \min(s, s')$. 
\end{proof}

Finally, the lemma below shows that
two bad slots are paired with different good slots.

\begin{lemma} \label{lem:distinct}
All bad slots are paired with different good slots.
\end{lemma}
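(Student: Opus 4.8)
The plan is to track, for each bad slot, the entire finite sequence of special slots $s^*$ that the procedure visits before it halts, together with the good slot it finally outputs, and to argue that two distinct bad slots can never collide on their output. First I would observe that the procedure attached to a bad slot $s=(X,Y)$ is completely deterministic: starting from $s_0,s_1$ (the $B$-slots where $X$ and $Y$ were accepted, from Lemma~\ref{lem:badslot}), each step either terminates with a good slot among $\{s^*,s^*-1\}$, or identifies the slot $s''$ where $B$ accepted the job $Z=A(s^*-1)$ and re-designates the special slot. So the final good slot returned is a function of $s$ alone, and the whole question is injectivity of this function over the set of bad slots.

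The main device I would use is that whenever the procedure for a bad slot reaches a special slot $s^*$ and moves on (Step $i$.3), the job $Z=A(s^*-1)$ has weight $w(Z)\ge w(Y)$ and $Z$ is accepted by $A$ precisely in slot $s^*-1$; moreover $Z\notin\{Y_0,\dots,Y_i\}$, and ultimately (when the procedure does halt) the good slot $s'$ it returns either equals some $s^*$ (with $B(s')=Y^*$, $w(Y^*)\ge w(Y)$) or equals $s^*-1$ for the current $s^*$ (with $A(s')=Z$, $w(Z)\ge w(Y)$). Thus the good slot $s'$ output for a bad slot $(X,Y)$ always ``carries'' a job — either via $A(s')$ or $B(s')$ — of weight at least $w(Y)$, but in addition it carries structural information: either it is a $B$-slot whose job traces back through the chain of $Z$'s to $X$ and $Y$, or it is the predecessor of such a $B$-slot and $A(s')$ is the last job $Z$ in that chain. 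I would exploit the fact that a good slot, by definition, receives at most one unit of charge, which severely restricts how it can be reused.

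Concretely, suppose two distinct bad slots $s=(X,Y)$ and $t=(X',Y')$ were paired to the same good slot $g$. I would split into the cases according to whether $g$ was returned in Step~$\cdot$.1 (so $g=s^*$, a $B$-slot, for each) or Step~$\cdot$.2 (so $g=s^*-1$, and $A(g)$ is the respective terminal $Z$). If both are of the Step~1 type, then $B(g)$ is simultaneously a link in the $Z$-chain of $s$ and of $t$; I would walk backwards along both chains using the rule ``$Z=A(s^*-1)$ and $B(s'')=Z$ determines the previous special slot uniquely'' together with the facts in Lemma~\ref{lem:A-weights} and Lemma~\ref{lem:case3}, and derive that the two chains coincide all the way back to $\{s_0,s_1\}$, forcing $X=X'$, $Y=Y'$, hence $s=t$, a contradiction. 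If both are of the Step~2 type, the same backward-chasing applies starting from $A(g)$. The genuinely delicate case is the mixed one, where $g$ is the output of Step~1 for $s$ (so $g$ is a $B$-slot holding the job $Y^*$) but the output of Step~2 for $t$ (so $g=(s^*)'-1$ with $A(g)$ a job $Z'$): here I must rule out that a slot can legitimately play both roles, and I expect to use the charge bound — $g$ good means at most one unit — to show that $g$ cannot carry both a ``terminal $B$-job'' witness and a ``terminal $A(g)$-is-$Z'$'' witness for two different chains without one of the two procedures having actually continued past $g$ (contradicting that it halted there).

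I expect the mixed case to be the main obstacle: the two other cases are essentially bookkeeping on the deterministic chain, but the mixed case requires a careful argument that the event ``$g$ is returned via Step~1 for $s$'' and ``$g$ is returned via Step~2 for $t$'' are incompatible, which I anticipate handling by examining exactly which charges land on $g$ and on $g+1$, $g-1$ in each scenario and showing the charge accounting forces $s$ and $t$ to share the same bad-slot data. A secondary subtlety is making sure the re-indexing in Step~$i$.3 (inserting $s''$ into the sorted list and taking $\max\{s_0,s''\}$ as the new $s^*$) does not cause two different chains to merge and then diverge; but monotonicity of the $Z$-chain (each new $s''$ is strictly before the current $s^*$, and the weights along the chain are non-increasing toward $Y$) should prevent that. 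Once injectivity is established in every case, the lemma follows, and with it Lemma~\ref{lem:match-slots} and the $3$-competitiveness of $RAN$-$J$.
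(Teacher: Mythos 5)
Your skeleton---deterministic chains attached to each bad slot, and injectivity via a case split on whether the shared good slot was found in a Step $\cdot$.1 or a Step $\cdot$.2---is essentially the paper's, but at the two places where the real work sits your proposal has only placeholders, so there are genuine gaps. First, the mixed case is not something to be discharged by generic charge accounting around $g$, $g-1$, $g+1$: the paper resolves it with a specific dichotomy, namely that every slot that can be returned in a Step $\cdot$.1 is a slot where $B$ accepted a job ($X$, $Y$, or one of the $Z$'s from Step $i$.3) that $OPT$ completes later and whose downward charge is absent, so such a slot necessarily receives a \emph{self} charge; whereas a slot returned in a Step $\cdot$.2 receives no self charge by construction (otherwise the procedure would have executed Step $i$.3 instead of stopping there). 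The bound ``a good slot receives at most one unit,'' which is the lever you say you expect to use, does not separate the two roles; it is the presence versus absence of a self charge that does. (Note also that for two bad slots of the same parity the mixed collision is already impossible for a parity reason---Step $\cdot$.1 slots and Step $\cdot$.2 slots of an odd bad slot lie on opposite parities---so the self-charge argument is only needed when the two bad slots have different parity.)

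Second, your backward-chasing claim that a common job forces the two chains to ``coincide all the way back to $\{s_0,s_1\}$, forcing $X=X'$, $Y=Y'$'' skips exactly the terminal cases that need new arguments. The backward map is fine on interior elements (a common $Y_j=Y'_k$ with $j,k\ge 2$ yields the common earlier element $B(A^{-1}(Y_j)+1)$), but the iteration can stop with $X$ or $Y$ of one chain equal to an \emph{interior} element $Y'_m$, $m\ge 2$, of the other, or with $X=Y'$ or $Y=X'$; determinism alone excludes none of these, and none of them immediately gives $s=t$. Ruling them out is where the paper works: $Y'_m\ne Y$ because by Lemma~\ref{lem:case3} every such $Y'_m$ is accepted by both $A$ and $B$ before $OPT$ accepts it, while $Y$ is not accepted by $A$ before $OPT(s+1)$; $Y'_m\ne X$ because it would force a downward charge at slot $s+1$, contradicting the backward charge to $s$; and separate short arguments show $X\ne Y'$ and $X'\ne Y$. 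Your proposal neither states nor proves these facts, so the both-Step-$\cdot$.1 and both-Step-$\cdot$.2 cases are not actually closed. A smaller inaccuracy: the slot $s''$ found in Step $i$.3 need not lie before the current special slot (the proof of Lemma~\ref{lem:A-weights} explicitly treats $s''>s^*-1$), so the monotonicity you invoke to dismiss chain merging is not available.
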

\begin{proof}
There are two possible places in our procedure where good slots can be
identified: in Step $j$.1 or in Step $k$.2 for some $j$ and $k$.
Call them substeps 1 and 2.
Note that, for an odd bad slot, good slots identified in substep 1 are 
always when $B$ is accepting jobs, and good slot identified in substep 2 are 
always when $A$ is accepting jobs, and vice versa for even bad slots.

Consider two distinct bad slots $s=(X,Y)$ and $s'=(X',Y')$.
First, consider the case when $s$ and $s'$ has different parity
(odd or even slots). 
Then they can match with the same good slot only if one of them identifies it 
in substep 1 and the other in substep 2.
However, a good slot in substep 1 must receive self-charge (this is how the 
$Y_j$'s are identified) while a good slot in substep 2
cannot receive a self charge (otherwise we would have moved on to some 
Step $i$.3 in the procedure).  
Thus it is impossible that a substep 1 good slot is
also a substep 2 good slot.

Next, consider the case when $s$ and $s'$ are of the same parity.
Without loss of generality assume that they are both odd slots.
To facilitate our discussion, we re-index the $Y_j$'s in the order
they are identified.
So we let $Y_0, Y_1, Y_2, \ldots$ be the chain of $Y_j$'s associated with $s$
where $\{Y_0, Y_1\} = \{X, Y\}$
and for $j \geq 2$, $Y_j$ is the job identified in step $(j-1)$.3.
Similarly, we let $Y'_0, Y'_1, Y'_2 \ldots$ be the chain associated with $s'$.
We will show that
no job appears in both chains.
This proves the claim because
for two bad slots of the same parity to be matched to the same good slot,
they must both be identified in substep 1 or both in substep 2.
But if the chains of $Y_i$'s associated with them are different,
this is not possible.

To show the chains are distinct,
we first show that $Y_0, Y_1, Y'_0$ and $Y'_1$ are all distinct.
Recall that $\{X,Y\}=\{Y_0,Y_1\}$ and $\{X',Y'\}=\{Y'_0,Y'_1\}$.
Clearly $X \neq Y, X' \neq Y', X \neq X'$ and $Y \neq Y'$.
Thus we only need to show that $X \neq Y'$ and $X' \neq Y$.
If this is not true, then either:
(1) $Y$ is accepted in $A$ in a bad slot; or
(2) $X$ in $OPT$ generates a backward charge.  
For (1), if $A^{-1}(Y) < A^{-1}(X) (=s)$, then $OPT(s+1) (=Y)$ 
would not make a backward charge to $s$; while if
$A^{-1}(Y) > s$, then $A^{-1}(Y)$ cannot get a self 
charge and hence receives at most 1.5 units of charge.
For (2), $OPT^{-1}(X) > OPT^{-1}(Y)$ due to the self charge to slot $s$, and 
by Lemma~\ref{lem:badslot}(i), $X$ must also be accepted in $B$ before $s$.  
Hence $X$ cannot generate a backward charge.
Thus neither (1) nor (2) can be true.

We have now established that $Y_0, Y_1, Y'_0$ and $Y'_1$ are all different.
It is also clear that $Y_j \neq Y_k$ and $Y'_j \neq Y'_k$ for any $j$ and $k$.
Note that, if $Y_j = Y'_k$, for some $j>1$ and $k>1$, 
then there must be some $j'<j$ and $k'<k$ such that $Y_{j'} = Y'_{k'}$,
because they are uniquely defined in such a way (in some substep 2). 
So the only remaining case to consider is $Y_0$ or $Y_1$ being the same 
as $Y'_j$ for some $j>1$. 
Recall $Y_0$ and $Y_1$ are the $X$ and $Y$ of $s$. 
$Y'_j$ cannot be $Y$
because by Lemma \ref{lem:case3},
both $A$ and $B$ accept $Y'_j$ before $OPT$ does
but $A$ accepts $Y$ after $OPT$. 
$Y'_j$ cannot be $X$ because this would mean the job $B(s+1)$ is $Y'_k$ for 
some $k<j$, so slot $s+1$ should receive a downward charge 
%(since $Y'_k$ produces $Y'_j$) 
but this contradicts that $OPT(s+1)$ makes a backward charge to $s$ 
instead of a downward charge. 
\end{proof}

\section{Lower Bound for Equal Length Intervals}
  \label{sec:LB}

In this section, we show a lower bound of 2 for barely
random algorithms for scheduling equal length intervals
that choose between two deterministic algorithms,
possibly with unequal probability.

\begin{theorem}
No barely random algorithm choosing between two deterministic algorithms
for the online scheduling of equal length intervals
has a competitive ratio better than $2$.
\end{theorem}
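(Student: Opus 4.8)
The plan is to use Yao's principle: exhibit a distribution over inputs on which every pair of deterministic algorithms $(A,B)$, together with any mixing probability $q$, achieves expected value at most (essentially) half of $OPT$. Since a barely random algorithm is just a fixed choice of $(A,B,q)$, a lower bound against all such triples on a fixed input distribution gives the claimed competitive ratio of $2$. In fact, because the statement is about a specific algorithm rather than a class of distributions, it is cleaner to argue directly: fix the (arbitrary) algorithm, which runs $A$ with probability $q$ and $B$ with probability $1-q$; I will adaptively construct an instance of equal-length (unit) intervals on which $\mathbb{E}[\text{alg}] \le \tfrac12 OPT + o(OPT)$, forcing the ratio to $2$.

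The construction I would use is the standard ``one heavy interval, then a competing chain'' idea, made robust to unequal probabilities. Start by releasing a unit interval $I_0$ of weight $1$ at time $0$. Both $A$ and $B$ must decide (deterministically, hence known to the adversary) whether to commit to running $I_0$ to completion. Consider the probability $p$ that the algorithm (over its coin flip) ends up completing $I_0$; note $p \in \{0,q,1-q,1\}$ depending on which of $A,B$ commit. If $p$ is small, then at time $\epsilon$ release a much heavier interval (weight $W \gg 1$) overlapping $I_0$; whoever dropped $I_0$ can grab it, but the point is the adversary now controls the ``state'' of both deterministic runs and can continue. If $p$ is large — meaning both $A$ and $B$, or at least the heavier-weighted of the two, are locked into $I_0$ until time $1$ — then release a second unit interval of weight $1$ during $[\epsilon,1)$ that $OPT$ can run in a disjoint later slot but that neither $A$ nor $B$ can take (they are busy with $I_0$), and iterate: $OPT$ collects a growing chain of unit intervals in consecutive slots while the algorithm, pinned down, collects only $O(1)$. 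Over $n$ rounds $OPT \ge n$ while $\mathbb{E}[\text{alg}] = O(1)$, which already beats ratio $2$; the subtlety is handling the intermediate case where exactly one of $A,B$ commits, so that $p \in \{q,1-q\}$ — there the adversary must branch on whether $q \ge 1/2$ or $q < 1/2$ and play the chain against whichever deterministic algorithm has the larger weight.

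The key steps in order: (1) formalize the adversary as a decision tree that, after releasing each interval, reads off the (deterministic) behavior of $A$ and $B$ and thereby knows the exact committed state of both runs; (2) define a potential/progress measure — essentially the number of unit-weight intervals $OPT$ has already secured in disjoint slots versus the total weight the algorithm can still hope to accumulate — and show each adversary round either increases $OPT$'s guaranteed value by $1$ while the algorithm's expected gain in that slot is at most $q$ (or $1-q$, whichever is the smaller commitment), or else terminates with the algorithm having value $\le 1/2 \cdot OPT$ exactly as in the single-interval example; (3) run the rounds $n \to \infty$ and take $\epsilon \to 0$ so that all intervals have unit length and the ratio tends to $2$. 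The main obstacle I anticipate is the bookkeeping in step (2) for the unequal-probability case: with $q \neq 1/2$ the adversary cannot simply ``split'' the two runs symmetrically, and one must carefully verify that whatever residual interval the favored deterministic algorithm is committed to cannot be repeatedly re-used to cover multiple future $OPT$ intervals — this is exactly where equal-length (so every interval blocks exactly one slot) is essential, and where the argument differs from, and is more delicate than, the equal-probability lower bound of \cite{FuPoZh07}.
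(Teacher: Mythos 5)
There is a genuine gap, and it is structural rather than a matter of missing detail. Your central device --- ``release a second unit interval during $[\epsilon,1)$ that $OPT$ can run in a disjoint later slot but that neither $A$ nor $B$ can take'' --- is not available in the interval model. The theorem is about \emph{intervals}, i.e.\ jobs with tight deadlines $d(I)=r(I)+p(I)$: an interval must be started at its arrival time or it is lost, so $OPT$ can never defer an interval to a later slot, and an interval arriving inside $[\epsilon,1)$ overlaps $I_0$ and hence cannot be collected by $OPT$ in addition to $I_0$. Your pinning/chain argument (``$OPT\ge n$ while the algorithm, pinned down, collects $O(1)$'') is a jobs-with-restarts argument in the spirit of Chrobak et al.; at best it could give a bound for the more general job problem, which does not imply the stated bound for the special case of intervals. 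Moreover, even in the job model the pinning lasts only one unit of time: once $A$ and $B$ finish $I_0$ they are free, and to re-pin them the adversary must offer intervals the algorithms are willing to start, whose weights must then be charged to the algorithm --- your sketch never accounts for this.

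This accounting is exactly where the paper's proof lives, and it is absent from your proposal. The paper's adversary uses Woeginger-style ramps $SET(v,w,\epsilon)$ of mutually overlapping intervals with increasing weights, released adaptively so that in every step both deterministic algorithms are forced to abort their current intervals (a sequence of lemmas rules out the alternatives: starting a lighter interval later, $A$ holding the heavier interval when $p\le q$, or the two algorithms running almost disjoint intervals), while $OPT$ banks the weight of the interval $I_i^-$ just below the one the algorithm is holding. The weights $w_i$ of successive ramps are chosen as $c(p\,w(I_{i-1})+q\,w(J_{i-1}))-\sum_j w(I_j)$, and Woeginger's finiteness lemma guarantees the process terminates with ratio at least $2-\delta$; the awkward regime $w(J_i)\ge 2w(I_i)$ (which is precisely the unequal-probability difficulty you flag but do not resolve) is handled separately with a ramp $SET(0,u\,w(J),\epsilon)$ whose parameter $u$ is tuned as a function of $p$. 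Your step (2) names a potential function but never defines it, and the claimed per-round dichotomy is not substantiated; without a concrete weight schedule and a termination argument of this kind, the proposal does not yield the theorem.
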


Let $ALG$ be a barely random algorithm that chooses between 
two deterministic algorithms $A$ and $B$ with 
probability $p$ and $q$ respectively
such that $p+q=1$ and $0 < p \leq q < 1$.
Let $\delta$ be an arbitrarily small positive constant
less than 1.
We will show that there is an input on which $OPT$
gains at least $2-\delta$ times of what $ALG$ gains.

We will be using sets of intervals similar to
that in Woeginger \cite{Woeg94}.
More formally, let $\epsilon$ be an arbitrary positive real number
and let $v, w$ be any pair of real numbers such that
$0 \leq v \leq w$. 
We define $SET(v,w,\epsilon)$ as a set of intervals of weight 
$v, v+\epsilon', v+2\epsilon' , \dots, w$ 
(where $\epsilon'$ is the largest number 
such that $\epsilon' \leq \epsilon$ and 
$w-v$ is a multiple of $\epsilon'$)
and their relative arrival times are such that
intervals of smaller weight come earlier
and the last interval
(i.e., the one that arrives last and has weight $w$)
arrives before the first interval finishes.
Thus, there is overlapping between any pair of
intervals in the set.
See Figure~\ref{fig:set}.
This presents a difficulty for the online algorithm
as it has to choose the right interval to process
without knowledge of the future.

\begin{figure}
\centerline{ \epsfysize=1.2in \epsffile{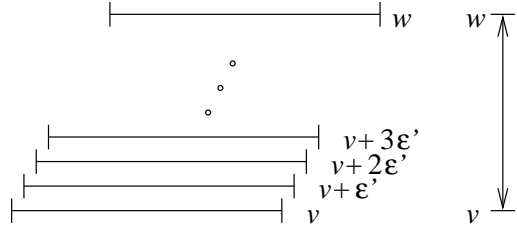} }
\caption{$SET(v,w,\epsilon)$.  On the left is the actual set of 
intervals; the vertical arrow on the right is the notation we use to denote 
such a set.}
\label{fig:set}
\end{figure}

%%% Notation
To facilitate our discussion, 
we assume that all intervals have weight at least 1
throughout this section, except Section \ref{subsec:ylarge}.
If $I$ is an interval in $SET(v, w, \epsilon)$ and $w(I) > v$
(i.e., $I$ is not the earliest interval in the set),
then $I^-$ denotes the interval
that arrives just before $I$ in $SET(v, w, \epsilon)$.
So, $w(I^-) \geq w(I) - \epsilon$.

\subsection{A Few Simple Cases}

We first present a few simple situations in which $OPT$ can gain
a lot compared with what $ALG$ can gain.
The first lemma shows that an algorithm should not start an interval
that is lighter than the current interval being processed by the
other algorithm.
The second lemma shows that it is not good to have
$A$ processing an interval of equal or heavier weight
than the interval currently being processed by $B$.
Moreover, the two algorithms should avoid processing
almost non-overlapping intervals
as shown in the third lemma.

\begin{lemma} \label{lem:YEarly}
Suppose at some moment,
one of the algorithms (say $B$) is processing
an interval $J$ from a set $S = SET(v,w,\epsilon)$
while the other algorithm ($A$) is processing
another interval $I$,
where $w(J)>v$, $w(I) \leq w(J)$ and $r(I)>r(J)$.
(Note that here the interval $I$ cannot come from $S$.)
Then $OPT$ gains at least $2-\epsilon$ of $ALG$'s
gain on an input consisting of
$S$, $I$ and some subsequently arrived intervals.
\end{lemma}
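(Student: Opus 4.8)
The plan is to show that by extending the given configuration with a suitable $SET$, the online algorithm is forced into a bad position regardless of which of $A,B$ is actually running. The key observation is that $A$ is currently committed to the ``wrong'' interval $I$ (lighter than $J$ and arriving later), so if we now release a heavy $SET$ that overlaps $I$ but whose intervals end after $d(I)$, then $A$ cannot profit from it without abandoning $I$, and meanwhile $B$ is tied up finishing $J$. First I would fix notation: let $w(J) = w$ (roughly), and append a new set $S' = SET(w, W, \epsilon)$ whose first interval arrives at a time strictly after $r(I)$ but strictly before $d(I)$ (so it overlaps $I$), and before $d(J)$ as well, and whose intervals are short enough (equal length, so this is automatic) that they overlap each other as in the definition of $SET$. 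Because all intervals have length $1$ and $r(I) > r(J)$, the interval $J$ finishes strictly before $I$ does, so at the moment the last (heaviest) interval of $S'$ arrives, $A$ is still running $I$ while $B$ has just finished (or is about to finish) $J$.

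The core of the argument is a case analysis on what $ALG$ does with $S'$. Since $ALG = pA + qB$, I would bound $E[ALG]$ from above by considering the two deterministic runs. In run $A$: $A$ is holding $I$ when $S'$ arrives; whatever interval of $S'$ it eventually commits to (if any), it gets at most $W$ from the whole future, but then it has thrown away $I$ — so run $A$ collects at most $\max(w(I), W) \le W$ from $\{I\}\cup S'$, roughly $W$. In run $B$: $B$ is finishing $J$ (value $w$), and only becomes free at time $d(J)$; by then most of $S'$ has arrived and $B$ can grab at most one interval of $S'$, of weight at most $W$, but we will arrange (by the standard Woeginger trick, following further arrivals after $S'$) that committing early to a light interval of $S'$ is punished. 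Meanwhile $OPT$ can take $I$ (value $\ge 1$) in run-$A$'s timeline and a heavy interval of $S'$ and $J$ itself — concretely $OPT$ takes $J$, then $I$, then... wait: $J$ and $I$ overlap, so $OPT$ takes only one of them; but $OPT$ takes $I$ together with the last interval of $S'$? No — $S'$ overlaps $I$. The resolution is that we do \emph{not} need $OPT$ to beat this sub-instance by $2$ on its own; rather we feed the output of this lemma into the subsequent arrivals, so here it suffices to show $OPT \ge (2-\epsilon)\cdot ALG$ \emph{on the whole input including the subsequently arrived intervals}, which we get to choose. So the clean way: after $S' = SET(w,W,\epsilon)$, keep issuing, in the Woeginger manner, a continuation that lets $OPT$ complete $J$ (or $I$) plus every heaviest interval we forced $ALG$ to skip, doubling up.

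Concretely, the step I would carry out is: (1) append $S' = SET(w(J), w(J)/\delta', \epsilon)$ for a suitable $\delta'$, arriving in the overlap window of $I$; (2) in run $A$, since $A$ already used slot-resources on $I$ and $S'$ overlaps $I$, $A$ completes at most one of $I, S'$-interval, and if it takes a light one of $S'$ the adversary stops there with $OPT$ taking the heaviest of $S'$ plus $J$ (two disjoint-enough-to-both-complete items since $J$ precedes $S'$), yielding ratio $\to 2$; (3) in run $B$, $B$ is forced to complete $J$ first, is free only at $d(J)$, can then complete at most one interval of $S'$, again at most the heaviest, while $OPT$ completes $J$ and the heaviest of $S'$ — but $OPT=ALG_B$ here on $J$, so we instead let $OPT$ take $I$ and the heaviest of $S'$; since $r(I)>r(J)$ and $S'$ starts after $r(I)$, $I$ and $S'$-heaviest overlap, so that fails too — hence the genuinely correct move is to note $OPT$ completes $J$ and then \emph{also} $I$ is impossible, so the doubling must come purely from $S'$: $OPT$ takes the heaviest $S'$-interval while $ALG$ (in expectation, weighting the unfavorable branch) takes something lighter, and the factor of $2$ is recovered by iterating $SET$s in the standard recursive lower-bound construction rather than from this single step.

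\textbf{Main obstacle.} The delicate point is combining the two deterministic branches: in one branch ($B$) the ``wasted'' resource is $J$ itself (which $OPT$ also wants), in the other ($A$) it is $I$. I expect the real work is to show that $OPT$ can \emph{simultaneously} exploit both wastes — i.e., exhibit a single offline schedule on $S \cup \{I\} \cup S' \cup (\text{continuation})$ that gets $J$ (matching the loss in branch $A$ relative to $B$) \emph{and} a full heavy interval from each iterated $SET$ that $ALG$ was forced to under-serve — so that after taking the $p,q$-weighted combination, $OPT \ge (2-\epsilon)E[ALG]$. This is exactly the kind of bookkeeping that the paper's later lemmas (Lemma~\ref{lem:YEarly} and its successors) are set up to chain together, so I would prove it by first reducing to the configuration handled by the next lemma in the sequence and invoking that, keeping this lemma's job limited to: (a) constructing $S'$ in the overlap window, (b) the two-branch case split, and (c) verifying the arithmetic $w(I^-)\ge w(I)-\epsilon$-style slack so the ratio is $2-\epsilon$ rather than exactly $2$.
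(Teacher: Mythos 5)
There is a genuine gap: your proposal never produces the configuration that actually yields the factor $2-\epsilon$, and the route you sketch cannot be completed as stated. The paper's proof is a one-shot construction with no continuation at all: since $w(J)>v$, the interval $J^-$ (weight $\geq w(J)-\epsilon$) has already been released, and the adversary simply releases a \emph{single} new interval $J'$ with $w(J')=w(J)$ in the window between $d(J^-)$ and $d(J)$, then stops. Because all intervals have equal length and $r(I)>r(J)$, we have $d(I)>d(J)$, so $J'$ overlaps both $I$ and $J$; hence in each branch the deterministic algorithm completes at most one interval of weight at most $w(J)$, giving $E[ALG]\leq (p+q)\,w(J)=w(J)$, while $OPT$ completes $J^-$ and then $J'$, gaining at least $2w(J)-\epsilon$, and $w(J)\geq 1$ gives the ratio $2-\epsilon$. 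Your plan misses precisely this: you never use $J^-$, which is the sole source of the doubling, and instead you append a heavy set $S'=SET(w(J),W,\epsilon)$ with $W\gg w(J)$ -- but a heavy continuation \emph{helps} $ALG$ in the branch where $A$ abandons the light $I$, and you run into exactly this wall in your own case analysis ($OPT$ cannot combine $I$, $J$ and the heavy $S'$-intervals), ending with the concession that the factor $2$ ``is recovered by iterating $SET$s in the standard recursive lower-bound construction rather than from this single step.''

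That concession is where the proof fails structurally: this lemma is used as a \emph{terminal} gadget inside Lemmas \ref{lem:AHeavy}, \ref{lem:ABSeparate} and \ref{lem:mustmoveon} (``$ALG$ loses'' means the lemma by itself certifies a $2-\epsilon$ gap on $S$, $I$ and the subsequently released intervals), so it must deliver a self-contained bound and cannot defer the doubling to the outer recursion of Step $i$. Likewise, your fallback of ``reducing to the configuration handled by the next lemma in the sequence and invoking that'' is backwards and would be circular, since the later lemmas are proved by invoking this one. To repair the argument you should discard $S'$ entirely, and instead verify (a) that $J'$ released in $(d(J^-),d(J))$ overlaps whatever $A$ and $B$ are currently running, so each branch nets at most $w(J)$, and (b) that $OPT$ schedules $J^-$ followed by $J'$, with the $\epsilon$ slack coming from $w(J^-)\geq w(J)-\epsilon$ and the normalization $w(J)\geq 1$.
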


\begin{proof}
We illustrate the scenario in Figure \ref{fig:yearly}.
(There, the vertical line represents the set $S$ 
and the horizontal line labelled $J$ is one of the intervals
in $S$.
The horizontal line labelled $I$ arrives later than $J$
and has a smaller weight.)
To defeat $ALG$, 
an interval $J'$ with the same weight as $J$ is released
between $d(J^-)$ and $d(J)$
and no more intervals are released.
(See Figure \ref{fig:yearly}.)
Then $OPT$ completes $J^-$ (which finishes just before $J'$ starts)
and $J'$, gaining $w(J^-) + w(J')$ $\geq 2w(J) - \epsilon$
while $ALG$ gains at most $(p+q) w(J) = w(J)$ even if
it aborts $I$ to start $J'$.
Since $w(J) \geq 1$, $\frac{2w(J)-\epsilon}{w(J)} \geq 2-\epsilon$.
\end{proof}

\begin{figure}
\centerline{ \epsfysize=0.9in \epsffile{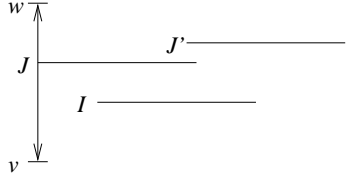} }
\caption{(Lemma \ref{lem:YEarly}) Starting a lighter interval later}
\label{fig:yearly}
\end{figure}

\begin{lemma} \label{lem:AHeavy}
Suppose at some moment, 
algorithm $B$ is processing an interval $I$ from
a set $S = SET(v, w, \epsilon)$
while $A$ is processing an interval $J$ from
a set $S' = SET(v', w', \epsilon')$,
where $w(I)>v$, $w(J)>v'$, $w(I) \leq w(J)$
and $r(I) \leq r(J)$.
(Note that $S$ and $S'$ can be the same set;
$I$ and $J$ can even be the same interval.)
Then $OPT$ gains at least $2-\max\{\epsilon, \epsilon'\}$
of $ALG$'s gain on an input consisting of $S$, $S'$
and some subsequently arrived intervals.
\end{lemma}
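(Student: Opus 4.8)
The plan is to mirror the argument of Lemma~\ref{lem:YEarly}, exploiting the fact that both $A$ and $B$ are already committed to intervals of weight $\le w(J)$ that are "used up" by this single adversarial continuation, so $ALG$ cannot profit from either copy while $OPT$ can double-count. First I would draw the picture: $B$ is on $I \in S$ and $A$ is on $J \in S'$ with $r(I) \le r(J)$ and $w(I) \le w(J)$. The adversary releases one further interval $J'$ with $w(J') = w(J)$, arriving between $d(J^-)$ and $d(J)$ (using the predecessor $J^-$ of $J$ inside $S'$, which exists since $w(J) > v'$), and then stops. Note $J'$ overlaps $J$ but $J^-$ finishes just before $J'$ starts.

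Next I would bound the two sides. For $OPT$: since $J^- \in S'$ ends just before $J'$ begins, $OPT$ can complete both $J^-$ and $J'$, gaining $w(J^-) + w(J') \ge (w(J) - \epsilon') + w(J) = 2w(J) - \epsilon'$. For $ALG$: algorithm $A$ is running $J$ and can switch to $J'$ (same weight) or stay — either way it completes at most one interval of weight $w(J)$ from this tail, since $J$ and $J'$ overlap and nothing else arrives; algorithm $B$ is running $I$, which overlaps $J$ and hence overlaps $J'$ as well (because $r(I) \le r(J)$ and $J'$ starts before $J$ ends, while $I$ ends after $J$ starts — one must check $I$ and $J'$ genuinely overlap, but since all intervals have equal length and $J'$ arrives before $J$ finishes, $r(I) \le r(J) < r(J') < d(J) \le d(I)$ gives $r(J') < d(I)$, so they overlap), so $B$ can complete at most one of $I$ or $J'$, of weight at most $w(J)$. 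Hence the expected gain of $ALG$ on this tail is at most $p\cdot w(J) + q\cdot w(J) = w(J)$. Since every interval has weight at least $1$, $\frac{2w(J) - \epsilon'}{w(J)} \ge 2 - \epsilon' \ge 2 - \max\{\epsilon,\epsilon'\}$, which is the claimed bound.

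The one subtlety — and the main thing to get right — is the overlap bookkeeping: I must be sure that $J'$ overlaps \emph{both} the interval $B$ is currently holding ($I$) and the interval $A$ is holding ($J$), so that neither algorithm can "bank" one of its in-progress intervals and also grab $J'$. The equal-length assumption plus $r(I) \le r(J)$ and the placement of $r(J')$ strictly before $d(J)$ handle this cleanly, but it is worth stating explicitly since the lemma allows $S = S'$ and even $I = J$, in which case $B$ and $A$ are both effectively on the same weight class and the argument degenerates gracefully. A second minor point is that $J^-$ must actually lie in $S'$ and end before $r(J')$; this is guaranteed by $w(J) > v'$ (so $J^-$ exists) and by choosing $r(J')$ in the window $(d(J^-), d(J))$, which is nonempty because consecutive intervals in a $SET$ are released close together with the last one arriving before the first finishes. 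With these observations the proof is a direct computation and needs no induction.
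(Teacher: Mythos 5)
There is a genuine gap, and it sits exactly at the point you flag as "the main thing to get right." Your overlap chain $r(I) \le r(J) < r(J') < d(J) \le d(I)$ is backwards: with equal lengths, $r(I) \le r(J)$ gives $d(I) \le d(J)$, not $d(J) \le d(I)$. Consequently, placing $J'$ in the window $(d(J^-), d(J))$ does \emph{not} guarantee $r(J') < d(I)$, so $B$ may be able to finish $I$ and then also complete $J'$. The failure is not hypothetical: the lemma explicitly allows $S = S'$, so take $I = J^-$ (the interval of $S'$ immediately preceding $J$). Then $d(I) = d(J^-)$, the entire window $(d(J^-), d(J))$ lies after $d(I)$, and after your single release the algorithm gains in expectation about $p\,w(J) + q\,(w(J^-) + w(J)) \approx (1+q)\,w(J)$, while $OPT$ gets at most about $w(J^-) + w(J') \approx 2w(J)$ (it can complete only one interval of the pairwise-overlapping set $S$). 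The ratio is then roughly $2/(1+q) \le 4/3$, far below $2 - \epsilon'$. The same collapse happens whenever $r(I) \le r(J^-)$, and it cannot be repaired inside your construction: if you instead place $J'$ before $d(I)$ (i.e., in $(d(I^-), d(I))$, as the paper does), then $J^-$ no longer finishes before $J'$, $OPT$ only collects $w(I^-) + w(J') \approx w(I) + w(J)$, and your bound $ALG \le w(J)$ is no longer enough when $w(I) < w(J)$.

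This is why the paper's proof looks different from yours and why it uses hypotheses you never touch. It releases $J'$ in $(d(I^-), d(I))$ so that $J'$ overlaps both $I$ and $J$ (using $d(I) \le d(J)$), lets $OPT$ take $I^-$ and $J'$ for about $w(I)+w(J)$, and then crucially uses the probability asymmetry $p \le q$ together with $w(I) \le w(J)$ to bound $ALG$ by $q\,w(I) + p\,w(J) \le (w(I)+w(J))/2$ when $B$ sticks with $I$; and when $B$ aborts $I$ for $J'$, it does not stop but invokes Lemma~\ref{lem:YEarly} with a further released interval (this is the "some subsequently arrived intervals" in the statement, and the source of the $\epsilon'$ term in $2-\max\{\epsilon,\epsilon'\}$). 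Your proposal uses neither $p \le q$ nor any follow-up release, which is a sign that it is proving a stronger single-shot claim than is actually true in the configuration $d(I) \le d(J^-)$.
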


\begin{proof}
An interval $J'$ with the same weight as $J$ is released
between $d(I^-)$ and $d(I)$.
See Figure \ref{fig:aheavy}.
Clearly there is no point in algorithm $A$ aborting $J$ to
start $J'$.
If algorithm $B$ continues with $I$,
then no more intervals arrived.
$OPT$ gains $w(I^-) + w(J') \geq w(I) + w(J) - \epsilon$
while $ALG$ gains $q w(I) + p w(J)$ $\leq (w(I) + w(J))/2$.
Since $(w(I)+w(J))/2 \geq 1$,
we have $\frac{w(I)+w(J)-\epsilon}{(w(I)+w(J))/2} \geq 2-\epsilon$.
If $B$ aborts $I$ and starts $J'$,
then using Lemma \ref{lem:YEarly}, we can see that
$OPT$ gains at least $2-\epsilon'$ of what $ALG$ gains
on an input consisting of $S$, $S'$ and the subsequently
arrived intervals specified in Lemma \ref{lem:YEarly}.
\end{proof}

\begin{figure}
\centerline{ \epsfysize=0.9in \epsffile{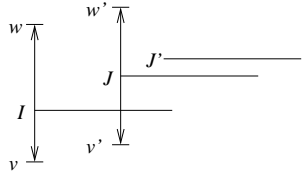} }
\caption{(Lemma \ref{lem:AHeavy}) $A$ processing a heavier interval ($J$)}
\label{fig:aheavy}
\end{figure}

\begin{lemma} \label{lem:ABSeparate}
Suppose at some moment, 
algorithm $A$ is processing an interval $I$ from
a set $S = SET(v, w, \epsilon)$ 
while algorithm $B$ is processing an interval $J$ from
another set $S' = SET(v', w', \epsilon')$, where $w(I)>v$,
$w(J)>v'$, $w(I) \leq w(J)$
and the intervals of $S'$ arrive between $d(I^-)$ and $d(I)$.
Then $OPT$ gains at least $2-(\epsilon+\epsilon')$ of $ALG$'s
gain on an input consisting of
$S$, $S'$ and some subsequently arrived intervals.
in the worst case.
\end{lemma}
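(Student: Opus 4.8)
The statement to prove is Lemma~\ref{lem:ABSeparate}: when $A$ processes $I \in S = SET(v,w,\epsilon)$ and $B$ processes $J \in S' = SET(v',w',\epsilon')$ with $w(I) \le w(J)$ and all of $S'$ arriving inside $[d(I^-), d(I))$, then $OPT$ beats $ALG$ by a factor $2-(\epsilon+\epsilon')$. The plan is to mirror the adversary constructions of Lemmas~\ref{lem:YEarly} and~\ref{lem:AHeavy}: release one further interval and then case-split on how $ALG$ (really its two deterministic components) reacts, driving every branch back to a situation already handled. The key observation is that because the whole of $S'$ sits strictly between $d(I^-)$ and $d(I)$, the interval $I^-$ finishes before any interval of $S'$ starts, so $OPT$ can bank $w(I^-)$ and then still pick something from the $S'$ region.

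**The adversary move and the main case.** First I would release an interval $J'$ of weight $w(J')=w(J)$ somewhere in the time window just after the intervals of $S'$ — concretely between $d(J^-)$ and $d(J)$, so that $OPT$ can take $J^-$ (finishing just before $J'$ starts, since $J^- \in S'$ lies inside $[d(I^-),d(I))$) and then $J'$. Then no further intervals arrive. In the main case, where neither $A$ nor $B$ abandons its current interval to chase $J'$: $A$ completes some interval of $S$ of weight at least... well, $A$ might abort $I$ for a heavier one in $S$, but in the cleanest sub-case $A$ gains at most $w$-range value while $B$ gains at most $w(J)$; meanwhile $OPT$ gains $w(I^-) + w(J^-) + w(J') \ge$ roughly $w(I) + 2w(J) - (\epsilon+\epsilon')$ — actually I should be careful and just claim $OPT$ gains at least $w(I^-)+w(J')\ge w(I)+w(J)-\epsilon$ plus possibly more, while $ALG$ gains $p\cdot w(I) + q\cdot w(J)$ or a convex combination bounded by $(w(I)+w(J))/2$ when probabilities are unequal one must use $pw(I)+qw(J) \le \max\{w(I),w(J)\} \cdot$ something — here I would lean on the $p \le q$ assumption from the theorem statement together with $w(I)\le w(J)$ to bound $ALG$'s expected gain by at most $w(J)$, and then the extra $w(I^-)$ banked by $OPT$ forces the ratio above $2-(\epsilon+\epsilon')$ using $w(I)\ge 1$.

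**Reducing the remaining cases.** If instead $A$ aborts $I$ to start $J'$: then $A$ and $B$ are now in the configuration of Lemma~\ref{lem:YEarly} or Lemma~\ref{lem:AHeavy} — $B$ is still on $J\in S'$ with $w(J)>v'$, and $A$ is on $J'$ with $w(J')=w(J)$ and $r(J')>r(J)$ (since $J'$ is released after all of $S'$), so Lemma~\ref{lem:YEarly} applies with roles $(B,A)$, set $S'$, giving a $(2-\epsilon')$ guarantee on a continuation. If $B$ aborts $J$ to start $J'$: then $B$ is now on an interval outside $S$ and $S'$, $A$ still on $I\in S$, and I would release the Lemma~\ref{lem:YEarly} continuation relative to set $S$ — noting $r(J')>r(I)$ is not guaranteed here, so instead the cleaner route is to observe that once $B$ takes $J'$, $OPT$ has already locked in $J^- \in S'$ and can additionally chase intervals released around $S$, yielding a $(2-\epsilon)$-type bound; I would phrase this as "invoke Lemma~\ref{lem:YEarly} (or~\ref{lem:AHeavy}) on $S$ with the appropriate follow-up intervals." The worst case over all these branches is the weakest of $2-\epsilon$, $2-\epsilon'$, $2-(\epsilon+\epsilon')$, namely $2-(\epsilon+\epsilon')$.

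**Main obstacle.** The delicate point is bookkeeping the unequal probabilities $p,q$ together with the possibility that $A$ (or $B$) upgrades within its own set $S$ (resp.\ $S'$) before $J'$ arrives — so "what $ALG$ gains" is a sum $p\cdot(\text{best $A$ can do}) + q\cdot(\text{best $B$ can do})$, and I must verify that in the main case this is still $\le w(J)$ (or $\le (w(I)+w(J))/2$, whichever I can push through), while $OPT$'s banked $w(I^-)$ is genuinely extra and not already counted. Getting the inequality $\frac{w(I^-)+w(J^-)+w(J')}{p\cdot(\cdot)+q\cdot(\cdot)} \ge 2-(\epsilon+\epsilon')$ to come out uniformly — rather than degrading when, say, $p$ is tiny and $A$ is the one doing well — is the crux, and it is handled exactly by the hypothesis $w(I)\le w(J)$ ensuring $A$'s achievable value is dominated by $B$'s, so the adversary loses nothing by the skew.
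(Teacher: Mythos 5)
There is a genuine gap in your ``main case''. You bound $ALG$'s expected gain by $p\,w(I)+q\,w(J)\le w(J)$, but this is not a valid upper bound: since every interval of $S'$ arrives after $d(I^-)$ and the lengths are equal, $d(J^-)>d(I)$, so the newly released $J'$ (which arrives after $d(J^-)$) only appears \emph{after} $A$ has already finished $I$. Hence $A$ completes $I$ and then $J'$ without abandoning anything, and $ALG$ can gain $p\bigl(w(I)+w(J)\bigr)+q\,w(J)=p\,w(I)+w(J)$, which exceeds $w(J)$. This extra $p\,w(I)$ term is precisely the point of the lemma and is why $p\le 1/2$ matters: the correct comparison is $OPT\ge w(I^-)+w(J^-)+w(J')\ge w(I)+2w(J)-(\epsilon+\epsilon')$ against $ALG\le p\,w(I)+w(J)\le w(I)/2+w(J)$, and then the ratio is at least $2-(\epsilon+\epsilon')$ because $w(I)/2+w(J)\ge 1$. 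Your fallback estimate (using only $w(I^-)+w(J')$ for $OPT$ against $w(J)$ for $ALG$) fails on both sides: the $ALG$ bound is invalid as just explained, and even granting it, $\bigl(w(I)+w(J)-\epsilon\bigr)/w(J)$ can be arbitrarily close to $1$ when $w(I)\ll w(J)$, so the ``banked $w(I^-)$'' alone does not force a ratio of $2-(\epsilon+\epsilon')$; you genuinely need all three of $I^-$, $J^-$, $J'$ on $OPT$'s side together with the $p\le 1/2$ bound on $ALG$'s side.

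A secondary issue is that your case analysis on whether $A$ or $B$ aborts for $J'$, with further recursive invocations of Lemmas~\ref{lem:YEarly} and~\ref{lem:AHeavy}, is both unnecessary and not fully justified. Since $w(J')=w(J)$, $J'$ arrives before $d(J)$, and $J'$ arrives after $d(I)$, no response of $A$ or $B$ to the single interval $J'$ can beat the bound $p\bigl(w(I)+w(J)\bigr)+q\,w(J)$ (aborting $I$ or $J$ for $J'$ only loses value), so the adversary can simply stop after releasing $J'$ --- this is the paper's one-line accounting. Moreover, in your branch where $B$ aborts $J$ to take $J'$, the proposed reduction ``invoke Lemma~\ref{lem:YEarly} (or~\ref{lem:AHeavy}) on $S$'' does not meet those lemmas' hypotheses: relative to $S$, the interval $J'$ arrives later than $I$ but has weight $w(J)\ge w(I)$, whereas Lemma~\ref{lem:YEarly} needs the later-arriving interval to be no heavier, and you leave the follow-up construction unspecified; as written this branch is incomplete.
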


\begin{proof}
An interval $J'$ with the same weight as $J$ is
released between $d(J^-)$ and $d(J)$
and no more intervals are released.
See Figure \ref{fig:abseparate}.
OPT completes $I^-$ in $S$, $J^-$ in $S'$ and $J'$.
So it gains $w(I^-) + w(J^-) + w(J')$
$\geq w(I) + 2w(J) - (\epsilon+\epsilon')$.
On the other hand, $A$ completes $I$ and then $J'$
while $B$ completes $J$.
Thus $ALG$ gains at most $p(w(I)+w(J)) + qw(J) = pw(I) + w(J)$
$\leq (1/2)w(I) +w(J)$.
Since $w(I)/2 + w(J) \geq 1$, we have
$\frac{w(I)+2w(J)-(\epsilon+\epsilon')}{w(I)/2+w(J)}$
$\geq 2-(\epsilon+\epsilon')$.
\end{proof}

\begin{figure}
\centerline{ \epsfysize=0.9in \epsffile{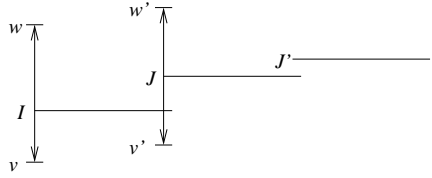} }
\caption{(Lemma \ref{lem:ABSeparate}) $A$ and $B$ processing
almost non-overlapping intervals}
\label{fig:abseparate}
\end{figure}

\subsection{Constructing the Sequence of Intervals}

Our lower bound proof takes a number of steps.
In each step, the adversary will release some set of intervals
$SET(\cdot,\cdot,\cdot)$ adaptively according to how $ALG$ reacts
in the previous steps.
In each step, the adversary forces $ALG$ not to finish any interval 
(and hence gain no value) while $OPT$ will gain some.
Eventually, $OPT$ will accumulate at least $2-\delta$ times of
what $ALG$ can gain no matter what $ALG$ does in the last step.

\subsubsection{Step 1}
  \label{subsec:step1}

Let $c = 2 - \delta/2$.
The adversary releases $S_1 = SET(v_1,w_1,\epsilon_1)$
where $v_1$ is some positive real number at least one,
$w_1 = c(q/p)(4/\delta) v_1$ and $\epsilon_1 = \delta/8$.
Denote by $I_1$ and $J_1$, where $w(I_1) \leq w(J_1)$,
the intervals chosen by $ALG$.
We claim that

\begin{lemma} \label{lem:x1>v1}
Both algorithms $A$ and $B$ do not process the smallest-weight
interval in $S_1$, i.e.,
\begin{description}
\item[{\rm (i)}]
$w(I_1) > v_1$ and
\item[{\rm (ii)}]
$w(J_1) > v_1$.
\end{description}
Hence both $I_1^-$ and $J_1^-$ exist.
\end{lemma}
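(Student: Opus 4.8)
The plan is to argue by contradiction: suppose one of the two algorithms processes the smallest-weight interval in $S_1$, and show that this already lets $OPT$ win by a factor of $2-\delta$, which contradicts our assumption that $ALG$ is better than $2-\delta$-competitive. First I would handle part (i). Suppose $w(I_1) = v_1$, i.e., the algorithm contributing probability mass to the lighter choice (call it the $A$-side, running weight $w(I_1)$) is executing the minimum-weight interval of $S_1$. Then the adversary stops: no further intervals are released. Since the heaviest interval of $S_1$ has weight $w_1 = c(q/p)(4/\delta)v_1$, $OPT$ can complete that heaviest interval (it overlaps everything in the set, so $OPT$ picks it out), gaining $w_1$. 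Meanwhile $ALG$'s expected gain is at most $p\cdot w(I_1) + q\cdot w(J_1) \le p v_1 + q w_1$. The ratio is at least $w_1 / (p v_1 + q w_1)$; substituting $w_1 = c(q/p)(4/\delta)v_1$ and using $v_1 \ge 1$ and that $\epsilon_1 = \delta/8$ is negligible, a short computation shows this exceeds $2-\delta$. The key point driving the choice of constants is exactly that $w_1$ is taken enormously larger than $v_1$ (by the factor $c(q/p)(4/\delta)$), so the $p v_1$ term is swamped and the ratio is essentially $w_1/(q w_1) = 1/q$; but $1/q$ alone need not beat $2$, so in fact I expect the correct adversary move here to be the one from Lemma~\ref{lem:YEarly} or a release of a twin interval $J_1'$ just before $d(J_1)$, forcing $OPT$ to collect $J_1^- $ and $J_1'$ for a gain near $2w_1$ while $ALG$ gets only $\le w_1$; I would pick whichever of these cleanly yields the $2-\delta$ bound, most likely the twin-interval argument as in Lemma~\ref{lem:YEarly}.

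For part (ii), suppose instead $w(J_1) = v_1$ as well (equivalently $w(I_1) = w(J_1) = v_1$, since $w(I_1)\le w(J_1)=v_1$ forces $w(I_1)=v_1$ too, as all weights are $\ge v_1$). Then both algorithms are on the minimum interval. Now the adversary can release the twin $J_1'$ of weight $w_1$-range intervals, or more directly: since neither algorithm holds anything heavier than $v_1$ but the set contains an interval of weight $w_1$, release a fresh interval $J'$ of weight $w_1$ arriving between $d(I_1^-)$ and $d(I_1)$; $OPT$ completes $I_1^-$ (weight $\ge v_1-\epsilon_1$) and then $J'$, gaining at least $v_1 + w_1 - \epsilon_1$, while $ALG$ gains at most $\max\{v_1, w_1\} = w_1$ (it can switch to $J'$ but can complete only one interval). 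Again the ratio $(v_1 + w_1 - \epsilon_1)/w_1 \approx 1 + v_1/w_1$ is close to $1$, not $2$ — so this crude move is insufficient, and I would instead invoke Lemma~\ref{lem:YEarly} directly (with $B$ playing the role of the algorithm on interval $J\in S_1$ of weight $>v_1$), or release a whole new $SET$ to re-trigger one of Lemmas~\ref{lem:YEarly}--\ref{lem:ABSeparate}. The cleanest route is: if $w(J_1)=v_1$, release a new set $SET(v_1, w_1, \epsilon_1)$ again (the intervals arriving while $ALG$ still holds the old minimum interval), which places $ALG$ in exactly the configuration of Lemma~\ref{lem:AHeavy} or Lemma~\ref{lem:ABSeparate}, and those lemmas deliver a ratio of $2 - O(\epsilon_1) > 2 - \delta$.

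The main obstacle, and the step I expect to require the most care, is making the constant-chasing airtight: one must verify that with $v_1 \ge 1$, $\epsilon_1 = \delta/8$, $c = 2-\delta/2$, and $w_1 = c(q/p)(4/\delta)v_1$, every ratio that arises is genuinely $\ge 2-\delta$ \emph{uniformly in} $p,q$ with $0 < p \le q < 1$ (in particular as $p \to 0$, where $q/p \to \infty$ but also $ALG$'s mass on the heavy side $\to 0$). The role of the $q/p$ factor in $w_1$ is presumably to guarantee that the eventual invocation of Lemmas~\ref{lem:YEarly}--\ref{lem:ABSeparate} — which bound $ALG$'s gain by something like $p\cdot(\text{heavy}) + q\cdot(\text{heavy})$ or $q w(I) + p w(J)$ — still leaves enough slack after accounting for the value $ALG$ could have banked from $S_1$ in earlier steps; so I would be careful to state part (i) and (ii) as "if this happens, the adversary can finish the construction here and now with ratio $\ge 2-\delta$", deferring to the already-proved Lemmas~\ref{lem:YEarly}, \ref{lem:AHeavy}, and \ref{lem:ABSeparate} rather than re-deriving each inequality from scratch, and only doing the one genuinely new estimate (the comparison of $w_1$ against $p v_1 + q \cdot(\text{something})$) explicitly.
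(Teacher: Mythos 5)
Your proposal has the right raw ingredients (the ``adversary stops and takes the heaviest interval'' move, and the twin--interval move of Lemma~\ref{lem:YEarly}), but it does not assemble them into a working proof, and the places where you defer to ``whichever cleanly yields the bound'' are exactly where the content lies. For part (ii) the paper's argument is a simple dichotomy that you brushed past: $ALG$'s expected gain is at most $p\,w(I_1)+q\,w(J_1)\le w(J_1)$, and $OPT$ can always collect the heaviest interval, of weight $w_1$; so either $w(J_1)>w_1/c$, or the adversary stops and the ratio is at least $w_1/(w_1/c)=c=2-\delta/2>2-\delta$. Since $v_1=(p/q)(\delta/4)(w_1/c)<w_1/c$, this already gives $w(J_1)>v_1$ --- and in fact the much stronger bound $w(J_1)>w_1/c$. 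You considered essentially this computation when discussing part (i), but you bounded $w(J_1)$ by $w_1$ and concluded the ratio is ``essentially $1/q$, not enough''; that misses the dichotomy (one does not need the ratio to be large when $w(J_1)$ is large --- in that case one simply concludes the inequality one wants). Your alternative routes for (ii) do not work: Lemma~\ref{lem:YEarly} and Lemma~\ref{lem:AHeavy} both require the interval held in the set to have weight strictly above the set's minimum (so that $J^-$ exists), which is precisely what is being denied in the contradiction hypothesis $w(J_1)=v_1$; and ``release a fresh $SET$ and invoke Lemma~\ref{lem:AHeavy} or \ref{lem:ABSeparate}'' is not justified, since $ALG$ may react to the new set in ways those lemmas do not cover --- avoiding exactly this shortcut is why the paper needs the whole multi-step construction with Woeginger's lemma.

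For part (i) you correctly guess the twin-interval release of $J_1'$, but your accounting is off and the gap is not closable in the order you chose. When $w(I_1)=v_1$, algorithm $A$ finishes $I_1$ and can then complete $J_1'$, so $ALG$'s expected gain is $p\bigl(w(I_1)+w(J_1')\bigr)+q\,w(J_1)=p\,v_1+w(J_1)$, not ``$\le w_1$'', and $OPT$'s gain is about $2\,w(J_1)$, not ``near $2w_1$'' (nothing forces $J_1$ to be the heaviest interval). The ratio beats $2-\delta$ only because $p\,v_1\le(\delta/4)\,w(J_1)$, and that inequality needs the lower bound $w(J_1)>w_1/c$ from part (ii) combined with the choice $w_1=c(q/p)(4/\delta)v_1$. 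You acknowledge in your last paragraph that the $q/p$ factor ``presumably'' serves this purpose, but you never derive the inequality, and by proving (i) before (ii) you deprive yourself of the very bound the argument hinges on. So the proposal, as written, has a genuine gap in both parts: part (ii) lacks any valid argument, and part (i) lacks the quantitative step that makes the twin-interval move succeed.
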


\begin{proof}
We first prove part (ii).
By Lemma \ref{lem:AHeavy}, we assume that $I_1$ is processed by $A$
and $J_1$ is processed by $B$.
%Otherwise $OPT$ gains at least $2-\delta$ of $ALG$'s.
So, the expected gain by $ALG$ is $p w(I_1) + q w(J_1)$ $\leq w(J_1)$.
Then we deduce that $w(J_1) > w_1/c$ or else the adversary stops,
$OPT$ schedules the heaviest interval in $S_1$ (of weight $w_1$)
so that it gains at least $c > 2-\delta$ times the expected gain
by $ALG$. 
Since $v_1 = (p/q)(\delta/4)(w_1/c)$ $< w_1/c$,
we have $w(J_1) > v_1$.

To prove part (i), we assume to the contrary that $w(I_1)=v_1$.
Then an interval $J'_1$ with the same weight as $J_1$
is released between $d(J_1^-)$ and $d(J_1)$.
See Figure \ref{fig:step1}(a).
$OPT$ can gain $w(J^-_1) + w(J'_1) \geq 2w(J_1) - \epsilon_1$
by executing $J^-_1$ in $S_1$ and then $J'_1$.
Upon finishing $I_1$, algorithm $A$ can go on to finish $J'_1$.
The expected gain of $ALG$ is at most
$p (w(I_1) + w(J'_1)) + q w(J_1)$
$= pw(I_1) + w(J_1)$.
Note that 
$v_1 = (p/q)(\delta/4)(w_1/c) \leq (\delta/4) w(J_1)$
and $p w(I_1) = p v_1$ $< (\delta/4) w(J_1)$.
Thus $ALG$'s gain is at most
$p w(I_1) + w(J_1)$ $\leq (1 + \delta/4) w(J_1)$. 
So $OPT$'s gain is more than $2-\delta$ times that of
$ALG$'s.
\end{proof}

\begin{figure}
\centerline{ \epsfysize=1.1in \epsffile{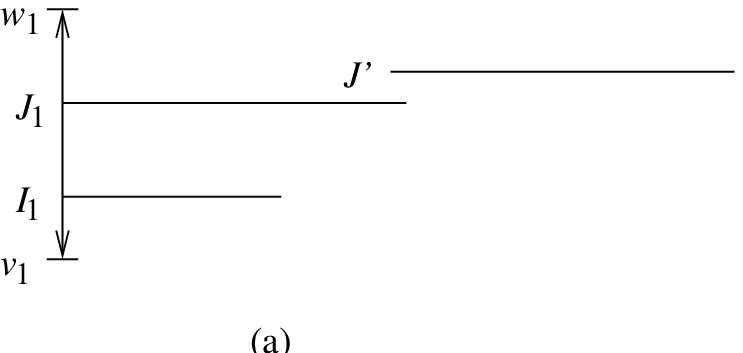} 
             \hspace{0.8in}
             \epsfysize=1.40in \epsffile{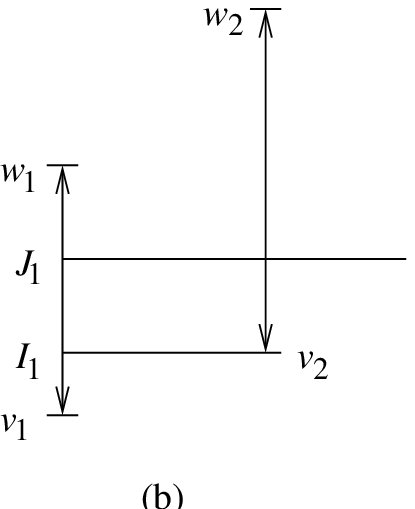} }
\caption{Step 1. (left) Lemma \ref{lem:x1>v1}, 
  (right) Lemma \ref{lem:mustmoveon} }
\label{fig:step1}
\end{figure}

\begin{lemma} \label{lem:case1}
$w(J_1) < 2 w(I_1)$.
\end{lemma}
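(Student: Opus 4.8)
\textbf{Proof proposal for Lemma~\ref{lem:case1}.}
The plan is to argue by contradiction: suppose $w(J_1) \geq 2w(I_1)$, and show that the adversary can already force a ratio better than $2-\delta$ without proceeding to later steps, contradicting the assumption that $ALG$ is $(2-\delta)$-competitive. By Lemma~\ref{lem:AHeavy} we may assume $A$ is processing $I_1$ and $B$ is processing $J_1$, and by Lemma~\ref{lem:x1>v1} both $I_1^-$ and $J_1^-$ exist, so the adversary has the freedom to release a new interval $J_1'$ of the same weight as $J_1$ somewhere between $d(J_1^-)$ and $d(J_1)$, exactly as in the proof of Lemma~\ref{lem:x1>v1}(i).

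First I would release $J_1'$ with $w(J_1')=w(J_1)$ in the window $(d(J_1^-), d(J_1))$ and stop. Then $OPT$ executes $J_1^-$ followed by $J_1'$, gaining $w(J_1^-)+w(J_1') \geq 2w(J_1)-\epsilon_1$. On the $ALG$ side, $A$ finishes $I_1$ and can then start $J_1'$ (the best it can do with branch $A$), while $B$ finishes $J_1$ (aborting $J_1$ for $J_1'$ gains nothing). So the expected gain of $ALG$ is at most $p(w(I_1)+w(J_1')) + q\,w(J_1) = p\,w(I_1) + w(J_1)$. Now invoke the standing assumption $w(J_1) \geq 2w(I_1)$, i.e. $w(I_1) \leq w(J_1)/2$, which gives $p\,w(I_1)+w(J_1) \leq (p/2 + 1)w(J_1) \leq (3/2)\,w(J_1)$ since $p \leq 1$. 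Hence the ratio is at least
\[
\frac{2w(J_1)-\epsilon_1}{(3/2)\,w(J_1)} \geq \frac{2 - \epsilon_1}{3/2} = \frac{4}{3}\Bigl(1 - \frac{\epsilon_1}{2}\Bigr),
\]
using $w(J_1) \geq v_1 \geq 1$.

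The catch is that $4/3$ is \emph{not} bigger than $2-\delta$, so this crude bound is insufficient — the main obstacle is squeezing out a ratio that genuinely exceeds $2-\delta$. The fix is to use the better lower bound on $w(I_1)$ coming from Step~1 itself: we are free to also exploit the bound $w(I_1)$ cannot be too small relative to $w(J_1)$, or more precisely to re-examine which of $A,B$ finishes what. Actually the cleaner route: if $w(J_1) \geq 2w(I_1)$, then $I_1$ contributes so little that we should instead mimic Lemma~\ref{lem:x1>v1}(i)'s calculation, where the key numerical facts are $v_1 = (p/q)(\delta/4)(w_1/c)$ and $w(J_1) > w_1/c$ (established in the proof of Lemma~\ref{lem:x1>v1}), giving $p\,w(I_1) \leq p\,v_1 \leq (\delta/4)\,w(J_1)$ — but this only bounds the $I_1$ term when $w(I_1)$ is near $v_1$, which is not guaranteed here. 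So the genuinely correct argument must show that $w(I_1)$ being as large as $w(J_1)/2$ is itself bad for $ALG$: namely $A$ and $B$ then carry two substantial, overlapping intervals, and one applies Lemma~\ref{lem:AHeavy} (with $S=S'=S_1$, $I=I_1$, $J=J_1$, $w(I_1)\le w(J_1)$, $r(I_1)\le r(J_1)$) to conclude $OPT$ already gains at least $2-\epsilon_1 > 2-\delta$ of $ALG$'s gain, a contradiction. Thus the real content is: \emph{either} $w(J_1)<2w(I_1)$, \emph{or} Lemma~\ref{lem:AHeavy} applies and we are already done; since we are assuming $ALG$ is $(2-\delta)$-competitive, the first alternative must hold. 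I would therefore structure the proof as: assume $w(J_1)\ge 2w(I_1)$; check the hypotheses of Lemma~\ref{lem:AHeavy} are met by the Step~1 configuration (using Lemma~\ref{lem:x1>v1} for $w(I_1)>v_1$, $w(J_1)>v_1$); apply it to get ratio $\geq 2-\epsilon_1 = 2-\delta/8 > 2-\delta$; contradiction.
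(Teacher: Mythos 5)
Your proposed final step misapplies Lemma~\ref{lem:AHeavy}, and that is where the proof breaks. In that lemma the two roles are not interchangeable: the heavier interval $J$ must be the one held by $A$, the algorithm chosen with the \emph{smaller} probability $p\le 1/2$, while $B$ holds the lighter $I$; its proof rests on the inequality $q\,w(I)+p\,w(J)\le (w(I)+w(J))/2$, which is exactly what fails when the heavier interval sits with $B$. But at the point in Step~1 where Lemma~\ref{lem:case1} is needed, Lemma~\ref{lem:AHeavy} has already been used (inside the proof of Lemma~\ref{lem:x1>v1}) to eliminate the configuration ``$A$ holds the heavier interval,'' so without loss of generality $A$ is running the lighter $I_1$ and $B$ the heavier $J_1$ --- an assumption you yourself state in your first paragraph. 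With that assignment the hypotheses of Lemma~\ref{lem:AHeavy} (which require $B$ on $I$ and $A$ on $J$) are not satisfied, so the claimed ratio $2-\epsilon_1$ and the contradiction do not follow. A sanity check confirms something must be wrong: your proposed application never uses the hypothesis $w(J_1)\ge 2w(I_1)$, so if it were legitimate the adversary could finish off $ALG$ in Step~1 in every case, and the entire multi-step construction of Section~5.2 would be unnecessary. The true obstruction is the one your own first calculation exposes: with $ALG$'s expected gain about $p\,w(I_1)+w(J_1)$ and $OPT$'s about $2w(J_1)$, a single released copy of $J_1$ gives a ratio that can be far below $2-\delta$ precisely because $w(I_1)$ may be as large as $w(J_1)/2$.

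The paper's actual proof of this lemma is the whole of Section~\ref{subsec:ylarge}, and it is a substantially different construction. Since $w(I_1)\le w(J_1)/2$, the term $p\,w(I_1)$ is affordable, so the adversary lets $A$ finish $I_1$ and instead releases an entire set $S_u=SET(0,\,u\,w(J_1),\,\epsilon)$ between $d(J_1^-)$ and $d(J_1)$, with the parameter $u\ge 1$ tuned to $u=(1-p+p\delta/4)/(1-2p+p\delta/2)$. It then branches on $B$'s response: if $B$ continues with $J_1$, $OPT$ takes $J_1^-$ plus the heaviest interval of $S_u$ and inequality (1) on $u$ yields ratio at least $2-\delta$; if $B$ aborts into $S_u$, a further copy-release combined with Lemmas~\ref{lem:YEarly} and~\ref{lem:AHeavy} and inequality (2) on $u$ closes that case, with the hypothesis $w(J_1)\ge 2w(I_1)$ entering exactly to absorb $A$'s contribution $p(w(I_1)+u\,w(J_1))$. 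Repairing your argument essentially requires reproducing this construction; neither the single-copy release nor a direct appeal to Lemma~\ref{lem:AHeavy} can substitute for it.
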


We defer this to Section \ref{subsec:ylarge},
where we prove that if $w(J_1) \geq 2 w(I_1)$ then the adversary
can force the competitive ratio to be at least $2-\delta$.

The adversary then releases a new set of intervals 
$S_2 = SET(v_2,w_2,\epsilon_2)$
such that all these intervals arrive between
$d(I^-_1)$ and $d(I_1)$,
where $v_2 = w(I_1)$,
$w_2 = \max\{ c(p w(I_1) + q w(J_1)) - w(I_1), v_2\}$
and $\epsilon_2 = \epsilon_1/2$.
See Figure \ref{fig:step1}(b).

\begin{lemma} \label{lem:mustmoveon}
Upon the release of $S_2$, 
both $A$ and $B$ must abort their current intervals in $S_1$
and start some intervals $I_2$ and $J_2$ respectively in $S_2$.
Moreover, $v_2 < w(I_2) < w(J_2)$.
\end{lemma}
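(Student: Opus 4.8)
The plan is to show both claims by contradiction, invoking the simple-case lemmas from Section~3.1 of the lower-bound construction. First I would argue that neither $A$ nor $B$ can stay with its current interval in $S_1$. Suppose $A$ keeps processing $I_1$ after $S_2$ is released. The key observation is that every interval in $S_2$ arrives strictly between $d(I_1^-)$ and $d(I_1)$, and all of them overlap one another; in particular the intervals of $S_2$ are positioned relative to $I_1$ exactly as in Lemma~\ref{lem:ABSeparate} (with the roles $S\leftarrow S_1$, $I\leftarrow I_1$, $S'\leftarrow S_2$, $J\leftarrow J_2$, where $J_2$ is whatever interval of $S_2$ the other algorithm is running). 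Since $w(J_2)\geq v_2=w(I_1)=w(I_1)\geq w(I_1)$ — more precisely $w(J_2)\geq v_2 = w(I_1)$, so $w(I_1)\le w(J_2)$ — Lemma~\ref{lem:ABSeparate} (or Lemma~\ref{lem:YEarly} if $B$ too stays in $S_1$, via the configuration there) applies and gives $OPT$ a gain of at least $2-(\epsilon_1+\epsilon_2)\ge 2-2\epsilon_1 = 2-\delta/4 > 2-\delta$ times $ALG$'s gain, a contradiction. The symmetric argument handles the case where $B$ stays with $J_1$. Hence both $A$ and $B$ abort and each starts some interval $I_2$, $J_2$ in $S_2$, and by relabelling we may assume $w(I_2)\le w(J_2)$ and that $A$ runs $I_2$, $B$ runs $J_2$ (using Lemma~\ref{lem:AHeavy} to justify the assignment, exactly as in the proof of Lemma~\ref{lem:x1>v1}).

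Next I would establish $v_2 < w(I_2)$, i.e.\ neither algorithm picks the lightest interval of $S_2$. This mirrors Lemma~\ref{lem:x1>v1}(i): if $w(I_2)=v_2$, release an interval $J_2'$ of weight $w(J_2)$ between $d(J_2^-)$ and $d(J_2)$ and stop. Then $OPT$ completes $I_1^-$ (which lies in $S_1$ and finishes before any $S_2$ interval starts, since all of $S_2$ arrives after $d(I_1^-)$), then $J_2^-$, then $J_2'$, gaining at least $w(I_1)+2w(J_2)-(\epsilon_1+\epsilon_2)$. Meanwhile $A$ can at best complete $I_2$ then $J_2'$ and $B$ completes $J_2$, so $ALG$ gains at most $p(w(I_2)+w(J_2))+qw(J_2) = p\,w(I_2)+w(J_2) = p\,v_2 + w(J_2) = p\,w(I_1)+w(J_2)$. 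Using $v_2 = w(I_1)$ and the lower bound $w(J_2) \ge w_2/c \ge (p\,w(I_1)+q\,w(J_1))-w(I_1)/c \ge \ldots$ — the bookkeeping here is routine — one checks $p\,w(I_1)$ is a $\delta/4$-fraction of $w(J_2)$, so $ALG\le (1+\delta/4)w(J_2)$ and the ratio exceeds $2-\delta$, contradiction. I also need to verify that $w_2/c$ is a genuine lower bound on $w(J_2)$: if $w(J_2)\le w_2/c$ then the adversary stops and $OPT$ schedules the heaviest interval of $S_2$ (weight $w_2$) plus the heaviest available earlier interval, winning — this is the same "stop early" move as in Lemma~\ref{lem:x1>v1}(ii), and the definition $w_2 = \max\{c(p\,w(I_1)+q\,w(J_1))-w(I_1),\,v_2\}$ is chosen precisely so this works while keeping $w_2\ge v_2$ so that $S_2$ is non-trivial.

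Finally, $w(I_2) < w(J_2)$ (strict) follows from Lemma~\ref{lem:AHeavy}: if $w(I_2)=w(J_2)$ with $A$ on the later-or-equal-arriving one, that lemma already yields a $2-\epsilon_2$ ratio; after possibly swapping names we may assume strict inequality, or else we are done. The main obstacle I anticipate is not the structural argument — that is a faithful repetition of the Step~1 lemmas one level down — but getting the inequality chains involving $w_2$, $\epsilon_1$, $\epsilon_2$, $p$, $q$, $c$ and $\delta$ to line up, since $w_2$ is defined through a max and the constants $c=2-\delta/2$, $\epsilon_1=\delta/8$ are tuned so that accumulated error terms like $\epsilon_1+\epsilon_2+\cdots$ stay below $\delta$ across all steps; one must be careful that the slack used here does not consume more than its share of the total $\delta$ budget. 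I would handle this by tracking, as in the paper's later steps, that the error introduced in step $k$ is $O(\epsilon_1/2^{k-1})$ so the geometric sum is bounded by $2\epsilon_1=\delta/4<\delta$.
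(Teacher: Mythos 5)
The central gap is the case in which both $A$ and $B$ ignore $S_2$ and continue with $I_1$ and $J_1$. You dispose of it with ``Lemma~\ref{lem:YEarly} \ldots via the configuration there'', but that lemma cannot be invoked: its hypotheses require the lighter interval to arrive strictly later than the heavier one and to come from outside the heavier interval's set, whereas here both $I_1$ and $J_1$ lie in $S_1$ and the lighter $I_1$ arrives no later than $J_1$ (lighter intervals arrive earlier within a $SET$). In fact none of the three preparatory lemmas covers this configuration ($A$ holds the lighter interval, so Lemma~\ref{lem:AHeavy} is out too); it is exactly the case for which the adversary tuned $w_2$, and the paper handles it by a direct computation that your proposal never makes: if $ALG$ keeps both intervals, its expected gain is exactly $pw(I_1)+qw(J_1)$, while $OPT$ completes $I_1^-$ and then the heaviest interval of $S_2$, gaining $w(I_1^-)+w_2 \ge c\bigl(pw(I_1)+qw(J_1)\bigr)-\epsilon_1 \ge (2-\delta)\bigl(pw(I_1)+qw(J_1)\bigr)$, using $w_2 \ge c(pw(I_1)+qw(J_1))-w(I_1)$ and the assumption that weights are at least $1$. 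Without this step the first assertion of the lemma is not established. A related problem: the case ``$B$ stays with $J_1$ while $A$ moves into $S_2$'' is not the mirror image of the case you treated, because Lemma~\ref{lem:ABSeparate} is asymmetric in $p$ and $q$ — its calculation needs the algorithm that can finish two intervals to be the one chosen with probability $p\le 1/2$; with the roles reversed $ALG$'s gain becomes $qw(I)+w(J)$, which for $q$ near $1$ pushes the ratio below $2$. The paper instead uses Lemma~\ref{lem:AHeavy} to force $w(I_2)<w(J_1)$ and then applies Lemma~\ref{lem:YEarly}, which is now legitimate since $I_2\notin S_1$ and $r(I_2)>r(J_1)$.

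Your argument for $v_2<w(I_2)$ also has a quantitative flaw. The bound $pw(I_1)\le(\delta/4)\,w(J_2)$ that you import from Lemma~\ref{lem:x1>v1}(i) is an artifact of Step~1, where $v_1$ was deliberately chosen as $(p/q)(\delta/4)(w_1/c)$, negligible compared with $w_1/c$. At Step~2 the analogue fails: $v_2=w(I_1)$ and $w_2/c=pw(I_1)+qw(J_1)-w(I_1)/c$, so with, say, $p=q=1/2$ and $w(J_1)$ barely above $w(I_1)$, $w(J_2)$ may be as small as roughly $(1-1/c)\,w(I_1)\approx w(I_1)/2=pw(I_1)$, nowhere near a $\delta/4$-fraction. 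The detour is unnecessary: when $w(I_2)=v_2$ the situation is precisely the configuration of Lemma~\ref{lem:ABSeparate} (the paper's phrasing: $ALG$ is ``effectively aborting only $J_1$''), and that lemma's ratio computation uses only $p\le 1/2$ — your own totals, $OPT \ge w(I_1)+2w(J_2)-(\epsilon_1+\epsilon_2)$ against $ALG \le pw(I_1)+w(J_2) \le w(I_1)/2+w(J_2)$, already yield at least $2-\delta$ once the false fraction claim is dropped. So your proposal matches the paper on the cases it covers (ABSeparate for ``$A$ stays, $B$ moves'' and AHeavy for the labelling), but the ``both continue'' case and the asymmetric case need the arguments above before the lemma is proved.
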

\begin{proof}
If $ALG$ ignores $S_2$ and continues with both $I_1$ and $J_1$,
then the expected gain of $ALG$ is $pw(I_1) + qw(J_1)$
while $OPT$ can complete $I^-_1$ and the last interval in $S_2$,
gaining $w(I^-_1) + w_2$ $\geq c(pw(I_1) + qw(J_1)) - \epsilon_1$ 
$= (2-\frac{\delta}{2}) (p w(I_1) + q w(J_1)) - \frac{\delta}{8}$
$\geq (2-\delta) (p w(I_1) + q w(J_1))$.

Suppose algorithm $B$ aborts $J_1$ and starts some $J_2$ in
$S_2$ while algorithm $A$ continues to process $I_1$.
Then by Lemma \ref{lem:ABSeparate}, $ALG$ loses.
Suppose algorithm $B$ continues with $J_1$ but 
$A$ aborts $I_1$ to start some $I_2$ in $S_2$.
By Lemma \ref{lem:AHeavy}, it must be the case that
$w(I_2) <w(J_1)$.
But then by Lemma \ref{lem:YEarly}, $ALG$ loses too.

Based on the above discussion, 
the only remaining sensible response for $ALG$
is to abort both $I_1$ and $J_1$
and start some $I_2$ and $J_2$ in $S_2$.
By Lemma \ref{lem:AHeavy},
we can further assume that $w(I_2) < w(J_2)$.
Moreover, we claim that $w(I_2) > v_2$.
Otherwise, $ALG$ is effectively aborting only $J_1$ but not $I_1$.
Then the construction of inputs given in Lemma \ref{lem:ABSeparate}
can be used to defeat $ALG$.
\end{proof}

This finishes our discussion on Step 1 and we now proceed to Step 2.

\subsubsection{Step $i$}

In general, at the beginning of Step $i \geq 2$,
we have the following situation:
$OPT$ has gained $w(I_1^-) + w(I_2^-) + \cdots w(I_{i-1}^-)$
while $ALG$ has not gained anything yet.
Moreover, $A$ and $B$ of $ALG$ 
are respectively executing $I_{i}$ and $J_{i}$
in $S_i = SET(v_i,w_i,\epsilon_i)$ 
where $v_i = w(I_{i-1})$, 
$w_i = \max\{ c(p w(I_{i-1}) + q w(J_{i-1})) - \sum_{j=1}^{i-1} w(I_j), v_i \}$,
$\epsilon_i = \epsilon_1/2^{i-1}$
and $v_i < w(I_i) < w(J_i)$.
We go through a similar analysis to that in Step 1.

First, as in Lemma \ref{lem:case1}, we have $w(J_i) < 2 w(I_i)$
(the case $w(J_i) \ge 2 w(I_i)$ is handled in the next subsection). 
Next, the adversary releases $S_{i+1} = SET(v_{i+1}, w_{i+1}, \epsilon_{i+1})$ 
in the period between $d(I^-_i)$ and $d(I_i)$
where $v_{i+1} = w(I_i)$,
$w_{i+1} = \max\{ c(p w(I_i) + q w(J_i)) - \sum_{j=1}^i w(I_j), v_{i+1}\}$
and $\epsilon_{i+1} = \epsilon_1/2^i$.
See Figure \ref{fig:stepi}.

\begin{figure}
\centerline{ \epsfysize=1.3in \epsffile{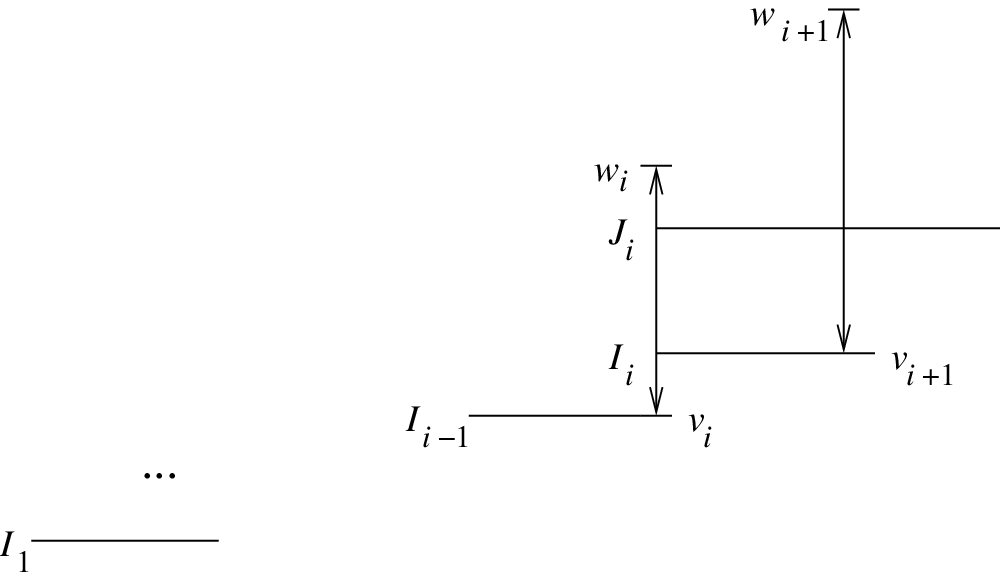} }
\caption{Step $i$}
\label{fig:stepi}
\end{figure}

Similar to Lemma \ref{lem:mustmoveon}, we can prove that

\begin{lemma} 
Upon the release of $S_{i+1}$,
both $A$ and $B$ must abort their current intervals in $S_i$
and start some intervals $I_{i+1}$ and $J_{i+1}$ respectively in $S_{i+1}$.
Moreover, $v_{i+1} < w(I_{i+1}) < w(J_{i+1})$.
\end{lemma}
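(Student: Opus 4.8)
The plan is to mirror the proof of Lemma~\ref{lem:mustmoveon} essentially verbatim, substituting the index $i$ for $1$ and $i+1$ for $2$, and using the inductive hypothesis stated at the beginning of Step $i$ (namely that $A$ and $B$ are executing $I_i$ and $J_i$ in $S_i$ with $v_i < w(I_i) < w(J_i)$, that $OPT$ has accumulated $\sum_{j=1}^{i-1} w(I_j^-)$ while $ALG$ has gained nothing, and the explicit formulas for $v_i, w_i, \epsilon_i$). First I would argue that $ALG$ cannot simply ignore $S_{i+1}$: if both $A$ and $B$ stick with $I_i$ and $J_i$, then $ALG$'s expected gain is $p\,w(I_i) + q\,w(J_i)$, whereas $OPT$ can complete $I_i^-$ together with the heaviest interval of $S_{i+1}$, which by the definition $w_{i+1} = \max\{c(p\,w(I_i)+q\,w(J_i)) - \sum_{j=1}^i w(I_j),\, v_{i+1}\}$ gives at least $c(p\,w(I_i)+q\,w(J_i)) - \epsilon_1/2^{i-1}$; since $c = 2-\delta/2$ and $\epsilon_1 = \delta/8$ and all weights are at least $1$, this is at least $(2-\delta)(p\,w(I_i)+q\,w(J_i))$, contradicting $c$-competitiveness. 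Note I also need to fold in the already-accumulated $\sum_{j=1}^{i-1}w(I_j^-)$ on $OPT$'s side, which only helps; the bookkeeping is the same as in Step~1 since $w_{i+1}$ already subtracts $\sum_{j=1}^i w(I_j)$.

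Next I would rule out the two ``asymmetric'' responses. If $B$ aborts $J_i$ and starts some interval in $S_{i+1}$ while $A$ continues with $I_i$, then $A$ is running $I_i \in S_i$ with $w(I_i) > v_i$ and $B$ is running an interval of $S_{i+1}$ whose members all arrive between $d(I_i^-)$ and $d(I_i)$, and $w(I_i) \le w(\text{$B$'s interval})$ (since $B$ only switches to something at least as heavy, and $w(I_i) < w(J_i)$ was being run before); this is exactly the configuration of Lemma~\ref{lem:ABSeparate}, so $OPT$ wins by $2-(\epsilon_i+\epsilon_{i+1}) \ge 2-\delta$. If instead $A$ aborts $I_i$ for some $I_{i+1} \in S_{i+1}$ while $B$ keeps $J_i$, then by Lemma~\ref{lem:AHeavy} (applied with $A$ running the heavier-or-equal interval) we must have $w(I_{i+1}) < w(J_i)$ — otherwise $OPT$ already wins — but then $B$ is running $J_i \in S_i$ with $w(J_i) > v_i$, $A$ is running $I_{i+1}$ with $w(I_{i+1}) \le w(J_i)$ and $r(I_{i+1}) > r(J_i)$ (as $I_{i+1}$ arrives after $d(I_i^-) \ge r(J_i)$), which is the hypothesis of Lemma~\ref{lem:YEarly}, so $OPT$ wins by $2-\epsilon_i \ge 2-\delta$.

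Having eliminated all other options, the only response that does not immediately lose is for $ALG$ to abort both $I_i$ and $J_i$ and start intervals $I_{i+1}, J_{i+1}$ in $S_{i+1}$; applying Lemma~\ref{lem:AHeavy} once more lets us assume (up to renaming $A$ and $B$) that $w(I_{i+1}) < w(J_{i+1})$. Finally, for the strict inequality $w(I_{i+1}) > v_{i+1}$: if $w(I_{i+1}) = v_{i+1} = w(I_i)$, then $A$'s switch from $I_i$ to $I_{i+1}$ gains nothing, so effectively only $B$ has moved off $S_i$ into $S_{i+1}$, and the adversary construction of Lemma~\ref{lem:ABSeparate} (with $A$ still ``on'' $S_i$ via the weight-$v_{i+1}=w(I_i)$ interval and $B$ on $S_{i+1}$) again forces ratio $\ge 2-\delta$. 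I expect the only mildly delicate point to be stating cleanly why each of Lemmas~\ref{lem:YEarly}, \ref{lem:AHeavy}, \ref{lem:ABSeparate} genuinely applies — in particular checking the arrival-time and weight orderings $r(I_{i+1}) > r(J_i)$, $w(I_i) \le w(J_i)$, and ``$S_{i+1}$ arrives inside $[d(I_i^-), d(I_i))$'' — but these all follow directly from the inductive setup and the adversary's placement of $S_{i+1}$, exactly as in the $i=1$ case, so no genuinely new idea is needed beyond reindexing.
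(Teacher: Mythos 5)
Your proposal is correct and follows the paper's own proof essentially verbatim: rule out ignoring $S_{i+1}$ using the definition of $w_{i+1}$ together with the accumulated $\sum_{j=1}^{i-1} w(I_j^-)$, eliminate the two one-sided responses via Lemma~\ref{lem:ABSeparate} and via Lemma~\ref{lem:AHeavy} followed by Lemma~\ref{lem:YEarly}, then conclude both algorithms switch and obtain $v_{i+1} < w(I_{i+1}) < w(J_{i+1})$ exactly as for $w(I_2) > v_2$ in Step~1. The only (immaterial) slip is in the bookkeeping of the error term: the accumulated $w(I_j^-)$ are needed to cancel $\sum_{j=1}^{i-1} w(I_j)$ (they do not merely ``help''), and the total loss is $\sum_{j=1}^{i}\epsilon_j < 2\epsilon_1$ rather than $\epsilon_1/2^{i-1}$, which still gives the $(2-\delta)$ bound since all weights are at least $1$.
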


\begin{proof}
$ALG$ cannot continue with both $I_i$ and $J_i$.
Otherwise, $OPT$ schedules (after finishing $I_1^-, \ldots, I_{i-1}^-$)
$I_i^-$ and then the last interval of $S_{i+1}$,
thus gaining at least
\begin{eqnarray*}
\lefteqn{ \sum_{j=1}^i w(I_j^-)      
            + c( p w(I_i) + q w(J_i)) - \sum_{j=1}^i w(I_j) } \\
  & \geq & c (p w(I_i) + q w(J_i)) - \sum_{j=1}^i \epsilon_j  \\
  &  >   & (2-\frac{\delta}{2}) (p w(I_i) + q w(J_i)) - 2\epsilon_1   \\
  & \geq & (2-\delta) (p w(I_i) + q w(J_i)).
\end{eqnarray*}

Suppose $B$ aborts $J_i$ in order to start some $J_{i+1}$ in $S_{i+1}$ 
while $A$ continues with $I_i$.
Then by Lemma \ref{lem:ABSeparate}, $ALG$ loses.
Suppose $B$ continues with $J_i$ while
$A$ aborts $I_i$ to start $I_{i+1}$.
By Lemma~\ref{lem:AHeavy}, we have that $w(I_{i+1}) < w(J_i)$.
Then by Lemma \ref{lem:YEarly}, $ALG$ loses too.

Based on the above reasoning, we conclude that $ALG$ has to abort
both $I_i$ and $J_i$ and start some $I_{i+1}$ and $J_{i+1}$ in $S_{i+1}$.
By Lemma \ref{lem:AHeavy}, we can assume that $w(I_{i+1}) < w(J_{i+1})$.
We can also argue that $w(I_{i+1}) > v_{i+1}$ in the same way as proving
$w(I_2) > v_2$ in Step 1.
\end{proof}

We now proceed to Step $i+1$.
Note that $OPT$ has already gained
$w(I_1^-) + \cdots + w(I_i^-)$ while $ALG$ still has not gained anything.
We will make use of Lemma 4.3 of Woeginger \cite{Woeg94}:

\begin{lemma} \label{lem:woeg}
(Woeginger \cite{Woeg94}).
For $2<d<4$, any strictly increasing sequence of positive
numbers $\langle a_1, a_2, \ldots \rangle$ fulfilling
the inequality
  \begin{eqnarray*}
  a_{i+1} \leq d a_{i} - \sum_{j=1}^i a_j
  \end{eqnarray*}
for every $i \geq 1$ must be finite.
\end{lemma}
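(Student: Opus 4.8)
\medskip
\noindent\emph{Proof idea.}
The plan is to argue by contradiction: suppose the sequence $\langle a_1, a_2, \ldots\rangle$ is infinite, and derive an impossibility from the recurrence together with the hypothesis $2<d<4$. The first move is to pass from the $a_i$'s to their partial sums. Put $S_0=0$ and $S_i=\sum_{j=1}^i a_j$ for $i\ge 1$; since every $a_i>0$, the $S_i$ are strictly positive and strictly increasing. Adding $S_i$ to both sides of the hypothesis $a_{i+1}\le d a_i - S_i$ and using $a_i=S_i-S_{i-1}$ turns it into the clean inequality
\[
S_{i+1}\ \le\ d\,(S_i-S_{i-1})\qquad(i\ge 1).
\]

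Next I would track the consecutive ratios $r_i=S_{i+1}/S_i$, which satisfy $r_i>1$ for every $i$ because $S_{i+1}>S_i$. Dividing the displayed inequality by $S_i$ rewrites it as $r_i\le d\,(1-1/r_{i-1})$ for $i\ge 2$; that is, $r_i\le g(r_{i-1})$, where $g(t)=d(1-1/t)=d-d/t$ is defined and continuous on $(0,\infty)$. The crux is the elementary observation that, precisely because $2<d<4$, the auxiliary map $g$ has \emph{no} fixed point in $(0,\infty)$: such a fixed point would solve $t^2-dt+d=0$, whose discriminant $d^2-4d=d(d-4)$ is negative. Hence $g(t)-t=-(t^2-dt+d)/t$ has constant sign on $(0,\infty)$, and evaluating at $t=1$ (where $g(1)=0<1$) shows that sign is negative, i.e.\ $g(t)<t$ for all $t>0$.

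With this in hand the conclusion is quick. From $r_i\le g(r_{i-1})<r_{i-1}$ the sequence $(r_i)_{i\ge 1}$ is strictly decreasing, and it is bounded below by $1$, so it converges to some limit $L\ge 1>0$. Letting $i\to\infty$ in $r_i\le g(r_{i-1})$ and using continuity of $g$ at $L$ gives $L\le g(L)$; but $g(L)<L$, a contradiction. Therefore the sequence $\langle a_1,a_2,\ldots\rangle$ cannot be infinite.

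The only genuine obstacle is spotting the right reduction — summing to the $S_i$ and then passing to the ratios $r_i$ — after which the rest is routine. The single point where the hypothesis $d<4$ is actually used is the non-existence of a real fixed point of $g$ (equivalently, the negativity of the discriminant of $t^2-dt+d$); as a sanity check one should note that for $d=4$ the argument breaks exactly as it must, since there $g$ fixes $t=2$, matching the fact that infinite admissible sequences do exist once $d\ge 4$.
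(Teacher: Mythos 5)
Your proof is correct. Note that there is no in-paper argument to compare against: the paper imports this statement verbatim as Lemma 4.3 of Woeginger \cite{Woeg94} and gives no proof of its own. Your argument stands on its own: passing to partial sums $S_i$ turns the hypothesis into $S_{i+1}\le d(S_i-S_{i-1})$, the ratios $r_i=S_{i+1}/S_i>1$ then satisfy $r_i\le g(r_{i-1})$ with $g(t)=d-d/t$, and since $t^2-dt+d$ has negative discriminant for $d<4$ you get $g(t)<t$ on $(0,\infty)$; the monotone-and-bounded limit argument ($L\le g(L)$ versus $g(L)<L$) closes the contradiction rigorously. Two side remarks: your argument uses only positivity of the $a_i$ (neither the strict increase nor $d>2$ is needed), so you have in fact proved the statement for all $0<d<4$, which is harmless; and the quadratic $t^2-dt+d$ whose discriminant vanishes at $d=4$ is exactly the critical phenomenon behind Woeginger's deterministic bound of $4$, which is why your sanity check at $d=4$ (fixed point $t=2$, e.g.\ $S_i=2^i$, $a_i=2^{i-1}$) behaves as it should.
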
 

Consider the sequence $\langle w(I_1), w(I_2), \ldots \rangle$.
It is strictly increasing since $w(I_{i+1}) > v_{i+1} = w(I_i)$ for all $i$.
Moreover, recall that $w_{i+1}$ is set to be the maximum of either
$c(p w(I_i) + q w(J_i)) - \sum_{j=1}^i w(I_j)$ or $v_{i+1}$.
If $c(p w(I_i) + q w(J_i)) - \sum_{j=1}^i w(I_j) > v_{i+1}$
for all $i \geq 1$,
then we have $w(I_{i+1}) \leq w_{i+1}$
$= \max \{ c(p w(I_i) + q w(J_i)) - \sum_{j=1}^i w(I_j), v_{i+1} \}$
$= c(p w(I_i) + q w(J_i)) - \sum_{j=1}^i w(I_j)$
$\leq c w(J_i) - \sum_{j=1}^i w(I_j)$
$< (4-\delta) w(I_i) - \sum_{j=1}^i w(I_j)$
(since $w(J_i) < 2w(I_i)$) for all $i$.
The existence of such an infinite sequence contradicts Lemma \ref{lem:woeg}.
So, eventually, there is a finite $k$ such that 
$c(p w(I_k) + q w(J_k)) - \sum_{j=1}^k w(I_j) \leq v_{k+1}$
and hence we set $w_{k+1} = v_{k+1}$ and the next (and final)
set $S_{k+1}$ consists of a single interval of weight $w_{k+1}$
($=w(I_k)$).

In that situation, it makes no difference whether $A$ or $B$ of $ALG$
aborts $I_k$ or $J_k$ to start the interval in $S_{k+1}$
since it has weight equal to $w(I_k)$.
Its expected gain is still at most $p w(I_k) + q w(J_k)$.
On the other hand, $OPT$ schedules the interval of $S_{k+1}$
and gains in total $w(I_1^-) + \cdots + w(I_{k}^-) + w_{k+1}$
$\geq c(p w(I_k) + q w(J_k)) - 2\epsilon_1$ which is
at least $2-\delta$ times of $ALG$'s gain.

\subsection{The Case of $w(J_i) \ge 2 w(I_i)$} 
  \label{subsec:ylarge}

We now consider the case where in some Step $i \ge 1$,
$w(J_i) \ge 2w(I_i)$.
We will show how the adversary forces $ALG$ to lose the game,
i.e., $OPT$ will gain at least $2-\delta$ of what $ALG$ can
on $S_i$ and a set of subsequently arrived intervals.
For simplicity, we drop the subscript $i$ in $I_i$, $J_i$, and
$\epsilon_i$ in the following discussion.

Intuitively, when $w(J)$ is relatively large compared with $w(I)$,
we can afford to let algorithm $A$ finish the interval $I$
and gain $pw(I)$, which is relatively small.
Therefore, a set $S_u = SET(0,u w(J),\epsilon)$ is released 
between $d(J^-)$ and $d(J)$,
where $u \geq 1$ is some parameter to be determined
as a function of $p$.
This allows algorithm $A$ to finish $I$ and then start
some job in this new set $S_u$.
On the other hand, algorithm $B$ has to decide whether to
abort or continue with the current interval $J$.
We will show that there is a choice of $u$ such that no matter
what $B$ does, $OPT$ can gain at least $2-\delta$ of what $ALG$
gains.

\paragraph{Case 1: Algorithm $B$ continues with $J$.}

Then $ALG$ gains at most $p(w(I) + uw(J)) + qw(J)$
while $OPT$ can gain $w(J^-) + uw(J)$.
Therefore, the ratio of the gain by $OPT$ to that of $ALG$
on $S$ and $S_u$ is at least
\begin{eqnarray*}
\frac{ w(J^-) + uw(J) } { p(w(I)+uw(J)) + qw(J) } 
  & \geq & \frac{(1+u)w(J) - \epsilon} { p (w(J)/2 + uw(J)) + (1-p)w(J) } \\
  & \geq & \frac{1+u - \epsilon/w(J)} { p (1/2 + u) + 1-p }
\end{eqnarray*}
where the first inequality makes use of the condition that
$w(I) \leq w(J)/2$.
This ratio is at least $2-\delta$ provided
\begin{eqnarray*}
1 + u - \frac{\epsilon}{w(J)}
   & \geq & (2-\delta)(1 - \frac{p}{2} + pu) \\
   &  =   & (2 - \delta - p + \frac{p\delta}{2})
               + (2p - p\delta) u.
\end{eqnarray*}
After simplifying using $\epsilon/w(J) \leq \epsilon < \delta$,
this condition is satisfied by having
\begin{eqnarray}
u & \geq & \frac{1 - p + p\delta/2}
                {1 - 2p + p\delta}.
\end{eqnarray}

\paragraph{Case 2: Algorithm $B$ aborts $J$.}

Then algorithms $A$ and $B$ can start some intervals
$I'$ and $J'$ in $S_u = SET(0, uw(J), \epsilon)$.
By Lemma \ref{lem:AHeavy},
we can assume that $w(I') \leq w(J')$.
Then another interval $J''$ with the same weight as $J'$ 
is released between $d((I')^-)$ and $d(I')$.

If $ALG$ aborts $I'$ to start $J''$,
by Lemma \ref{lem:YEarly}, $OPT$ gains at least
$2-\epsilon$ of $ALG$'s gain on $S_u$ and a set of subsequently
arrived intervals.
Also, on the set of intervals $S = SET(v,w,\epsilon)$,
$ALG$ gains $pw(I) \leq w(I)/2$ while $OPT$ gains at least 
$w(J) - \epsilon$ $\geq w(I) -\epsilon$.
So $ALG$ loses.

On the other hand, if $ALG$ does not abort $I'$,
then its expected gain is at most $p(w(I)+w(I')) + qw(J')$
$\leq p(w(J)/2 + w(I')) + (1-p)w(J')$.
$OPT$ will complete $J^-$ in $S$,
$(I')^-$ in $S_u$ and then $J''$.
(Note that if $I'$ is the first interval with weight 0
in $S_u$, then we set $(I')^-$ to be of weight 0 too.)
The ratio of the gain by $OPT$ to that of $ALG$ 
on $S$, $S_u$ and $J''$ is at least
\begin{eqnarray*}
\frac{(w(J)-\epsilon) + (w(I')-\epsilon) + w(J'')}
     {p(w(J)/2 + w(I')) + (1-p)w(J')} 
    =   \frac{w(I') + w(J) + w(J') - 2\epsilon}
               {pw(I') + pw(J)/2 + (1-p) w(J')}.
\end{eqnarray*}
Since $1/p \ge 2 > 2-\delta$, it suffices to show that
\begin{eqnarray*}
f & = & \frac{w(J) + w(J') -2\epsilon}
             {pw(J)/2 + (1-p) w(J')} 
\end{eqnarray*}
is at least $2-\delta$.
Furthermore, $1/(p/2) \geq 2$ and
$\frac{1- 2\epsilon/w(J')}{1-p}$ $< \frac{1}{1-p} \leq 2$.
Therefore, the fraction $f$, as a function of $w(J')$,
is minimized when $w(J')$ is maximized.
That means,
\begin{eqnarray*}
f & \geq & \frac{w(J) + uw(J) - 2\epsilon}
                {pw(J)/2 + (1-p) uw(J)}   \\
  & \geq & \frac{1 + u -2\epsilon}
                {p/2 - pu + u}.
\end{eqnarray*}
Observe that $\frac{p}{2} - pu + u$
$\geq \frac{p}{2} - p + 1$
$\geq 1 - \frac{p}{2}$ $> \frac{1}{2}$
using $u \geq 1$ and $p \leq 1/2$.
Thus $\frac{2\epsilon}{p/2 - pu + u} \leq \frac{\delta}{2}$
because $\epsilon \leq \delta/8$.
Hence to show that $f \geq 2-\delta$, it suffices to
show that $\frac{1+u}{p/2 - pu + u} \geq 2-\frac{\delta}{2}$.
This condition is satisfied provided
\begin{eqnarray*}
1 + u
  & \geq & (2-\frac{\delta}{2}) (\frac{p}{2} - pu + u) \\
  &  =   & (p - \frac{p\delta}{4}) 
              + (2 - 2p + \frac{p\delta}{2} - \frac{\delta}{2}) u.
\end{eqnarray*}
or
\begin{eqnarray*}
(1 - 2p + \frac{p\delta}{2} - \frac{\delta}{2}) u 
   \leq 1 - p + \frac{p\delta}{4}.
\end{eqnarray*}
This is satisfied if
\begin{eqnarray*}
(1 - 2p + \frac{p\delta}{2}) u \leq 1 - p + \frac{p\delta}{4},
\end{eqnarray*}
i.e.,
\begin{eqnarray}
u & \leq & \frac{1 - p + p\delta/4}
                {1 - 2p + p\delta/2}.
\end{eqnarray}
By setting
\begin{eqnarray*}
u & = & \frac{1 -p + p\delta/4}
             {1 -2p + p\delta/2},
\end{eqnarray*}
both ineq. (1) and ineq. (2) are satsified.
This completes the proof for the case of $w(J_i) \geq 2 w(I_i)$.

\section{Conclusion}

In this paper, we designed 2-competitive barely random algorithms
for various versions of preemptive scheduling of intervals.
They are surprisingly simple and yet improved upon previous best
results.
Based on the same approach, we designed a 3-competitive algorithm
for the preemptive (with restart) scheduling of equal length jobs.
This is the first randomized algorithm for this problem.
Finally, we gave a 2 lower bound for barely random algorithms
that choose between two deterministic algorithms, possibly
with unequal probability.

An obvious open problem is to close the gap between the upper and
lower bounds for randomized preemptive scheduling of intervals
in the various cases.
We conjecture that the true competitive ratio is 2.
Also, it is interesting to prove a randomized lower bound for
the related problem of job scheduling with restart.

\bibliographystyle{alpha}

%\bibliography{alg2}
%\end{document}

\end{document}